\numberwithin{equation}{section}
\newtheorem{theorem}{Theorem}[section]
\newtheorem{proposition}[theorem]{Proposition}
\newtheorem{corollary}[theorem]{Corollary}
\newtheorem{lemma}[theorem]{Lemma}
\newtheorem{example}[theorem]{Example}
\newtheorem*{remark}{Remark}
\newcommand{\st}{\text{st}}
\newcommand{\scr}{\mathcal}
\newcommand{\mb}{\mathbb}
\newcommand{\til}{\widetilde}
\newcommand{\val}{\text{val}}
\newcommand{\OPT}{\text{OPT}}
\newcommand{\LPOPT}{\text{LPOPT}}
\newcommand{\Ber}{\textup{Ber}}
\begin{document}

\title{Bipartite Stochastic Matching: Online, Random Order, and I.I.D.  Models}

\author{Allan Borodin
\thanks{Department of Computer Science, University of Toronto, Toronto, ON, Canada
\texttt{bor@cs.toronto.edu}}
\and
Calum MacRury
\thanks{Department of Computer Science, University of Toronto, Toronto, ON, Canada
\texttt{cmacrury@cs.toronto.edu}}
\and
Akash Rakheja
\thanks{Department of Computer Science, University of Toronto, Toronto, ON, Canada
\texttt{rakhejaakash@gmail.com}}
}

\date{}
\maketitle

\begin{abstract}
 
Within the context of stochastic probing with commitment, we consider
the online stochastic matching problem; that is, the one sided online bipartite
matching problem where edges adjacent to an online node must be probed to
determine if they exist, based on known edge probabilities. If a probed edge
exists, it must be used in the matching (if possible). We study this problem in the generality of a patience (or timeout) constraint which limits the number of probes that can be made to edges adjacent to an online node. Arbitrary  patience constraints result in modelling and computational efficiency issues that are not encountered in the special cases of 
unit patience and full (i.e., unlimited) patience. The stochastic matching problem leads to a variety of settings. Our main contribution is to provide a new LP relaxation and a unified approach for  establishing  new and improved competitive bounds in three  different input model settings (namely, adversarial, random order, and known i.i.d.).     
In
all these settings, the algorithm does not have any control on the ordering of
the online nodes. 
We establish competitive bounds
in these settings, all of which generalize the standard non-stochastic setting when edges do not need to be probed (i.e., exist with certainty). 
All of our results hold for arbitrary edge probabilities and patience constraints. Specifically, we establish the following competitive ratio results:
\\
\begin{enumerate}
\item A $1-1/e$ ratio when the stochastic graph is known, offline vertices are weighted and online arrivals are adversarial.
\item A $1-1/e$ ratio when the stochastic graph is known, edges are weighted, and online arrivals are given in random order (i.e., in ROM, the random order model).  
\item A $1-1/e$ ratio when online arrivals are drawn i.i.d. from a known stochastic type graph and edges are weighted.
\item A (tight) $1/e$ ratio when the stochastic graph is unknown, edges are weighted and online arrivals are given in random order. \\
\end{enumerate}

We note that while results for stochastic graphs in the ROM setting generalize the
corresponding results for the classical ROM bipartite matching setting, it is not clear that a result for a known stochastic graph in the ROM setting implies the same result  for the stochastic unknown and known i.i.d. settings. 

In deriving our results, we clarify and expand upon previous offline benchmarks, relative to which one defines an appropriate definition of the competitive ratio.
In particular, we introduce a new LP relaxation which upper
bounds the performance of ``an  ideal benchmark''. 

\end{abstract}

\newpage

\tableofcontents

\newpage

\section{Introduction}

Stochastic probing problems are part of the larger area of decision making under uncertainty and more specifically, stochastic optimization. Unlike more standard forms of stochastic optimization, it is not just that there is some stochastic uncertainty in the set of inputs, stochastic probing problems involve inputs that cannot be determined without probing (at some cost and/or within some constraint). Applications of stochastic probing occur naturally in many settings, such as in matching problems where compatibility cannot be determined without some trial or investigation (for example, in online dating and kidney exchange applications).
There is by now an extensive literature for stochastic matching problems. For space efficiency, we will give an extended overview of related work in Appendix \ref{appendix:extended_related_works}. Research most directly relating to this paper will appear as we proceed.          

The \textit{stochastic matching problem}\footnote{Unfortunately, the term ``stochastic matching'' is also  used to refer to more standard optimization where the input (i.e., edges or vertices) are drawn from some known or unknown distributions but no probing is involved.}  was introduced by Chen et al. \cite{Chen}. In this problem, we are given an adversarially generated {\it stochastic graph} $G = (V, E)$ with a probability $p_e$ associated with each edge $e$ and a patience (or timeout) parameter $\ell_v$ associated with each vertex $v$. An algorithm probes edges in $E$ within the constraint that at most $\ell_v$ edges are probed incident to any particular vertex $v \in V$. The patience constraint can be viewed as a simple budgetary constraint, where each probe has unit cost and the patience constraint is the budget.  When an edge $e$ is probed, it is guaranteed to exist with probability exactly $p_e$. If an edge $(u,v)$ is found to exist, it is added to the matching and then $u$ and $v$ are no longer available.
The goal is to maximize the expected size of a matching constructed in this way. This problem can be generalized to offline vertices or edges having weights and then the objective is to maximize the expected weight of the matching. Notably, in Chen et al., the algorithm knows the entire stochastic graph in advance.   

In addition to generalizing the setting of the results of Chen et al., Bansal et al. \cite{BansalGLMNR12} introduced an i.i.d.  bipartite version of the problem where nodes on one side of the partition arrive online and edges adjacent to that node are then probed. In their model, each online vertex (and its adjacent edges) is drawn independently and identically from a known distribution. That is,  the possible ``type'' of each online node (i.e., the adjacent edge probabilities and edge weights) is known and the input sequence is then determined i.i.d. from this  known distribution, where the type of a node is presented to the algorithm upon arrival. In both the Chen et al. and Bansal et al. models, each offline node has unlimited patience, whereas each online node specifies its patience upon arrival. As in other online bipartite matching problems, the match for an online node must be made before the next online arrival.  In both of these models, if an edge is probed and confirmed to exist, then it must be included in the current matching (if possible).   
This problem is referred to as the {\it online stochastic matching problem}\footnote{The online stochastic matching problem is sometimes meant to imply unit patience but we will mainly be interested in arbitrary patience values.}  (with patience) and also referred to as the {\it stochastic rewards problem}, though we avoid the latter terminology. In various settings, we will study the online stochastic matching problem. More specifically, we will consider online settings where the algorithm knows the adversarially determined stochastic graph, where a stochastic (type) graph and a distribution on the online vertices is known and online nodes are generated i.i.d. from this distribution, and in the random order model (ROM) when the stochastic graph is both known and unknown\footnote{In a related paper, we establish a $1-1/e$ competitive ratio in the setting where the stochastic graph is unknown and the offline vertices are weighted.  This setting is simpler and allows for a completely combinatorial deterministic algorithm. It is interesting to note that in this setting the same algorithm can be derived using the LP based approach of this paper but the combinatorial method is conceptually and computationally simpler \cite{borodin2020greedy}.}. 
Amongst other applications, the online stochastic matching problem notably models online advertising where the probability of an edge can correspond to the probability of a  purchase in online stores or to pay per click revenue in online searching.

We note that these stochastic matching models generalize the  corresponding classical  non-stochastic models where edges adjacent to an online node are known upon arrival and do not need to be probed. It follows that any inapproximation results in the classical setting apply to the corresponding  stochastic setting.

\section{Preliminaries and Techniques}\label{sec:prelim}
\label{sec:prelim}

The online stochastic matching problem generalizes the classical online bipartite setting as follows. For each $e \in E$ in the stochastic (bipartite) graph $G=(U,V,E)$, there is a fraction $0 \leq p_e \leq 1$ associated with $e$ that gives the probability of existence of the edge $e$. More precisely, each edge $e \in E$ is associated with an independent Bernoulli random variable of parameter $p_{e}$, which we denote by $\st(e)$, corresponding to the \textbf{state} of the edge. 
If $\st(e)=1$, then we say that $e$ is \textbf{active}, and otherwise we say that
$e$ is \textbf{inactive}. It will be convenient to hereby assume that $E = U \times V$. In this way,
if we wish to exclude a pair $(u,v) \in U \times V$ from existing as an edge in $G$, then we may set $p_{u,v}=0$,
thus ensuring that $(u,v)$ is always inactive.

When an online node $v \in V$ arrives, the \textbf{online probing algorithm} sees all the adjacent edges and associated probabilities but must perform a \textbf{probing operation} on the edge to reveal/expose its state, $\st(e)$. As in the classical problem, an online algorithm must decide on a possible match for an online node $v$ before seeing the next online node. The algorithm can be \textbf{non-greedy} and not match a given $v \in V$ even though some $u \in U$ is still unmatched. The online stochastic matching problem simplifies to the classical setting in one of two ways: (1) if $p_e \in \{0,1\}$ for all edges $e$, or (2) if the algorithm is allowed to probe all edges adjacent to an online node $v$ before determining which, if any, node to match to $v$. To make stochastic probing problems meaningful, we either must have a cost for probing or some kind of 
\textbf{commitment} upon probing an input item. Specifically, if an edge $e = (u,v)$ is probed in the stochastic matching problem and turns out to be active, then $e$ must be added to the current matching, provided $u$ and $v$ are both currently unmatched.
We say that the online probing algorithm \textbf{respects commitment} or is \textbf{commital}, provided it satisfies this property.
Furthermore, in the stochastic matching problem, for each online node $v$, there is a known \textbf{patience parameter} (also called timeout)  $\ell_v$ that bounds the number of probes that can be made to edges adjacent to $v$.  The classical online bipartite matching problems (unweighted, vertex weighted, or edge weighted) for  adversarial, ROM, and i.i.d.
input sequences  all generalize to the stochastic  matching setting. We emphasize that the online stochastic matching problem generalizes the classical online problem, even when restricted to the case of \textbf{unit patience} (i.e., $\ell_v = 1$ for all $v \in V$).

Clearly, in the classical adversarial or ROM settings, if the algorithm knew the input graph $G$, the online algorithm could compute an optimal solution before seeing the online sequence and use  that optimal solution to  determine an optimal matching online. But similar to knowing the type graph  in the classical setting with i.i.d. inputs, an algorithm still lacks the ability to know the states of the edges of $G$, namely $(\st(e))_{e \in E}$, so that the stochastic matching problem is interesting, whether the stochastic graph $G$ is known or unknown to the algorithm. We are left then with a wide selection of problems, depending on whether or not the stochastic graph is known, how input sequences are determined, and whether or not edges or vertices
are weighted. In the classical i.i.d.  bipartite matching problem, competitive bounds for an unknown distribution follow from the corresponding ROM problem by a result of Karande et al. \cite{KarandeMT11}. The same result (using the same argument) holds in the stochastic matching setting, provided the stochastic graph is unknown. It is unclear to us whether this reduction continues to
hold for the ROM setting when the stochastic graph is known, as we expand upon in Section \ref{sec:stochastic_known_iid}.   
We will focus on the following settings: a known stochastic graph with adversarial and ROM inputs, the i.i.d. setting with a known stochastic type graph and distribution on the online vertices, and an unknown stochastic graph with ROM arrivals and weighted edges.

What is the benchmark against which  we measure the competitive performance of an online algorithm in the stochastic matching problem?   
In the classical online setting, we compare the value of the online algorithm to that of an optimal matching of the graph. If the inputs are drawn from a distribution, we then compare the expected value of the algorithm to the expected value of an optimum matching. For stochastic probing problems, it is easy to see we cannot hope to obtain a reasonable competitive bound for this type of comparison; that is, 
if we are comparing the expected value of an online probing algorithm to the expected value of an optimum matching of the stochastic graph. 
For example, consider a single online vertex with patience $1$, and $n$ offline (unweighted) vertices where each edge $e$ has probability $\frac{1}{n}$ of being present. The expectation of an online probing algorithm will be at most $\frac{1}{n}$ while the expected size of an optimal matching (over all instantiations of the edge probabilities) will be $1 - (1-\frac{1}{n})^n \rightarrow 1 -\frac{1}{e}$. This example clearly shows that no constant ratio is possible if the patience is sublinear (in $n = |U|$).  

A reasonable approach is to force the benchmark to adhere to the commitment and patience requirements of $G$ that the online algorithm satisfies. Following previous work and the explicit reasoning in Brubach et al.\cite{Brubach2019}, an ideal benchmark is the following: knowing the stochastic graph $G$ (or $G$ and the type graph in the stochastic i.i.d. setting) and the patience requirements of the online nodes, the benchmark can probe edges in any {\it adaptive} order but must satisfy the commitment and patience requirements of the online vertices. By adaptive order, we mean that the next edge to be probed will depend on all the edges that have been currently revealed and the current matching. We emphasize that this benchmark is not restricted to any ordering of the online vertices. In particular, we note that after probing some edge $(u_1,v_1)$, the next probed edge can be $(u_2,v_2)$ where $u_2$ and $v_2$ each may be distinct from $u_{1}$ and $v_{1}$, respectively.  As in online probing algorithms, the goal of the benchmark is to build
a matching whose weight is as large as possible in expectation. We refer to this benchmark as the \textbf{committal benchmark}, and denote the expected value of its matching by $\OPT(G)$.

We also consider a \textit{stronger} benchmark
which still must adaptively probe edges subject to patience constraints, but isn't restricted by commitment;
that is, it may decide upon which subset of edges to match \textit{after} all its probes have been made. Once again,
the probes of this benchmark need not respect any ordering on the online nodes, and the benchmark's goal is
to build a matching of maximum expected weight. We refer to this benchmark as the \textbf{non-committal benchmark}, and denote the expected value of the matching it constructs by $\OPT_{non}(G)$. Observe that in the case
of \textbf{full patience} (i.e., $\ell_{v} = |U|$ for all $v \in V$), the benchmark may probe all the edges of $G$,
and thus corresponds to the expected weight of the optimum matching of the stochastic graph.

Following what is standard in the stochastic matching literature, we prove our results against the committal benchmark,
though in Appendix \ref{appendix:non_committal_benchmark} we prove that our results also hold against the non-committal benchmark
for the case of unit or full patience values.

\subsection{A Review of Our Technical Contributions}

Suppose we are presented a stochastic bipartite graph $G=(U,V,E)$, with edge probabilities $(p_{e})_{e \in E}$,
edge weights $(w_{e})_{e \in E}$ and patience values $(\ell_{v})_{v \in V}$. Here $V$ is the set of online vertices and $U$ is the set of offline vertices. We assume $U$ is known apriori to an online algorithm and the vertices in $V$ arrive online. 
In both the committal and non-committal offline benchmarks, it is not clear how to compute these optimal benchmark values.
As such, one instead resorts to an appropriate LP upper bound on their value.

The most prevalent (standard) LP used in the literature
was introduced by Bansal et al. \cite{BansalGLMNR12}, where each pair $(u,v)$
for $u \in U$ and $v \in V$ has a variable $x_{u,v}$, corresponding to the probability
that the committal benchmark probes $(u,v)$. 

\begin{align}\label{LP:standard_definition}
\tag{LP-std}
&\text{maximize} & \sum_{u \in U, v \in V} w_{u,v} \cdot p_{u, v} \cdot x_{u, v} \\
&\text{subject to} & \sum_{v \in V} p_{u, v} \cdot x_{u, v} & \leq 1 && \forall u \in U \label{eqn:standard_offline_matching_constraint}\\
& &\sum_{u \in U} p_{u,v} \cdot x_{u, v} & \leq 1 && \forall v \in V  \\
& &\sum_{u \in U} x_{u, v} & \leq \ell_v && \forall v \in V  \\
& &0 \leq x_{u, v} &\leq 1 && \forall u \in U, v \in V.
\end{align}
Bansal et al. \cite{BansalGLMNR12} observed that if $\LPOPT_{std}(G)$ denotes the value of an optimal solution to \ref{LP:standard_definition}, then it is a \textbf{relaxation} of the committal benchmark \OPT(G); that is,
\begin{equation}\label{eqn:standard_relaxation}
	\OPT(G) \le \LPOPT_{std}(G).
\end{equation}
To date, the most common technique for proving guarantees against the committal benchmark involves comparing the performance
of one's probing algorithm to $\LPOPT_{std}(G)$, as opposed to $\OPT(G)$ directly. In fact, when $G$ is \textit{known} to the
algorithm, one can leverage a solution to \ref{LP:standard_definition} to determine which probes one
should make (see \cite{BansalGLMNR12, Adamczyk15, BrubachSSX16, BrubachSSX20, BavejaBCNSX18}). 

This is especially effective in the case of unit patience, as an optimum solution
to \ref{LP:standard_definition}, say $(x_{u,v})_{u \in U, v \in V}$, induces a distribution for each vertex $v \in V$.
Rather,
\[
	\sum_{u \in U} x_{u,v} \le 1,
\]
for each $v \in V$. Thus, when processing an arriving online node $v$, one can choose to probe
$u \in U$ with probability $x_{u,v}$ (where one \textit{passes} on $v$ with probability $1 - \sum_{u \in U} x_{u,v}$).
While this ignores the issue of the online nodes \textit{colliding} (i.e., multiple vertices of $V$ attempting
a match to $u \in U$), \eqref{eqn:standard_offline_matching_constraint} ensures each vertex of $U$ is matched
at most once \textit{in expectation}. In the case of offline vertex weights, this is sufficient to prove a guarantee
of $1-1/e$ against $\LPOPT_{std}(G)$, no matter how the vertices of $V$ are presented to the probing algorithm.
Similarly, in the case of edge weights, this approach suffices to achieve a guarantee of $1/2$,
albeit requiring the vertices of $V$ to arrive in random order. 
We emphasize that $G$ must be known to the probing algorithm in order to implement these strategies. 
Both these arguments (in the more general setting of arbitrary patience) are discussed in detail in Section \ref{sec:known_stochastic}. 

If one now moves to the case when $v$ has arbitrary patience $\ell_v$, then the values
$(x_{u,v})_{u \in U}$ satisfy the following inequalities:
\begin{equation}
	\sum_{u \in U} x_{u,v} \le \ell_v \text{\; and \;} \sum_{u \in U} p_{u,v} \cdot x_{u,v} \le 1.
\end{equation}
For a fixed $v$, the techniques in \cite{BansalGLMNR12, Adamczyk15, BrubachSSX16, BavejaBCNSX18,BrubachSSX20} involve first drawing a random ordering $\pi$ of $U$ (where different distributions are used in different papers).
Once this is done, a random subset $P \subseteq U$ is drawn using the GKSP algorithm of Gandhi et al. \cite{GandhiGKSP06}, with
the guarantee that $u \in P$ with probability $x_{u,v}$, and that $|P| \le \ell_v$.
The edges $(u_i,v)_{i=1}^{|P|}$  are then probed in the order $u_{1}, \ldots ,u_{|P|}$ induced by $\pi$.
Since the algorithm must respect commitment, $(u_i,v)$  is probed with probability $\prod_{j=1}^{i-1}(1-p_{u_j,v})$.

Clearly this GKSP rounding approach is problematic, as a fixed vertex $u \in U$ may get probed with probability sufficiently less than $x_{u,v}$,
depending on how highly prioritized it is in the ordering $\pi$. The solutions in the literature involve drawing $\pi$ in such
a way that each vertex $u \in U$ is probed with probability as \textit{close} to $x_{u,v}$ as possible, but it is clear
that no approach is without loss for \textit{every} vertex of $U$.

While this describes the major challenge with generalizing to arbitrary patience, in theory it does not preclude a probing
algorithm from existing which matches the ratio of $1-1/e$, as attainable in the vertex weighted unit patience case.
Unfortunately, the ratio between $\OPT(G)$ and $\LPOPT_{std}(G)$ can become quite small, depending on the values of $(\ell_{v})_{v \in V}$ and the instance $G$. In \cite{Brubach2019}, Brubach et al. define the \textbf{stochasticity gap} of this LP as the infimum of this ratio across all stochastic graphs, namely $\inf_{G}\OPT(G)/ \LPOPT_{std}(G)$. 
Clearly, the notion of a stochasticity gap can be extended to any LP relaxation of the committal benchmark,
as we shall later do in the context of our new LP.

Brubach et al. also consider the following example, thus providing a negative result for the stochasticity
gap of \ref{LP:standard_definition}. 

\begin{example}[\cite{Brubach2019}] \label{example:stochastic_gap}
Fix $n \ge 1$, and construct an unweighted graph  $G_{n}=(U,V,E)$. Suppose that $|U|=|V|=n$ and $\ell_{v} = n$ for all $v \in V$. Set $E:= U \times V$, and define $p_{u,v}:=1/n$ for each $(u,v) \in E$. Observe that $G_n$ corresponds to the Erdős–Rényi random
graph $\mb{G}_{n,n,1/n}$. In this case,
\[
    \mb{E}[ \OPT(G_n)] \le 0.544 \cdot (1+ o(1))\, \LPOPT_{std}(G_n),
\]
where 
the asymptotics are over $n \rightarrow \infty$ \footnote{The example in Brubach et al \cite{Brubach2019} can clearly be extended
to the case when $G_{n}$ has linearly sized patience, that is when $\min_{v \in V} \ell_{v} = \Omega(n)$,
at the expense of the strength of their negative result (the constant $0.544$).}.  
\end{example}

Thus, any probing algorithm which attains a guarantee against $\LPOPT_{std}(G)$ has a provable competitive ratio of at most $0.544$.

In order to get around this limitation, Gamlath et al. \cite{Gamlath2019} consider an LP in the setting of full patience,
which imposes exponentially many constraints, in addition to those of \ref{LP:standard_definition}.
Specifically, for each $v \in V$ and $S \subseteq U$, they ensure that
\begin{equation}\label{eqn:svensson_constraints}
	\sum_{u \in S} p_{u,v} \cdot x_{u,v} \le 1 - \prod_{u \in S}(1 - p_{u,v}).
\end{equation}
Observe that in the variable interpretation of \ref{LP:standard_definition}, the left-hand side corresponds
to the probability a probing algorithm makes a match to a vertex of $S$, and the right-hand side corresponds
to the probability an edge between $v$ and $S$ exists\footnote{The LP considered by Gamlath et al. in \cite{Gamlath2019}
also places the analogous constraints of \eqref{eqn:svensson_constraints} on the vertices of $U$. That being said, these additional constraints are not used anywhere in the work of Gamlath et al., so we do not consider
them when we provide a generalization of their LP in \ref{appendix:committal_benchmark}}. The goal of these additional constraints is thus to force the LP to better capture the behavior of the committal benchmark\footnote{In fact, the results of Gamlath et al.
are proven against the optimum expected matching of $G$, which is equivalent to the non-committal benchmark, as they
work exclusively in the setting of full patience.}.

Using a polynomial time oracle, Gamlath et al. argue that their LP remains poly-time solvable, despite having exponentially
many constraints. As in the setting \ref{LP:standard_definition}, they solve their LP to attain a solution $(x_{u,v})_{u \in U, v \in V}$ for $G$. Each time an online vertex $v \in V$ then arrives, they draw a random permutation $\pi_v$ on a random subset of $U$, which indicates both the probes they intend to make, and the order they intend to make them in. By following $\pi_v$, their procedure is \textit{lossless}; that is, they are able to probe $u \in U$ with probability exactly $x_{u,v}$\footnote{The results of Costello et. al \cite{costello2012matching}
also consider the full patience non-bipartite stochastic matching problem, though without edge weights.
They derive a probing strategy for a fixed vertex $v \in V$ of $G=(V,E)$ and its neighbourhood $N(v)$,
which attains the same guarantee as that of Gamlath et al. \cite{Gamlath2019} through combinatorial techniques.}.

Unfortunately, the results of Gamlath et al., as well as related techniques of Costello et. al \cite{costello2012matching},
do not seem to naturally extend to arbitrary patience. For instance, even the correct modification
of \eqref{eqn:svensson_constraints} is not clear to us. However, we are able to provide a reasonable extension of the Gamlath et al. LP in Appendix \ref{appendix:committal_benchmark}.

\subsection{Defining a New LP}

In this section, we design a new LP for the problem of designing a fixed vertex \textit{lossless} probing algorithm,
which works no matter the edge probabilities, edge weights and patience values $(\ell_{v})_{v \in V}$ 
of $G=(U,V,E)$. Instead of attempting to find the appropriate constraints on the variables of \ref{LP:standard_definition},
we take a different approach. Specifically, we ensure our LP has polynomially many constraints, while allowing
it exponentially many variables to better indicate how the committal benchmark make decisions.

For each $i \ge 1$, denote $U^{(i)}$ as the collection of  tuples of length $i$ constructed from $U$
whose entries are all distinct.
Moreover, set $U^{(\le i)}:= \cup_{j=1}^{i} U^{(j)}$.

For each $v \in V$, $1 \le k \le \ell_v$ and $\bm{u} \in U^{(k)}$, define
\[
	g^{i}_{v}(\bm{u}):= p_{u_i,v} \cdot \prod_{j=1}^{i-1} (1 - p_{u_j,v}),
\]
where $\bm{u}=(u_1, \ldots , u_{k})$, and $i \in [k]$ (here $[k]:=\{1, \ldots ,k\}$). 
Observe that if one reveals the edge states $(\st(u_{i},v))_{i=1}^{k}$
in order, then $g^{i}_{v}(\bm{u})$ corresponds to the probability that $(u_{i},v)$
is the first active edge revealed.

We also define a variable, denoted $x_{v}(\bm{u})$, which may loosely be
interpreted as the likelihood the committal benchmark probes the vertices
in the order specified by $\bm{u}=(u_{1}, \ldots , u_{k})$.
These definitions lead to the following LP:

\begin{align}\label{LP:new}
\tag{LP-new}
&\text{maximize} &  \sum_{v \in V} \sum_{\bm{u} \in U^{( \le \ell_{v})}} \left( \sum_{i=1}^{|\bm{u}|} w_{u_i,v} \, g^{i}_{v}(\bm{u}) \right)\cdot x_{v}(\bm{u}) \\
&\text{subject to} & \sum_{v \in V} \sum_{i=1}^{\ell_v}\sum_{\substack{ \bm{u}^* \in U^{ ( \le \ell_{v})}: \\ u_{i}^*=u}} 
g_{v}^{i}(\bm{u}^*) \cdot x_{v}( \bm{u}^*)  \leq 1 && \forall u \in U  \label{eqn:relaxation_efficiency_matching}\\
&& \sum_{\bm{u} \in U^{( \le \ell_{v})}} x_{v}(\bm{u}) \le 1 && \forall v \in V,  \label{eqn:relaxation_efficiency_distribution} \\
&&x_{v}( \bm{u}) \ge 0 && \forall v \in V, \bm{u} \in U^{(\le \ell_v)}
\end{align}

\ref{LP:new} is a relaxation of the committal benchmark. However, unlike many of the LP formulations
in the stochastic matching literature, we are not aware of an immediate proof of either of these
facts. We instead must introduce a related stochastic probing problem, known as the
\textbf{relaxed stochastic matching problem}, which is exactly encoded by \ref{LP:new}, and whose optimum value upper bounds the committal benchmark. We provide the relevants definitions in Appendix \ref{appendix:committal_benchmark}, where we also prove the following theorem:

\begin{theorem}\label{thm:new_LP_relaxation}

For any stochastic graph $G$, an optimum solution to \ref{LP:new} upper bounds
$\OPT(G)$, the value of the committal benchmark on $G$.

\end{theorem}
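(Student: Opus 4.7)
The plan is to exhibit, for every committal benchmark $\mathcal{A}$ on $G$, a feasible solution $(x_v(\bm{u}))$ to \ref{LP:new} whose objective value equals the expected weight of $\mathcal{A}$'s matching. Applying this to an optimal $\mathcal{A}$ then bounds $\OPT(G)$ by the optimum value of \ref{LP:new}, as claimed. This is precisely the role of the \emph{relaxed stochastic matching problem} alluded to in the excerpt: it is the stochastic probing problem whose LP relaxation is exactly \ref{LP:new}, and any committal strategy maps into a feasible solution of it.

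Without loss of generality, an optimal committal $\mathcal{A}$ never probes an edge $(u,v)$ once $u$ has already been matched, since any such probe merely wastes patience. Adopting the principle of deferred decisions, I would view $\mathcal{A}$ as a randomized policy whose coins consist of (i) the edge states $(\st(e))_{e \in E}$, revealed upon probing, and (ii) internal randomness independent of these states. For each $v \in V$, define the \emph{ideal probe sequence at $v$}, denoted $\bm{U}_v^*$, to be the sequence of offline vertices that $\mathcal{A}$ would probe at $v$ in the hypothetical execution in which every edge $(u,v)$ with $u \in U$ is forced to be inactive, while all other edge states follow their true distributions. A straightforward induction on the total number of probes performed shows that the actual probe sequence at $v$ always coincides with a prefix of $\bm{U}_v^*$, terminating at the first index $i$ for which $(u_i,v)$ is active (if any such index exists), at which point $v$ is matched to $u_i$ (since $u_i$ is necessarily free, by our WLOG assumption).

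Set $x_v(\bm{u}) := \Pr[\bm{U}_v^* = \bm{u}]$ for every $v \in V$ and $\bm{u} \in U^{(\le \ell_v)}$. Constraint \eqref{eqn:relaxation_efficiency_distribution} holds immediately since the events $\{\bm{U}_v^* = \bm{u}\}$ are disjoint. The crux of constraint \eqref{eqn:relaxation_efficiency_matching} is the observation that $\bm{U}_v^*$ is a measurable function of the randomness \emph{outside} $\{\st(u,v) : u \in U\}$, hence is independent of $(\st(u,v))_{u \in U}$. Consequently,
\[
x_v(\bm{u})\, g_v^i(\bm{u}) \;=\; \Pr\!\bigl[\bm{U}_v^* = \bm{u}\bigr]\cdot \Pr\!\bigl[(u_j,v)\text{ inactive for }j < i,\ (u_i,v)\text{ active}\bigr] \;=\; \Pr\!\bigl[\mathcal{A}\text{ matches }v\text{ to }u_i\text{ at position }i\bigr].
\]
Summing over $\bm{u}$ whose $i$-th coordinate is $u$, over $i \in [\ell_v]$, and over $v \in V$ yields $\Pr[u \text{ is matched by } \mathcal{A}] \le 1$, which is exactly \eqref{eqn:relaxation_efficiency_matching}. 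Weighting the same sum by $w_{u_i,v}$ and invoking linearity of expectation recovers $\E[\text{weight of }\mathcal{A}\text{'s matching}] = \OPT(G)$ as the value of the LP objective at $(x_v(\bm{u}))$, completing the proof.

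The principal obstacle will be the formal construction of $\bm{U}_v^*$ and the justification of its independence from $(\st(u,v))_{u \in U}$. One must set up a probability space on which $\mathcal{A}$'s internal coins and the entire family $(\st(e))_{e \in E}$ are mutually independent, and then couple the true and hypothetical executions so that the benchmark's actions agree up to the moment the first active probe at $v$ is encountered. Once this coupling is in place, measurability of $\bm{U}_v^*$ with respect to the relevant $\sigma$-algebra is immediate, and what remains is routine bookkeeping via Fubini and linearity of expectation.
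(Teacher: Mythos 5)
Your approach is correct but takes a genuinely different (and more direct) route than the paper. The paper explicitly flags the obstacle you sidestep: one cannot simply set $x_v(\bm{u})$ to ``the probability $\mathcal{A}$ probes $\bm{u}$ in order at $v$,'' because for an adaptive committal benchmark these probes need not be independent of $(\st(u,v))_{u\in U}$, which is what the $g_v^i$ factorization implicitly requires. The paper therefore detours through the \emph{relaxed stochastic matching problem} (Appendix \ref{appendix:committal_benchmark}): it shows $\OPT(G) \le \OPT_{rel}(G)$, that $\LPOPT_{new}(G) = \OPT_{rel}(G)$ exactly, and that the relaxed benchmark admits a non-adaptive optimum (Theorem \ref{thm:non_adaptive_optimum}); that last step leans on a generalization of the Gamlath et al.\ LP and their probing subroutine (Theorem \ref{thm:costello_svensson_guarantee}). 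Your hypothetical-execution device $\bm{U}_v^*$ is precisely the independence-restoring idea the paper says it does not see: $\bm{U}_v^*$ is measurable with respect to the coins and the edge states away from $v$, hence independent of $(\st(u,v))_{u\in U}$, while the actual probe sequence at $v$ is always a prefix of it (terminating at the first active $v$-edge) --- exactly what $g_v^i$ prices. This yields a proof that is more elementary and self-contained; what it gives up is the paper's stronger exact characterization $\LPOPT_{new}(G)=\OPT_{rel}(G)$ and the non-adaptivity theorem, both of independent interest elsewhere in the paper. Two small points to tighten: the WLOG should also stipulate that $\mathcal{A}$ never probes an edge incident to an already-matched \emph{online} node $v$ (not only a matched offline $u$), since the coupling needs the matched sets to agree across the two executions prior to the first active $v$-edge; and the rightmost term in your displayed identity should read $\Pr[\bm{U}_v^* = \bm{u} \text{ and } \mathcal{A}\text{ matches }v\text{ to }u_i\text{ at position }i]$, which is harmless once summed over $\bm{u}$ with $u_i^* = u$.
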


Not only is \ref{LP:new} a relaxation of the committal benchmark,
it also can be solved efficiently.
To see this, we first take its dual:

\begin{align}\label{LP:new_dual}
\tag{LP-new-dual}
&\text{minimize} &  \sum_{u \in U} \alpha_{u} + \sum_{v \in V} \beta_{v}  \\
&\text{subject to} & \beta_{v} + \sum_{j=1}^{|\bm{u}^*|} g_{v}^{j}(\bm{u}^*) \cdot \alpha_{u_j^*} \ge \sum_{j=1}^{|\bm{u}^*|} w_{u_j^*,v} \cdot g_{v}^{j}( \bm{u}^*) &&
\forall v \in V, \bm{u}^* \in U^{ ( \le \ell_v)} \\
&&  \alpha_{u} \ge 0 && \forall u \in U\\
&& \beta_{v} \ge 0 && \forall v \in V
\end{align}
In Appendix \ref{appendix:efficient_probing_algorithms}, we argue that \ref{LP:new_dual} has a polynomial time separation oracle,
by solving an optimization problem similar to the one considered by Brubach et al. \cite{Brubach2019}. By standard duality
techniques involving the ellipsoid algorithm \cite{Groetschel,GartnerM}, this allows us to find a solution to \ref{LP:new} in polynomial time, no matter the patience values of $G$ (see \cite{Williamson,Adamczyk2017,Lee2018} for similar examples).
That being said, \ref{LP:new} clearly always has an optimum solution, which can be found efficiently when $\ell_{max}:=\max_{v \in V} \ell_{v}$
is a constant, independent of the size of $|U|$. Moreover, our results are all in the context of competitive analysis, and so the ratios we present in the various online stochastic matching models all hold, independently of the fact that \ref{LP:new} can be solved in poly-time.

Suppose now that we are presented a feasible solution, 
say $(x_{v}(\bm{u}))_{\bm{u} \in U^{(\le \ell_v)}, v \in V}$,
to \ref{LP:new}, for the stochastic graph $G=(U,V,E)$.
For each $v \in V$ and $u \in U$, define
\begin{equation}
	\til{x}_{u,v}:= \sum_{i=1}^{\ell_{v}} \sum_{\substack{\bm{u}^* \in U^{( \le \ell_{v})}: \\ u_{i}^{*} = u}} \frac{ g^{i}(\bm{u}^*) \cdot x_{v}( \bm{u}^*)}{p_{u,v}}. 
\end{equation}
In order to simplify our notation in the later sections,
we refer to the values $(\til{x}_{u,v})_{u \in U,v \in V}$
as the \textbf{(induced) edge variables}
of the solution $(x_{v}(\bm{u}))_{v \in V,\bm{u} \in U^{(\le \ell_v)}}$.

If we now fix $s \in V$, then we can easily
leverage constraint \eqref{eqn:relaxation_efficiency_distribution} to
argue that the edge variables $(\til{x}_{u,s})_{u \in U}$ can be probed without loss. Specifically,
we may execute the following \textit{fixed vertex} probing algorithm, which we refer to
as \textsc{VertexProbe}:

\begin{algorithm}[H]
\caption{VertexProbe} 
\label{alg:vertex_probe}
\begin{algorithmic}[1]
\Statex Input the stochastic graph $G=(U,V,E)$,
a fixed node $s \in V$ and the variables $(x_{s}(\bm{u}))_{\bm{u} \in U^{ (\le \ell_s)}}$ associated
to $s$ in a solution to \ref{LP:new} for $G$.

\State Initialize $\scr{M} \leftarrow \emptyset$.
\State Return $\scr{M}$ with probability $1 - \sum_{\bm{u} \in U^{ (\le \ell_s)}} x_{s}(\bm{u})$
\Comment{\textit{pass} with a certain probability.}. 
\State Draw $\bm{u}^*$ from $U^{( \le \ell_s)}$ with
probability $x_{s}(\bm{u}^*)$ (see \eqref{eqn:relaxation_efficiency_distribution}).
\State Denote $\bm{u}^* = (u_{1}^*, \ldots ,u_{k}^*)$ for $k := |\bm{u}^*|$.
\For{$i=1, \ldots ,k$}
\State Probe $(u_{i}^*,s)$.
\If{$\st(u_i^*,s)=1$}
\State  Set $\scr{M}(s) \leftarrow u_{i}^*$ and return $\scr{M}$. 
\EndIf
\EndFor
\State Return $\scr{M}$.

\end{algorithmic}
\end{algorithm}


Observe the following claim, which follows immediately from the definition
of the edge variables, $(\til{x}_{u,v})_{u \in U, v \in V}$:

\begin{lemma}\label{lem:fixed_vertex_probe}

Let $G=(U,V,E)$ be a stochastic graph with \ref{LP:new} solution $(x_{v}(\bm{u}))_{v \in V, \bm{u} \in U^{ (\le \ell_v)}}$, and whose induced edge variables we denote $(\til{x}_{u,v})_{u \in U, v \in V}$.
If the \textsc{VertexProbe} algorithm is passed a fixed node $s \in V$, then each node
$u \in U$ is probed with probability $\til{x}_{u,s}$.

Moreover, the edge $(u,s)$ is returned by the algorithm with probability $p_{u,v} \cdot \til{x}_{u,s}$.

\end{lemma}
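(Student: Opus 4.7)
The plan is to compute both probabilities directly by unfolding the algorithm and then matching the resulting expression against the definition of $\til{x}_{u,s}$.

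First I would fix $u \in U$ and observe that the algorithm probes $u$ exactly when (i) it does not pass in Step 2, (ii) the drawn tuple $\bm{u}^*$ contains $u$, say at position $i$, and (iii) all of the edges $(u_1^*,s),\ldots,(u_{i-1}^*,s)$ probed before $u$ turn out to be inactive (so that the loop actually reaches index $i$). Since the pass probability and the draw of $\bm{u}^*$ combine to give any specific $\bm{u}^* \in U^{(\le \ell_s)}$ with probability exactly $x_s(\bm{u}^*)$, and since edge states are mutually independent, this yields
\[
\Prob[u \text{ is probed}] \;=\; \sum_{i=1}^{\ell_s} \sum_{\substack{\bm{u}^* \in U^{(\le \ell_s)}:\\ u_i^* = u}} x_s(\bm{u}^*) \cdot \prod_{j=1}^{i-1}\bigl(1 - p_{u_j^*,s}\bigr).
\]

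Next I would compare this to the definition of the induced edge variable. Whenever $u_i^* = u$, we have $g^{i}_{s}(\bm{u}^*) = p_{u,s}\cdot \prod_{j=1}^{i-1}(1-p_{u_j^*,s})$, so $\prod_{j=1}^{i-1}(1-p_{u_j^*,s}) = g^{i}_{s}(\bm{u}^*)/p_{u,s}$. Substituting this into the sum above reproduces exactly
\[
\til{x}_{u,s} \;=\; \sum_{i=1}^{\ell_s} \sum_{\substack{\bm{u}^* \in U^{(\le \ell_s)}:\\ u_i^* = u}} \frac{g^{i}_{s}(\bm{u}^*)\, x_s(\bm{u}^*)}{p_{u,s}},
\]
proving the first claim. (If $p_{u,s}=0$ the edge is always inactive, both sides are zero, and the identity is trivial.)

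For the second claim I would argue that the edge $(u,s)$ is returned by \textsc{VertexProbe} precisely when $u$ is probed and $\st(u,s)=1$. The indicator of the event ``$u$ is probed'' is determined by the random draw of $\bm{u}^*$ together with the states $\st(u_1^*,s),\ldots,\st(u_{i-1}^*,s)$ of the edges preceding $u$ in the order of probing; these are independent of $\st(u,s)$ since edge states are mutually independent Bernoullis and the tuple $\bm{u}^*$ has distinct entries. Therefore
\[
\Prob[(u,s) \text{ is returned}] \;=\; \Prob[u \text{ is probed}] \cdot \Prob[\st(u,s)=1] \;=\; p_{u,s}\cdot \til{x}_{u,s},
\]
completing the proof. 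The only subtle point is this independence argument, which relies on the fact that the distinctness of the entries of $\bm{u}^* \in U^{(\le \ell_s)}$ ensures $u$ is not among $u_1^*,\ldots,u_{i-1}^*$; otherwise everything is a bookkeeping exercise.
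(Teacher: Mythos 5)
Your proof is correct and is exactly the unfolding the paper has in mind when it says the lemma ``follows immediately from the definition of the edge variables''---the paper gives no further argument, so you have simply spelled out what the authors took to be routine. One small nit: your parenthetical claim that when $p_{u,s}=0$ ``both sides are zero'' is not quite right for the first identity---$u$ can still be probed with positive probability even if $p_{u,s}=0$; what is really going on is that $g^{i}_{s}(\bm{u}^*)/p_{u,s}$ must be read as $\prod_{j<i}(1-p_{u_j^*,s})$ (the $p_{u,s}$ factor cancels symbolically), so the identity still holds, and it is only the second claim (edge returned with probability $p_{u,s}\,\til{x}_{u,s}$) that degenerates to $0=0$.
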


\begin{remark}
We say that \textsc{VertexProbe} \textbf{commits} to the edge $(u,s)$,
provided the algorithm outputs this edge when executing on the fixed node $s \in V$.
\end{remark}

Before providing an overview of our results, we provide a general algorithmic template which unifies how
all of our online probing algorithms are implemented\footnote{In Section \ref{sec:known_stochastic},
we must use a modified \textsc{VertexProbe} subroutine in the ROM setting with edge weights
to improve the competitive ratio from $1/2$ to $1-1/e$.}.
%

Let $G=(U,V,E)$ be an adversarially generated stochastic graph, with arbitrary patience values, edge weights and edge probabilities.
We may always assume that the online probing algorithm has access to the offline vertices $U$, but its information regarding
$V$ and $E$ is limited, depending on the online model we work in.
In general however, we always assume that the online vertices of $V$ are presented in some order to the online probing algorithm, say $v_{1}, \ldots ,v_{n}$ where $n=|V|$, either through an adversarial, ROM or i.i.d  arrival process. We refer to a vertex $v_{t}$ as \textbf{arriving} at \textbf{time} $1 \le t \le n$. We then follow the general high level template for defining online probing algorithms:

\begin{algorithm}[H]
\caption{General Template} 
\label{alg:general_template}
\begin{algorithmic}[1]
\Statex Input $U$ of $G=(U,V,E)$, as well as the remaining information regarding $G$, depending on the online model.
\State $\scr{M} \leftarrow \emptyset$.
\For{$t=1, \ldots , n$} 
\State Using $U$ and the probing decisions of the previous arrivals (together with $G=(U,V,E)$ if it is known), 
compute a stochastic graph $H_t =(U_t, V_t, E_t)$ which contains arrival $v_t$,
and satisfies $U_{t} \subseteq U$. \label{eqn:online_arrival_process}
\State Compute an optimum solution of \ref{LP:new} for $H_{t}$, say $(x_{v}(\bm{u}))_{v \in V_{t}, \bm{u} \in U_{t}^{ (\le \ell_v)}}$.
\State Set $e_t \leftarrow \textsc{VertexProbe}(H_t, (x_{v_t}(\bm{u}))_{\bm{u} \in U_{t}^{ (\le \ell_{v_t})}}, v_t)$.
\State If $e_t \neq \emptyset$, then denote $e_{t}=(u_t,v_t)$, and if $u_t$ is currently unmatched, set $\scr{M}(v_t) \leftarrow u_t$.
\EndFor

\State Return $\scr{M}$.

\end{algorithmic}
\end{algorithm}

We once again emphasize that the online probing algorithm has no control over the order of the arrivals,
so step \eqref{eqn:online_arrival_process} is the only place this algorithmic template can be modified. 
Choosing the ``correct'' choice of $H_t$ depends on how much information we are privy to (e.g., is the
stochastic graph known or unknown), and whether we wish our probing algorithm to execute \textbf{adaptively} - 
that is, depend upon the \textit{probing outcomes} of the previous nodes, $v_{1}, \ldots , v_{t-1}$ -- or not,
that is, execute \textbf{non-adapatively}\footnote{When processing
an online node arrival, say $v_{t}$, a \textbf{non-adaptive} online probing algorithm can base its probes of $v_t$ on
the identities of the previous vertex arrivals, say $v_{1}, \ldots , v_{t-1}$,
as well as their edge weights, edge probabilities and patience values (as well as $G$ itself if it is known). The probes of $v_t$
\textit{cannot} however depend upon the previously probed edge states of $(\st(u,v_{k}))_{u \in U, k \in [t-1]}$.}. We note that in our results, we will execute Algorithm \ref{alg:general_template} non-adaptively and the resulting algorithms will be non-greedy. In the subsequent sections, we investigate these issues in detail, and attempt to attain or approach the same  competitive ratios
one can get in the classical (non-stochastic) online matching settings (when there is a meaningful generalization). 
All of our results are proven by comparing the performance of the relevant online probing algorithm to \ref{LP:new}. We are typically
able to make use of the classical techniques in the literature (with some key modifications), and so we again emphasize that one of
our main technical contributions is in generalizing to arbitrary patience from the more tractable unit/full patience
settings and the new LP that is used to derive our results. We also argue that many of the results in the stochastic matching literature actually hold against the non-committal benchmark, as we discuss in detail in Appendix \ref{appendix:non_committal_benchmark} for the case of
general (i.e., not necessarily bipartite) stochastic graphs. We argue this by proving
that \ref{LP:standard_definition} is a relaxation of the non-committal benchmark. 


\subsection{An Overview of Our Results}

With these definitions in mind, 
we now reiterate and point ahead to our main results as first stated in our abstract. 
All of our results apply to arbitrary patience and the competitive ratios are with respect to the committal benchmark.

\begin{enumerate}

\item Theorem \ref{thm:known_vertex_weights} shows that Algorithm \ref{alg:known_stochastic_graph} is an online algorithm with competitive ratio $1-\frac{1}{e}$ in the following stochastic setting:

\begin{itemize}
\item There is a known stochastic graph
\item Online vertices are given adversarially 
\item Offline vertices have weights 
\end{itemize}

This result shows that the $.544$ inapproximation bound against the LP relaxation in Bansal et al. \cite{BansalGLMNR12} does not hold with respect to our new LP relaxation (see Example \ref{example:stochastic_gap}). 

\item Theorem \ref{thm:modified_known_stochastic_graph} shows that Algorithm \ref{alg:modified_known_stochastic_graph} is 
an online algorithm with competitive ratio $1-\frac{1}{e}$ in the following setting:

\begin{itemize}
\item There is a known stochastic graph
\item Online vertices are presented in an order determined by a uniform at random permutation of the online vertices in the stochastic graph (i.e., stochastic random order model) 
\item Edges have weights

\end{itemize}

This algorithm generalizes\footnote{In Appendix \ref{appendix:non_committal_benchmark},
we provide a reasonable generalization of the Gamlath et al. LP, and show that it attains the same value as \ref{LP:new}.} the online probing algorithm considered by Gamlath et. al \cite{Gamlath2019} in what they refer to as the \textit{query-commit model}.

\item Theorem \ref{thm:iid} shows that Algorithm \ref{alg:known_iid} is an
online algorithm with competitive ratio $1-\frac{1}{e}$ in the following stochastic i.i.d. setting (improving
upon the previously best ratio of $0.46$ in \cite{Brubach2019}):

\begin{itemize}
\item There is a known stochastic (type) graph
\item Online vertices are drawn independently and identically from a distribution on the online vertices (with their adjacent stochastic edges)
\item Edges have weights

\end{itemize} 

In the classical i.i.d. setting with non-integral arrival rates, 
Manshadi et al. \cite{ManshadiGS12} present an example that shows that $1-1/e$ is optimal for \textbf{classically non-adaptive}\footnote{Manshadi et al. \cite{ManshadiGS12} use the terminology non-adaptive
to mean that a (classical) online algorithm in the known i.i.d. setting uses
only the \textit{type} of the arriving node to determine its matching decisions. Observe that this restriction
is sufficient to ensure that an online probing algorithm is non-adaptive (by our definition) in the stochastic matching setting.} algorithms.  
Our algorithm fits this classical definition and applies to non-integral arrival rates and hence our algorithm has an optimal competitive ratio amongst this restricted class of probing algorithms. 

\item Theorem \ref{thm:ROM_edge_weights} shows that Algorithm \ref{alg:ROM_edge_weights} is an online algorithm with (tight) competitive ratio $\frac{1}{e}$ in the following setting:

\begin{itemize}
\item The stochastic graph is not known to the algorithm
\item Online vertices are
given in random order
\item Edges have weights

\end{itemize}
This generalizes the classical non-stochastic result of Kesselheim et al.
\cite{KRTV2013}.

\end{enumerate}

All of our probing algorithms are randomized and implemented \textit{non-adaptively}. 
In Appendix \ref{appendix:adaptivity_results} we discuss the implications 
of the non-adaptivity by considering the relevant \textbf{adaptivity gaps} of the online stochastic matching problems
we consider. Roughly speaking, an adaptivity gap is the worst case ratio of performance between the \textit{optimum} non-adaptive probing algorithm, and the committal benchmark\footnote{We provide a more precise definition of adaptivity gaps in Appendix \ref{appendix:adaptivity-results}.}.

\section{Known Stochastic Graphs: Adversarial and ROM Arrivals}\label{sec:known_stochastic}
\label{sec:benchmark-comparisons}

In this section, we restrict our attention to online bipartite  stochastic matching in the  setting where the stochastic graph is known to the algorithm. We use our  new LP to guide the sequence of probes for each of the online vertices. We first prove Theorem \ref{thm:known_vertex_weights}, showing that Algorithm \ref{alg:known_stochastic_graph} achieves a $1-\frac{1}{e}$ competitive ratio for the setting of offline vertex weights and adversarial online arrivals.

We then consider Algorithm \ref{alg:known_stochastic_graph} in the case of arbitrary edge weights, under the assumption that the online nodes arrive in random order, thus attaining a competitive ratio of $1/2$ (which we show is tight for this algorithm).
By considering a modification of Algorithm \ref{alg:known_stochastic_graph}, we can improve this competitive ratio to $1-1/e$
using the techniques of Ehsani et al. \cite{Ehsani2017} and Gamlath et al. \cite{Gamlath2019}.
This extends the recent work of \cite{Gamlath2019} to arbitrary patience in what they
refer to as the \textit{query-commit model}.

The results of this section also yield a lower bound (positive result) on the adaptivity gap of the bipartite stochastic
matching problem with one-sided patience.

\subsection{Defining the Probing Algorithm}

We now consider the probing algorithm which is the subject of Theorems \ref{thm:known_vertex_weights}
and \ref{thm:known_ROM_edge_weights}. 

\begin{algorithm}[H]
\caption{Known Stochastic Graph} \label{alg:known_stochastic_graph}
\begin{algorithmic}[1]
\Statex Input $G=(U,V,E)$, a stochastic graph with edge probabilities $(p_{e})_{e \in E}$, edge weights $(w_e)_{e \in E}$ and patience parameters $(\ell_{v})_{v \in V}$    
\State Set $\scr{M} \leftarrow \emptyset$.

\State Solve \ref{LP:new}, and find an optimal solution $(x_{v}(\bm{u}))_{v \in V, \bm{u} \in U^{( \le \ell_v)}}$. 

\For{$t=1, \ldots , |V|$}
\State Process $v_t$ (vertex arriving at time $t$)
\State Set $(u_t,v_t) \leftarrow \textsc{VertexProbe}(G, (x_{v_t}(\bm{u}))_{\bm{u} \in U^{( \le \ell_{v_t})}}, v_t)$.
\If{$(u_t,v_t) \neq \emptyset$ and $u_t$ is unmatched}
\State Set $\scr{M}(v_t) = u_t$.
\EndIf
\EndFor
\State Return $\scr{M}$.
\end{algorithmic}
\end{algorithm}

\subsection{Adversarial Arrivals}

We first consider the known stochastic online matching problem in the case of arbitrary patience, offline vertex weights and adversarial online vertex arrivals. Specifically, we provide a proof of Theorem \ref{thm:known_vertex_weights}.

\begin{theorem}
\label{thm:known_vertex_weights}

If Algorithm \ref{alg:known_stochastic_graph} is passed a stochastic graph $G = (U,V,E)$ with offline vertex weights $(w_{u})_{u \in U}$ (that is, $w_{u, v} = w_u$ for all $(u, v) \in E$) and arbitrary patience, then
\[
	\mb{E}[ \val( \scr{M})] \ge \left(1 - \frac{1}{e} \right) \cdot \OPT(G).
\]
Thus, the competitive ratio of this algorithm (when the stochastic graph and order of online vertices is chosen by an adversary) is $1 - 1/e$ against the committal benchmark.
\end{theorem}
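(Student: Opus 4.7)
The strategy is the classical LP-based analysis for online vertex-weighted matching, transferred to the stochastic probing setting via Lemma \ref{lem:fixed_vertex_probe}. Let $(\til{x}_{u,v})_{u \in U, v \in V}$ be the induced edge variables of the \ref{LP:new} optimum solution, and write $a_{u,t} := p_{u,v_t} \cdot \til{x}_{u,v_t}$ where $v_t$ denotes the arrival at time $t$. With offline vertex weights the objective of \ref{LP:new}, expressed through the edge variables, becomes $\sum_{u,v} w_u \, p_{u,v} \, \til{x}_{u,v} = \sum_u w_u \, S_u$ with $S_u := \sum_t a_{u,t}$; Theorem \ref{thm:new_LP_relaxation} then gives $\OPT(G) \le \sum_u w_u \, S_u$, and the constraint \eqref{eqn:relaxation_efficiency_matching} forces $S_u \le 1$ for every $u \in U$. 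Hence it suffices to establish the per-vertex bound $\Pr[u \text{ is matched by } \scr{M}] \ge (1-1/e)\, S_u$ for all $u$, and then sum against $w_u$.

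Let $q_t(u)$ be the probability that $u$ is unmatched at the start of round $t$. By Lemma \ref{lem:fixed_vertex_probe}, the call $\textsc{VertexProbe}(G, (x_{v_t}(\bm{u}))_{\bm u}, v_t)$ commits to the edge $(u, v_t)$ with probability exactly $a_{u,t}$. The key claim is that this commit event at time $t$ is \emph{independent} of the event $\{u \text{ is unmatched before round } t\}$. Indeed, the commit event is measurable with respect to (i) the fresh tuple drawn inside \textsc{VertexProbe} at round $t$ and (ii) the edge states $(\st(u', v_t))_{u' \in U}$ on edges incident to $v_t$; these are jointly independent of all randomness consumed in rounds $1, \dots, t-1$, which is what determines whether $u$ has already been matched. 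This yields the recursion $q_{t+1}(u) = q_t(u)(1 - a_{u,t})$ with $q_1(u) = 1$, so $\Pr[u \text{ matched}] = 1 - \prod_t (1 - a_{u,t})$.

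To finish, I would apply two standard inequalities. Since $1 - a \le e^{-a}$, one has $\prod_t (1 - a_{u,t}) \le e^{-S_u}$; and since $S_u \in [0,1]$ and the map $s \mapsto 1 - e^{-s}$ is concave with values $0$ and $1-1/e$ at the endpoints of $[0,1]$, one gets $1 - e^{-S_u} \ge (1-1/e)\, S_u$. Chaining these produces the per-vertex bound, and summing against $w_u$ gives $\mb{E}[\val(\scr{M})] \ge (1-1/e) \sum_u w_u \, S_u \ge (1-1/e)\, \OPT(G)$.

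The main delicate step is the independence claim in the recursion for $q_t(u)$. It hinges on two structural properties of Algorithm \ref{alg:known_stochastic_graph}: the \ref{LP:new} solution is computed once up front on the known stochastic graph $G$, so no $a_{u,t}$ depends on observed edge states, and \textsc{VertexProbe} at time $t$ only probes edges incident to $v_t$ using randomness freshly drawn at that round. This non-adaptive structure is precisely what reduces the per-vertex analysis to the classical $1 - 1/e$ argument; the remainder is routine.
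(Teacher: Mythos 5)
Your proof is correct and takes essentially the same route as the paper's: both decompose the objective vertex-by-vertex, use Lemma \ref{lem:fixed_vertex_probe} to get the per-round commit probability $p_{u,v}\til{x}_{u,v}$, exploit independence of the \textsc{VertexProbe} calls to write $\Pr[u\text{ unmatched}]=\prod_v(1-p_{u,v}\til{x}_{u,v})$, and finish with $1-z\le e^{-z}$, the constraint $\sum_v p_{u,v}\til{x}_{u,v}\le 1$, concavity of $1-e^{-s}$, and Theorem \ref{thm:new_LP_relaxation}. Your recursion $q_{t+1}(u)=q_t(u)(1-a_{u,t})$ is just a time-indexed rephrasing of the same independence argument.
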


\begin{proof}

Let us now denote $\val(\scr{M})$ as the value of the matching returned by Algorithm \ref{alg:known_stochastic_graph}. Observe that
\[
	\mb{E}[ \val( \scr{M})] = \sum_{u \in U} w_{u} \,\mb{P}[ \text{$u$ is matched by the algorithm}].
\]
As such, for each fixed $u \in U$, we may focus on lower bounding the probability that the algorithm matches it.

Recall that associated with the solution $(x_{v}(\bm{u}))_{v \in V, \bm{u} \in U^{(\ell_{v})}}$
are the edge variables $(\til{x}_{u,v})_{u \in U, v \in V}$, as defined following \ref{LP:new} in Section \ref{sec:prelim}.

Observe now that $u \in U$ is matched by the algorithm, if and only if there exists some $v \in V$ which commits to $u$ while
executing \textsc{VertexProbe}.  For a fixed $v$, we denote this event by $C(u,v)$. Using Lemma \ref{lem:fixed_vertex_probe}, we get that
\[
	\mb{P}[C(u,v)] =  p_{u,v} \, \til{x}_{u,v},
\]
for each $v \in V$.

Now, since the executions of \textsc{VertexProbe} are independent, so
are the events $\{ \neg C(u,v)\}_{v \in V}$, and so
\[
  \mb{P}[ \text{$u$ is not matched}] = \prod_{v \in V} ( 1 - p_{u,v} \, \til{x}_{u,v} ).
\]
As a result,
\begin{align*}
  \mb{P}[ \text{$u$ is matched} ] &= 1 - \prod_{v \in V}(1 - p_{u,v} \, \til{x}_{u,v})   \\
                  &\ge 1 - \prod_{v \in V} \exp\left( - p_{u,v} \, \til{x}_{u,v} \right) \\
                  &= 1 - \exp\left( -\sum_{v \in V} p_{u,v} \, \til{x}_{u,v} \right),
\end{align*}

as $1- z \le \exp(-z)$ for all $z \in \mb{R}$ (here we use $\exp(z):= e^{z}$ for notational clarity).

Now $(x_{v}(\bm{u}))_{v \in V, \bm{u} \in U^{( \le \ell_v)}}$ is a feasible solution to \ref{LP:new}, and so
\[
\sum_{v \in V} p_{u,v} \, \til{x}_{u,v} \le 1.
\]
We may therefore conclude that
\[
\mb{P}[ \text{$u$ is matched}] \ge ( 1 - \exp(-1))\sum_{v \in V} p_{u,v} \, \til{x}_{u,v},
\]
since $1 - \exp(-z) \ge (1 - \exp(-1)) \, z$ for all $0 \le z \le 1$. 

Thus,
\begin{align*}
  \mb{E}[\val(\scr{M})] &= \sum_{u \in U} w_{u} \, \mb{P}[ \text{$u$ is matched}] \\
            &\ge ( 1 - \exp(-1))\sum_{u \in U} \sum_{v \in V} w_{u} \,  p_{u,v} \, \til{x}_{u,v} \\
            & = (1 - \exp(-1)) \, \LPOPT_{new}(G),
\end{align*}

as $(x_{v}(\bm{u}))_{v \in V, \bm{u} \in U^{( \le \ell_v)}}$ is an optimal solution to \ref{LP:new}. By Theorem \ref{thm:new_LP_relaxation}, $\OPT(G) \le \LPOPT_{new}(G)$, and so the proof is complete.

\end{proof}

Suppose that we fix an ordering $\pi$ of $V$. We can then define $\OPT(G,\pi)$ to be
the largest expected value an \textit{online} probing algorithm can attain on $G$, provided
it is presented the vertices $V$ in the order $\pi$. 
With this definition, we can define the \textbf{order gap} of $G$ as
the worst case ratio between $\OPT(G, \pi_{1})$ and $\OPT(G, \pi_{2})$, over all orderings
$\pi_{1}$ and $\pi_{2}$ of $V$; that is, the ratio $\text{order}(G)$, where
\begin{equation}\label{eqn:order_gap}
	 \text{order}(G) := \frac{\inf_{\pi_{1}}\OPT(G, \pi_{1})}{\sup_{\pi_{2}}\OPT(G, \pi_{2})}.
\end{equation}
If $\mathscr{C}$ corresponds to the collection
of all vertex weighted stochastic graphs,
then we can define the \textbf{order gap} of $\mathscr{C}$ as the minimum 
value of $\text{order}(G)$ over all $G \in \mathscr{C}$; that is,
the value $\inf_{G \in \mathscr{C}} \text{order}(G)$.

Since Algorithm \ref{alg:known_stochastic_graph} achieves a competitive ratio of $1-1/e$ no matter the order the online vertices
are presented to it, we observe the following corollary:

\begin{corollary}
The order gap of the collection of vertex weighted stochastic graphs is no smaller than $1-1/e$. Note that this is a positive result. 
\end{corollary}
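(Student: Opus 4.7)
The plan is to derive this as a direct consequence of Theorem \ref{thm:known_vertex_weights}, essentially by observing that the competitive guarantee there is against an order-independent benchmark. The key point is that $\OPT(G)$, the committal benchmark, places no constraint on the order in which probes are made, and in particular imposes no vertex ordering on $V$; hence $\OPT(G)$ serves simultaneously as an upper bound on $\OPT(G,\pi)$ for every ordering $\pi$ and as a quantity against which Algorithm \ref{alg:known_stochastic_graph} competes for every arrival order.

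First I would fix an arbitrary vertex weighted stochastic graph $G \in \mathscr{C}$. For the upper direction, I would observe that for any ordering $\pi$ of $V$, an online probing algorithm processing $V$ in the order $\pi$ is a (strictly more restricted) special case of the committal benchmark, so
\[
\sup_{\pi_{2}} \OPT(G, \pi_{2}) \le \OPT(G).
\]
For the lower direction, I would invoke Theorem \ref{thm:known_vertex_weights}: Algorithm \ref{alg:known_stochastic_graph} is a valid online probing algorithm whose expected value is at least $(1 - 1/e)\,\OPT(G)$, regardless of the arrival order. Since $\OPT(G,\pi_{1})$ is the supremum of expected values over all online probing algorithms presented $V$ in the order $\pi_{1}$, this yields
\[
\inf_{\pi_{1}} \OPT(G, \pi_{1}) \ge \left(1 - \frac{1}{e}\right) \OPT(G).
\]

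Combining these two inequalities gives
\[
\text{order}(G) = \frac{\inf_{\pi_{1}}\OPT(G, \pi_{1})}{\sup_{\pi_{2}}\OPT(G, \pi_{2})} \ge \frac{(1 - 1/e)\,\OPT(G)}{\OPT(G)} = 1 - \frac{1}{e},
\]
and since $G \in \mathscr{C}$ was arbitrary, taking the infimum over $\mathscr{C}$ preserves the bound. There is no substantive obstacle here; the only thing worth being explicit about is that $\OPT(G)$ is indeed an upper bound on $\sup_{\pi_2}\OPT(G,\pi_2)$, which follows immediately from the definition of the committal benchmark as an adaptive prober unrestricted by any vertex ordering.
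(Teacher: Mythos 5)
Your proof is correct and matches the paper's reasoning exactly; the paper simply states the observation in a single sentence (that Algorithm \ref{alg:known_stochastic_graph} attains $1-1/e$ against $\OPT(G)$ irrespective of arrival order), and you have correctly unpacked the two inequalities — $\sup_{\pi_2}\OPT(G,\pi_2)\le \OPT(G)$ and $\inf_{\pi_1}\OPT(G,\pi_1)\ge (1-1/e)\OPT(G)$ — that make the corollary immediate.
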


We contrast this observation with an upper bound (negative result) on the order gap.

\begin{example}\label{example:order-gap}
Let us consider the bipartite graph $G = (U, V, E)$ where $U = \{u_1, u_2\}$ is the set of offline vertices and $V = \{v_1, v_2\}$ is the set of online vertices each of which has unit patience. Denote $E = \{(u_1, v_1), (u_2, v_1), (u_2, v_2)\}$ as the set of edges of $G$, with edge probabilities $p_{u_1, v_1} = \frac{1}{2}$, $p_{u_2, v_1} = 1$ and $p_{u_2, v_2} = \frac{1}{2}$. 
The order gap of $G$ is at most $0.8$. 
\end{example}

\begin{proof}
If we process vertex $v_1$ before vertex $v_2$, we either probe $(u_1, v_1)$ or $(u_2, v_1)$ while processing $v_1$. If we probe the former, the only other edge that we can probe is $(u_2, v_2)$ and the expected size of the constructed  matching  is $\frac{1}{2} + \frac{1}{2} = 1$, as both of these edges are active with probability $\frac{1}{2}$. If instead we probe edge $(u_2, v_1)$, then it is always active and there are no more edges that we can probe. This also gives an expected matching size of $1$. Thus, the maximum expected size of the matching is 1  when $v_1$ is processed before $v_2$.

If instead we process vertex $v_2$ before $v_1$, we may probe $(u_2, v_2)$, which is active with probability $\frac{1}{2}$. If the edge is active, we next probe edge $(u_1, v_1)$, which is active with probability $\frac{1}{2}$. In this case, expected size of matching found is $\frac{3}{2}$. If $(u_2, v_2)$ is inactive, we instead probe $(u_2, v_1)$, which is always active and so in this case, the expected size of matching found is $1$. Thus, the maximum expected size of matching when $v_2$ is probed before $v_1$ is $\frac{1}{2}\frac{3}{2} + \frac{1}{2}1 = \frac{5}{4}$. 
\end{proof}

While it would be interesting to know the precise value of the order gap (even just for the case of offline vertex weights),
$1-1/e$ is the limitation of our techniques for proving positive results, as demonstrated by the following example:

\begin{example}\label{example:unit_patience_unweighted}
Consider a graph $G$ with a single offline node $u$ and a collection of $n$ online nodes $V$. For each edge $e=(u,v)$ with $v \in V$, set $p_{u,v}:= 1/n$. As the example is in the setting of unit patience, \ref{LP:new} and \ref{LP:standard_definition} are equivalent,
and in particular, $\LPOPT_{new}(G) = \LPOPT_{std}(G)$. Thus, we describe the remainder of the example with respect
to the definition of \ref{LP:standard_definition} for simplicity.

Observe that the LP solution $x_{u,v}:=1$ for each $v \in V$ satisfies the constraints
of \ref{LP:standard_definition}. Moreover, it evaluates to an objective value of $1$. Thus, $LPOPT_{std}(G) \ge 1$.

Observe now that if we consider an arbitrary probing algorithm, then its only option is to probe the edges of $u$ in some arbitrary order (or not at all). Of course, each edge is active with probability $1/n$, so we observe that
\[
  \mb{P}[\text{$G$ has a least one active edge}] = 1 - (1 - 1/n)^{n} =(1 +o(1))\left(1 - \frac{1}{e} \right),
\] 
as we allow $n \rightarrow \infty$.

As a result, 
\[
  \inf_{G}\frac{\OPT(G)}{\LPOPT_{new}(G)}  \le \left(1 - \frac{1}{e} \right),
\]
and so in particular, Algorithm \ref{alg:known_stochastic_graph} achieves the best possible bound against $\LPOPT_{new}(G)$.
\end{example}

\subsection{Random Order Arrivals}

We now consider the known stochastic matching problem in the case of arbitrary edges weights and ROM arrivals. We first prove Theorem \ref{thm:known_ROM_edge_weights}, which shows that Algorithm \ref{alg:ROM_edge_weights} gets a competitive ratio of $1/2$.
After arguing that the analysis is tight, we introduce a modified \textsc{VertexProbe} algorithm, which
we refer to as \textsc{VertexProbe-S}. By replacing \textsc{VertexProbe} with the subroutine \textsc{VertexProbe-S}
in Algorithm \ref{alg:known_stochastic_graph}, we are able to improve the competitive guarantee to $1-1/e$.

\begin{theorem}\label{thm:known_ROM_edge_weights}
In the ROM input model, if Algorithm \ref{alg:known_stochastic_graph} is passed a stochastic graph $G = (U,V,E)$ with arbitrary edge weights $(w_e)_{e\in E}$ and patience $(\ell_{v})_{v \in V}$, then
\[
	\mb{E}[ \val(\scr{M})] \ge \frac{1}{2} \cdot \OPT(G),
\]
Thus, against the committal benchmark, the competitive ratio of this algorithm (when the stochastic graph is chosen by an adversary and order of online vertices is determined uniformly at random) is 1/2.
\end{theorem}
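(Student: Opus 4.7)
The plan is to establish an edge-by-edge lower bound: for each $(u,v) \in E$, I will show $\Prob[(u,v) \in \scr{M}] \ge \tfrac{1}{2}\, p_{u,v}\, \til{x}_{u,v}$, where $(\til{x}_{u,v})_{u \in U, v \in V}$ are the induced edge variables of the optimal \ref{LP:new} solution $(x_{v}(\bm{u}))$ computed by Algorithm \ref{alg:known_stochastic_graph}. Summing this bound weighted by $w_{u,v}$ will then give $\E[\val(\scr{M})] \ge \tfrac{1}{2}\LPOPT_{new}(G) \ge \tfrac{1}{2}\OPT(G)$, with the final step by Theorem \ref{thm:new_LP_relaxation}.

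Fix $(u,v) \in E$ and let $\pi$ denote the (uniform random) order in which $V$ arrives; write $v' <_\pi v$ to mean $v'$ precedes $v$. For each $v' \in V$, let $C_{v'}$ denote the event that the call \textsc{VertexProbe}$(G,(x_{v'}(\bm{u}))_{\bm{u}}, v')$ commits to $(u,v')$. By Lemma \ref{lem:fixed_vertex_probe}, $\Prob[C_{v'}] = p_{u,v'}\,\til{x}_{u,v'}$, and since \textsc{VertexProbe} uses fresh internal randomness at each call, the events $(C_{v'})_{v' \in V}$ are mutually independent and independent of $\pi$. Let $F_v$ denote the event that $u$ is unmatched at the moment $v$ is processed. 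Then $(u,v)\in \scr{M}$ iff $C_v \cap F_v$ holds, and conditional on $\pi$ the events $C_v$ and $F_v$ are independent (since $C_v$ depends only on the randomness at step $v$ and $F_v$ only on the randomness at earlier steps), so $\Prob[(u,v)\in \scr{M} \mid \pi] = p_{u,v}\,\til{x}_{u,v}\,\Prob[F_v \mid \pi]$.

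Next, I lower-bound $\Prob[F_v \mid \pi]$. Because $u$ can only have become matched before $v$ arrives if some $v' <_\pi v$ committed to it, independence of the $C_{v'}$'s together with the inequality $\prod_i(1 - a_i) \ge 1 - \sum_i a_i$ for $a_i \in [0,1]$ yield
\begin{equation*}
\Prob[F_v \mid \pi] \ge \prod_{v' <_\pi v}\bigl(1 - p_{u,v'}\,\til{x}_{u,v'}\bigr) \ge 1 - \sum_{v' <_\pi v} p_{u,v'}\,\til{x}_{u,v'}.
\end{equation*}
Taking expectation over $\pi$, each $v' \neq v$ satisfies $v' <_\pi v$ with probability exactly $\tfrac{1}{2}$ by ROM symmetry, hence
\begin{equation*}
\E_\pi\bigl[\Prob[F_v \mid \pi]\bigr] \ge 1 - \tfrac{1}{2}\sum_{v' \neq v} p_{u,v'}\,\til{x}_{u,v'} \ge \tfrac{1}{2},
\end{equation*}
where the last step uses constraint \eqref{eqn:relaxation_efficiency_matching} of \ref{LP:new}. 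Multiplying by $p_{u,v}\,\til{x}_{u,v}$ and summing over edges concludes the proof.

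The only delicate point is the independence structure: one has to verify that conditional on $\pi$, $F_v$ is determined by the internal randomness of the earlier \textsc{VertexProbe} calls while $C_v$ depends only on the randomness used while processing $v$ itself, so that conditional independence follows from how the algorithm is designed. Beyond that, the argument uses only the matching-side efficiency constraint \eqref{eqn:relaxation_efficiency_matching} of \ref{LP:new} and does not rely on any finer structure, so there is no serious additional obstacle.
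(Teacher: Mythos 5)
Your proof is correct and takes essentially the same approach as the paper: both decompose $\mb{P}[\scr{M}(v)=u]$ as $p_{u,v}\,\til{x}_{u,v}\cdot\mb{P}[u\text{ free when }v\text{ arrives}]$, lower-bound the freeness probability via a union bound over earlier commitment events together with constraint \eqref{eqn:relaxation_efficiency_matching}, and use ROM averaging to obtain $1/2$. The only cosmetic difference is in the bookkeeping: the paper sums over arrival positions $t$ and then over the $t-1$ predecessors, whereas you condition directly on $\pi$ and exploit $\mb{P}[v'<_\pi v]=\tfrac12$, which is slightly cleaner but not a genuinely different argument.
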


We include the proof of Theorem \ref{thm:known_ROM_edge_weights} in Appendix \ref{appendix:deferred_proofs},
as it has a relatively simple analysis and helped motivate the improvement to $1-1/e$. 
We now consider the following example, which confirms the performance guarantee of Algorithm \ref{alg:known_stochastic_graph} is tight:

\begin{example} \label{example:edge_weight_random_order}
Let $G = (U, V, E)$ be a bipartite graph with a single offline node $u$,
online vertices $V = \{v_1, v_2\}$ and edges $E = \{(u, v_1), (u, v_2)\}$.
We assume that the online nodes have unit patience.

Fix $0 < \epsilon < 1$, and define the edge probabilities 
$p_{(u, v_1)} := \epsilon$ and $p_{u, v_2} := 1 - \epsilon$.
Moreover, define the weights of the edges as $w_{u, v_1} := 1/\epsilon$ and $w_{u, v_2} = \epsilon/(1 - \epsilon)$. 

For this instance, if we allow $\epsilon \rightarrow 0$,
then the expected weight of matching returned by Algorithm \ref{alg:known_stochastic_graph} 
in the ROM setting is at most half that of $\OPT(G)$.
\end{example}

\begin{proof}
Since we work in the unit patience setting for $G$, 
we express the relevant linear program as in the setting of \ref{LP:standard_definition}:
\begin{align}
&\text{maximize}&   w_{u, v_1} \cdot p_{u, v_1} \cdot x_{u, v_1} + w_{u, v_2} \cdot p_{u, v_2} \cdot x_{u, v_2}\\
&\text{subject to}&  \: p_{u, v_1} \cdot x_{u, v_1} + p_{u, v_2} \cdot x_{u, v_2} \leq 1\\
& &     0 \leq x_{u, v_1} \leq 1 \\
& &     0 \leq x_{u, v_2} \leq 1
\end{align}

The optimal solution to this LP corresponds to $x_{u, v_1} = x_{u, v_2} = 1$, and the optimal value is $1 + \epsilon$.

Now, when considering the order in which $v_1$ arrives before $v_2$, if we probe the edge $e \in E$ with probability $x_e$, the expected value of matching returned is 
\[
w_{u, v_1} \, x_{u, v_1} \, p_{u, v_1} + (1 - x_{u, v_1} \, p_{u, v_1}) \, w_{u, v_2} \, x_{(u, v_2)}\, p_{(u, v_2)} = 1 + (1 - \epsilon) \, \epsilon. 
\]
Similarly, when considering the order in which $v_2$ arrives before $v_1$, if we probe the edge $e\in E$ with probability $x_e$, then the expected value of matching returned is 
\[
w_{u, v_2} \, x_{u, v_2} \, p_{u, v_2} + (1 - x_{u, v_2} \, p_{u, v_2}) \, w_{u, v_1} \, x_{u, v_1} \, p_{u, v_1} = 2 \, \epsilon.
\]
Thus, as the order of arrivals is determined uniformly at random,
the expected value of the matching returned is
\[
 \frac{1}{2}(2\epsilon + 1 + \epsilon - \epsilon^2),
\] 
which tends to $1/2$ as $\epsilon$ tends to $0$.
Moreover, the optimum value of the LP tends to $1$ as $\epsilon$ tends to $0$,
so the ratio of these values tends to $1/2$.
Moreover, for this specific choice of $G$, 
\[
	\LPOPT_{new}(G) = \OPT(G),
\]
and so the proof is complete.
\end{proof}

We remark that there is an online algorithm that achieves the optimum expectation since the stochastic graph is known. That is, the online algorithm would simply not probe $(u,v_1)$, if $v_{1}$ is the first arrival.

\subsubsection{Improving Upon the Competitive Ratio} \label{sec:known_ROM_improvement}

In \cite{Gamlath2019}, Gamlath et al. showed, among other things,
that when $G=(U,V,E)$ has full patience, there exists a probing algorithm
which achieves an approximation ratio of $1-1/e$. This algorithm
in fact executes in the online setting, in which $G$ is known and the probing algorithm
respects a vertex order that is generated uniformly at random.

In addition to using their new LP relaxation (as discussed in Section \ref{sec:prelim}),
Gamlath et al. adapted the techniques of Ehsani et al. \cite{Ehsani2017} from the \textit{prophet secretary problem} to get a competitive ratio of $1-1/e$ in the case of full patience. We now generalize their algorithm to attain the same competitive ratio, while handling arbitrary patience constraints. We emphasize that our analysis
proceeds almost identically, though this is only made possible by our definition of \ref{LP:new} \footnote{In Appendix \ref{appendix:committal_benchmark}, we provide a reasonable generalization of the Gamlath et al. LP, and show that it attains the same value as \ref{LP:new}.}.

Given an arbitrary stochastic graph $G=(U,V,E)$, let us suppose we are presented
an optimum solution to \ref{LP:new}, denoted $(x_{v}(\bm{u}))_{v \in V, \bm{u} \in U^{ ( \le \ell_v)}}$,
whose edge variables we denote by $(\til{x}_{u,v})_{u \in U, v\in V}$. In this case, define
\[
	c_{u} := \sum_{v \in V} w_{u,v} \, p_{u,v} \, \til{x}_{u,v}
\]
for each $u \in U$. We can view $c_{u}$ as corresponding to the contribution of $u$
to the evaluation of $(x_{v}(\bm{u}))_{v \in V, \bm{u} \in U^{ ( \le \ell_v)}}$ as a solution
to \ref{LP:new}. Specifically, observe that
\begin{equation}
\sum_{u \in U} c_{u} = \sum_{u \in U, v \in V} w_{u,v} \, p_{u,v} \, \til{x}_{u,v} = \LPOPT_{new}(G).
\end{equation}

Let us now return to the ROM setting, though we describe it in a slightly different way for the stochastic graph $G=(U,V,E)$. As in Devanur et al. \cite{DJK2013}, 
for each $v \in V$, draw $Y_{v} \in [0,1]$ independently and uniformly at random. We assume that the vertices of $V$ are presented to the algorithm in an increasing order, based on the values of $(Y_{v})_{v \in V}$. 
In this way, we say that vertex $v \in V$ \textit{arrives at time $Y_{v}$}. 
Observe that the vertices of $V$ are presented to the algorithm in a uniformly at random order, so this interpretation is equivalent to the ROM setting.

We now describe a modification of Algorithm \ref{alg:known_stochastic_graph} that is more selective
as to which of the edges returned by $\textsc{VertexProbe}$ we are willing to accept.
Specifically, when $v$ arrives at time $Y_{v}$, \textsc{VertexProbe} is executed as before.
However, if \textsc{VertexProbe} returns the edge $e=(u,v)$, then we only add $e$ to the matching
provided $u$ is unmatched \textit{and} $w_{e} \ge (1 - e^{ Y_{v} -1}) \cdot c_{u}$. Of course,
this high level description of the online probing algorithm clearly does \textit{not} respect commitment,
but fortunately we can run a \textit{simulated} version of \textsc{VertexProbe},
which we refer to as \textsc{VertexProbe-S}.

\begin{algorithm}[H]
\caption{VertexProbe-S} 
\label{alg:modified_vertex_probe}
\begin{algorithmic}[1]
\Statex Input the stochastic graph $G=(U,V,E)$, a fixed node $s \in V$, the variables $(x_{s}(\bm{u}))_{\bm{u} \in U^{ (\le \ell_s)}}$ associated to $s$ in a solution to \ref{LP:new} for $G$, and $0 \le z \le 1$.
\State Initialize $\scr{M} \leftarrow \emptyset$.
\State Return $\scr{M}$ with probability $1 - \sum_{\bm{u} \in U^{ (\le \ell_s)}} x_{s}(\bm{u})$
\Comment{\textit{pass} with a certain probability.} 
\State Draw $\bm{u}^*$ from $U^{( \le \ell_s)}$ with
probability $x_{s}(\bm{u}^*)$ (see \eqref{eqn:relaxation_efficiency_distribution}).
\State Denote $\bm{u}^* = (u_{1}^*, \ldots ,u_{k}^*)$ for $k := |\bm{u}|$.
\For{$i=1, \ldots ,k$}
\If{$w_{u_{i}^*,s} \ge (1 - e^{z - 1}) \cdot c_{u_i^*}$}
\State Probe $(u_{i}^*,s)$.
\If{$\st(u_i^*,s)=1$}
\State  Set $\scr{M}(s) \leftarrow u_{i}^*$ and return $\scr{M}$. 
\EndIf
\Else{ draw $Z \sim \Ber(p_{u_{i}^*,v})$ independently.} \Comment{a Bernoulli of parameter $p_{u_{i}^*,v}$.}
\If{$Z = 1$}
\State Set $\scr{M}(s) \leftarrow u_{i}^*$ and return $\scr{M}$. \Comment{drawing $Z$ simulates an edge probe.}
\EndIf
\EndIf
\EndFor
\State Return $\scr{M}$.

\end{algorithmic}
\end{algorithm}

\begin{remark}
Observe that \textsc{VertexProbe-S} makes a probe to the edge $(u,s)$,
only if 
\[
w_{u,s} \ge (1 - e^{z - 1}) \cdot c_{u}.
\]
If this condition is not satisfied, then it still may return the edge $(u,s)$, however $(u,s)$
will not be probed. We make sure to return $(u,s)$, so that \textsc{VertexProbe-S} can be
coupled with \textsc{VertexProbe}, as this will simplify the proof of Theorem \ref{thm:modified_known_stochastic_graph}.

\end{remark}

We say that \textsc{VertexProbe-S} \textbf{commits} to the edge $(u,s)$,
provided it returns this edge (even if it doesn't actually probe $(u,s)$), when executed
with the parameter $0 \le z \le 1$.
Observe that \textsc{VertexProbe-S} returns $(u,s)$ with the same
probability as \text{VertexProbe}, so we can make use of Lemma \ref{lem:fixed_vertex_probe}
to get an analogous guarantee.

\begin{lemma} \label{lem:fixed_vertex_probe_sim}

Suppose $G=(U,V,E)$ is a stochastic graph with fixed node $s \in V$
and \ref{LP:new} solution $(x_{v}(\bm{u}))_{v \in V, \bm{u} \in U^{ (\le \ell_v)}}$, whose induced edge variables we denote by $(\til{x}_{u,v})_{u \in U, v \in V}$.

If \textsc{VertexProbe-S} is passed the fixed node $s$, then $(u,s)$ is returned
with probability $p_{u,v} \cdot \til{x}_{u,s}$ for each $u \in U$, no matter which
value of $0 \le \alpha \le 1$ is presented to \textsc{VertexProbe-S}. Moreover, 
the edge $(u,s)$ is probed only if $w_{u,s} \ge (1 - e^{z - 1}) \cdot c_{u}$.
\end{lemma}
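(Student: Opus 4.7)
The plan is to prove this by coupling \textsc{VertexProbe-S} with \textsc{VertexProbe}, showing that the two subroutines induce identical distributions over the returned edge. Observe that the outer structure of both algorithms is the same: each first passes with probability $1 - \sum_{\bm{u}} x_{s}(\bm{u})$, and otherwise draws the same tuple $\bm{u}^* = (u_1^*, \ldots, u_k^*)$ from $U^{(\le \ell_s)}$ according to the distribution $x_{s}(\cdot)$. The only difference between the two algorithms occurs inside the loop over $i = 1, \ldots, k$.

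The key observation is that at each iteration $i$, conditional on reaching iteration $i$ with tuple $\bm{u}^*$, the probability that the algorithm commits to $(u_i^*, s)$ and halts is exactly $p_{u_i^*, s}$ in both subroutines. In \textsc{VertexProbe}, this is because the probed edge is active with probability $p_{u_i^*, s}$ via $\st(u_i^*, s) \sim \Ber(p_{u_i^*,s})$. In \textsc{VertexProbe-S}, if the weight condition $w_{u_i^*, s} \ge (1 - e^{z-1}) \cdot c_{u_i^*}$ holds, the same probe is performed; otherwise, the independent draw $Z \sim \Ber(p_{u_i^*, s})$ plays the role of $\st(u_i^*, s)$ and produces the same Bernoulli outcome. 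In both cases the iteration's outcome is independent of the outcomes of iterations $1, \ldots, i-1$, so by coupling the $\st(u_i^*, s)$ of \textsc{VertexProbe} with the $\st(u_i^*, s)$ (resp.\ $Z$) of \textsc{VertexProbe-S} when a real (resp.\ simulated) probe occurs, the two algorithms return the same edge (or $\emptyset$) with probability one.

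Once this coupling is established, I would invoke Lemma \ref{lem:fixed_vertex_probe} applied to \textsc{VertexProbe}, which yields that $(u,s)$ is returned with probability $p_{u,s} \cdot \til{x}_{u,s}$; the coupling immediately transfers this conclusion to \textsc{VertexProbe-S}, giving the first part of the statement. Crucially, since this argument only uses the coupling (and not the value of $z$), the return probability is identical for any choice of $0 \le z \le 1$. The second assertion --- that $(u,s)$ is actually probed only if $w_{u,s} \ge (1 - e^{z-1}) \cdot c_{u}$ --- is immediate from inspecting the conditional branch in \textsc{VertexProbe-S}, since the else-branch draws $Z$ instead of performing a probe. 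There is no real technical obstacle here; the entire content of the lemma lies in recognizing that the algorithm is designed to preserve the output distribution of \textsc{VertexProbe} while re-routing ``unfavorable'' potential matches through a simulated Bernoulli that never triggers an actual probe.
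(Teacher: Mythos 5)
Your proposal is correct and follows the same route the paper takes: the paper does not give a formal proof, but the observation immediately preceding the lemma (that \textsc{VertexProbe-S} returns $(u,s)$ with the same probability as \textsc{VertexProbe}, since the simulated Bernoulli draw $Z$ has the same parameter as the real probe) combined with Lemma \ref{lem:fixed_vertex_probe} is precisely the coupling argument you spell out. The second clause of the lemma is, as you say, immediate from the branch condition in the algorithm.
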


We now can implement a modified version of Algorithm \ref{alg:known_stochastic_graph}
which executes identically to the high level modification we just described, while
respecting commitment.

\begin{algorithm}[H]
\caption{Modified Known Stochastic Graph} \label{alg:modified_known_stochastic_graph}
\begin{algorithmic}[1]
\Statex Input $G=(U,V,E)$, a stochastic graph with edge probabilities $(p_{e})_{e \in E}$, edge weights $(w_e)_{e \in E}$ and patience parameters $(\ell_{v})_{v \in V}$.    
\State Set $\scr{M} \leftarrow \emptyset$.

\State Solve \ref{LP:new}, and find an optimal solution $(x_{v}(\bm{u}))_{v \in V, \bm{u} \in U^{( \le \ell_v)}}$. 
\State For each $v \in V$, draw $Y_{v} \in [0,1]$ independently and uniformly at random.

\For{$v \in V$ in increasing order of $Y_v$}
\State Set $(u,v) \leftarrow \textsc{VertexProbe-S}(G, v, (x_{v}(\bm{u}))_{\bm{u} \in U^{( \le \ell_{v})}}, Y_{v})$. 
\If{$(u,v) \neq \emptyset$, and $w_{u,v} \ge (1 - e^{Y_v - 1}) \cdot c_{u}$} 
\If{$u$ is unmatched} 
\State Set $\scr{M}(v) = u$. 				\Comment{$(u,v)$ is matched only if $(u,v)$ is probed and $\st(u,v)=1$.}
\EndIf
\Else
\State Pass on $(u,v)$.
\EndIf
\EndFor
\State Return $\scr{M}$.
\end{algorithmic}
\end{algorithm}

\begin{theorem}\label{thm:modified_known_stochastic_graph}
In the ROM input model, if Algorithm \ref{alg:modified_known_stochastic_graph} is passed a stochastic graph $G = (U,V,E)$ with arbitrary edge weights $(w_e)_{e\in E}$ and patience $(\ell_{v})_{v \in V}$, then
\[
	\mb{E}[ \val(\scr{M})] \ge \left(1 - \frac{1}{e} \right) \cdot \OPT(G),
\]
Thus, the competitive ratio of this algorithm is $1-1/e$ against the committal benchmark.
\end{theorem}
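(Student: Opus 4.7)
The plan is to decompose $\E[\val(\scr{M})] = \sum_{u \in U} \E[R_u]$, where $R_u := \sum_{v \in V} w_{u,v} \mathbf{1}[\scr{M}(v) = u]$, and prove the per-vertex lower bound $\E[R_u] \ge (1 - 1/e)\, c_u$ for every $u \in U$. Summing over $u$, together with $\sum_u c_u = \LPOPT_{new}(G)$ and Theorem \ref{thm:new_LP_relaxation}, then yields the claimed ratio against $\OPT(G)$.

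Fix $u \in U$, and for each $v \in V$ define $f_v(y) := p_{u,v}\, \tilde{x}_{u,v}\, \mathbf{1}[w_{u,v} \ge (1 - e^{y-1}) c_u]$ and $T_v(y) := \int_0^y f_v(s)\, ds$. By Lemma \ref{lem:fixed_vertex_probe_sim}, \textsc{VertexProbe-S} commits to $(u,v)$ with probability $p_{u,v}\tilde{x}_{u,v}$, independently of $Y_v$ and of the executions for other vertices. Consequently, the events ``\textsc{VertexProbe-S} commits to $(u,v)$, the threshold holds at $Y_v$, and $Y_v < y$'', indexed over $v \in V$, are independent, each with probability $T_v(y)$. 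Since $u$ is matched by time $y$ iff at least one such event occurs, a product formula gives
\[
q_u(y) \;:=\; \Prob[u \text{ unmatched at time } y] \;=\; \prod_{v \in V}(1 - T_v(y)).
\]
The same reasoning applied to the algorithm restricted to $V \setminus \{v\}$, after conditioning on $Y_v = y$, yields $\Prob[\scr{M}(v) = u \mid Y_v = y] = f_v(y)\, q_u(y)/(1 - T_v(y))$. Integrating over $y$ and summing over $v$ then produces
\[
\E[R_u] \;=\; \int_0^1 q_u(y)\, F_u(y)\, dy, \qquad F_u(y) := \sum_{v \in V} \frac{w_{u,v}\, f_v(y)}{1 - T_v(y)},
\]
and differentiating the product formula gives $q_u'(y) = -q_u(y) H_u(y)$ with $H_u(y) := \sum_{v \in V} f_v(y)/(1 - T_v(y))$.

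The crux of the proof, and the main obstacle, is the pointwise inequality
\[
F_u(y) \;\ge\; (1 - e^{y-1})\, c_u\, H_u(y) \;+\; e^{y-1} c_u.
\]
The plan to derive it is to split $w_{u,v} = (1 - e^{y-1}) c_u + (w_{u,v} - (1 - e^{y-1}) c_u)$ in each term of $F_u(y)$ with $f_v(y) > 0$, where both summands are nonnegative by the threshold condition defining $f_v(y)$. The first summand contributes exactly $(1 - e^{y-1}) c_u H_u(y)$, while the second contributes at least $B_u(y) - (1 - e^{y-1}) c_u A_u(y)$ (using $1/(1 - T_v(y)) \ge 1$), writing $B_u(y) := \sum_v w_{u,v} f_v(y)$ and $A_u(y) := \sum_v f_v(y)$. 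It remains to show $B_u(y) - (1 - e^{y-1}) c_u A_u(y) \ge e^{y-1} c_u$; this follows because vertices $v$ failing the threshold satisfy $w_{u,v} < (1 - e^{y-1}) c_u$, so their combined contribution to $c_u = \sum_v w_{u,v} p_{u,v} \tilde{x}_{u,v}$ is at most $(1 - e^{y-1}) c_u (1 - A_u(y))$ by the LP constraint $\sum_v p_{u,v} \tilde{x}_{u,v} \le 1$, yielding $B_u(y) \ge c_u - (1 - e^{y-1}) c_u (1 - A_u(y))$, which rearranges to the required bound.

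Once the pointwise inequality is established, I substitute $q_u(y) H_u(y) = -q_u'(y)$ and integrate by parts. The lower bound on $\E[R_u]$ splits as $-c_u \int_0^1 (1 - e^{y-1}) q_u'(y)\, dy + c_u \int_0^1 e^{y-1} q_u(y)\, dy$. Integration by parts on the first integral, using $q_u(0) = 1$ and the fact that $1 - e^{y-1}$ vanishes at $y=1$, produces $c_u(1 - 1/e) - c_u \int_0^1 e^{y-1} q_u(y)\, dy$, which cancels the second integral exactly. The remaining term is $c_u(1 - 1/e)$, giving $\E[R_u] \ge (1 - 1/e) c_u$ and completing the proof.
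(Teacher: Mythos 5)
Your proof is correct and follows essentially the same route as the paper's (which adapts Ehsani et al.\ and Gamlath et al.): the per-offline-vertex bound $\E[\val(\scr{M}(u))] \ge (1-1/e)\,c_u$, the split of $w_{u,v}$ against the time-dependent threshold $(1-e^{y-1})c_u$, the LP constraint $\sum_v p_{u,v}\,\til{x}_{u,v}\le 1$ to control the below-threshold vertices, and integration by parts exploiting $1 - \alpha(t) + \alpha'(t) = 0$. You merely repackage the paper's separate ``revenue'' ($N_u$) and ``utility'' ($M_{u,v}$) accounting into a single pointwise inequality on the integrand $q_u(y)F_u(y)$ (with $1/(1-T_v(y))\ge 1$ playing the role of $\mb{P}[T_u\ge t\mid Y_v=t]\ge r(t)$), a streamlined bookkeeping of the same argument.
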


The analysis of Theorem \ref{thm:modified_known_stochastic_graph} follows very closely
the full patience proof presented in Gamlath et al. \cite{Gamlath2019}, and hence is
mainly motivated by the single item prophet secretary problem of Ehsani et al. \cite{Ehsani2017}. 
However, for sake of completeness, we include the argument.

\begin{proof}[Proof of Theorem \ref{thm:modified_known_stochastic_graph}]

For each offline node $u \in U$, denote $\val(\scr{M}(u))$ as the weight
of the edge assigned to $u$ (which is zero, if $u$ remains unmatched).

Observe then that
\[
	\mb{E}[ \val(\scr{M})] = \sum_{u \in U} \mb{E}[ \val(\scr{M}(u))].
\]
Thus, in order to complete the proof it suffices to show
that
\begin{equation} \label{eqn:approximate_contribution}
	\mb{E}[ \val(\scr{M}(u))] \ge \left(1 - \frac{1}{e} \right) \cdot c_{u}
\end{equation}
for each $u \in U$, as we know that $\sum_{u \in U} c_{u} = \LPOPT_{new}(G) \ge \OPT(G)$ (by Theorem \ref{thm:new_LP_relaxation}).

As such, let us suppose $u \in U$ is fixed for the remainder of the proof.
The remaining computations follow Ehsani et al. \cite{Ehsani2017} for the single
item prophet secretary problem, though we must make use Lemma \ref{lem:fixed_vertex_probe_sim}
and constraint \eqref{eqn:relaxation_efficiency_matching} of \ref{LP:new}.

Let us now define the random variables $N_{u}$
and $M_{u,v}$ where:
\begin{enumerate}
	\item $N_{u} := \sum_{v \in V} (1 - e^{Y_{v} -1}) \cdot c_{u} \cdot \bm{1}_{[ \scr{M}(u) = v]}$,
	\item $M_{u,v} := \left( w_{u,v} - (1 - e^{Y_{v} -1}) \cdot c_{u} \right) \cdot \bm{1}_{[ \scr{M}(u) = v]}$.
\end{enumerate}
That is, if $u$ is matched to $v$, then $N_{u}$ is assigned the value $(1 - e^{Y_{v} -1}) \cdot c_{u}$
and $M_{u,v}$ is assigned $\left( w_{u,v} - (1 - e^{Y_{v} -1}) \cdot c_{u} \right)$.
Proving \eqref{eqn:approximate_contribution} thus reduces to showing that
\[
	\mb{E}[N_{u}] + \sum_{v \in V} \mb{E}[ M_{u,v}] \ge \left(1 - \frac{1}{e} \right) \cdot c_{u}.
\]
In other words, the cumulative amount assigned to $u$ and the vertices of $V$ is at least $\left(1 - 1/e \right) \cdot c_{u}$ in expectation\footnote{Ehsani et al. \cite{Ehsani2017} and Gamlath et al. \cite{Gamlath2019} provide a utility/revenue interpretation
of these variables.}.

We first focus on lower bounding $\mb{E}[N_{u}]$. Let us define $T_{u}$ as the arrival time of the online vertex which matches to $u$ (which is $1$, if no
such vertex exists). For convenience, define the functions $r=r(t), F=F(t)$ and $\alpha = \alpha(t)$, where for $t \in [0,1]$.
\[
	r(t) := \mb{P}[ T_{u} \ge t], \: F(t):= 1 - r(t), \: \text{and} \: \alpha(t):= 1 - e^{t-1}.
\]
Define $C(u,v)$ as the event in which $v$ commits to $u$
when $\textsc{VertexProbe-S}$ is executed using $v \in V$.
Observe that the events $\{C(u,v)\}_{v \in V}$ are independent,
as the executions of $\textsc{VertexProbe-S}$ in Algorithm \ref{alg:modified_known_stochastic_graph}
are themselves 
independent.

For each $v\in V$, let us now define the function $\psi_{u,v} = \psi_{u,v}(t)$
where $\psi_{u,v}(t):= \mb{P}[\text{$Y_{v} < t$ and $w_{u,v} \ge \alpha(t) \cdot c_{u}$}]$ for $t \in [0,1]$.
If we now apply Lemma \ref{lem:fixed_vertex_probe_sim},
and additionally make use of the independence of the variables $(Y_{v})_{v \in V}$,
then we may conclude that
\begin{align*}
	r(t) &= \mb{P}[ T_{u} \ge t] \\
		 &= 1 - \prod_{v \in V}(1 - \psi_{u,v}(t) \cdot \mb{P}[C(u,v)]) \\
		 &= 1 - \prod_{v \in V}(1 - \psi_{u,v}(t) \cdot \til{x}_{u,v})
\end{align*}
for each $t \in [0,1]$.

Thus, since the functions $(\psi_{u,v})_{v \in V}$ are clearly all continuously differentiable on
$(0,1)$, $r$ is continuously differentiable as well.
On the other hand, observe that $F$ is the c.d.f of $T_u$, so we have that
\begin{align*}
	\mb{E}[ N_{u}] &= \int_{0}^{1} \alpha(t) \cdot c_{u} \cdot dF(t) \\
				   &=\int_{0}^{1} \alpha(t) \cdot c_{u} \cdot F'(t) \, dt \\
				   &= - \int_{0}^{1} \alpha(t) \cdot c_{u} \cdot r'(t) \, dt
\end{align*}
Thus,
\[
	\mb{E}[ N_{u}]  = - \int_{0}^{1} \alpha(t) \cdot c_{u} \cdot r'(t) \, dt,
\] 
and so we may apply integration by parts to get that
\begin{align*}
	\int_{0}^{1} \alpha(t) \cdot c_{u} \cdot r'(t) \, dt &= c_{u} \, \left( [ r(t) \cdot \alpha(t)]_{t=0}^{1} - \int_{0}^{1} r(t) \cdot \alpha'(t) \, dt \right) \\
		&= c_{u} \left( (1 - 1/e) + \int_{0}^{1} r(t) \cdot \alpha'(t) \, dt \right).
\end{align*}
To conclude, 
\begin{equation}\label{eqn:revenue_lower_bound}
	\mb{E}[ N_{u}] = c_{u} \left( (1 - 1/e) + \int_{0}^{1} r(t) \cdot \alpha'(t) \, dt \right).
\end{equation}

Let us now focus on lower bounding $\sum_{v \in V} \mb{E}[ M_{u,v}]$.
First observe that if $0 \le t \le 1$ satisfies $w_{u,v} \ge \alpha(t) \cdot c_{u}$, then
\[
	\mb{E}[ M_{u,v} \, | \, Y_{v} = t] = p_{u,v} \, \til{x}_{u,v} \cdot (w_{u,v} - \alpha(t) \cdot c_{u} ) \cdot \mb{P}[T_{u} \ge t \, | \, Y_{v} =t],
\]
as $v$ is matched to $u$ with probability $p_{u,v} \, \til{x}_{u,v}$, given $u$ is unmatched at time $t$. Moreover,
if $0 \le t \le 1$ satisfies $w_{u,v} < \alpha(t) \cdot c_{u}$, then $\mb{E}[ M_{u,v} \, | \,  Y_{v} = t] = 0$.
Thus, for all $0 \le t \le 1$,
\begin{equation} \label{eqn:edge_lower_bound_utitlity}
	\mb{E}[ M_{u,v} \, | \, Y_{v} = t] \ge p_{u,v} \, \til{x}_{u,v} (w_{u,v} - \alpha(t) \cdot c_{u} ) \cdot \mb{P}[T_{u} \ge t \, | \, Y_{v} =t].
\end{equation}

On the other hand, it is clear that $\mb{P}[T_{u} \ge t \, | \, Y_{v} =t] \ge \mb{P}[T_{u} \ge t ]$.
Thus, after applying  \eqref{eqn:edge_lower_bound_utitlity} and observing $r(t) = \mb{P}[T_{u}  \ge t]$, we get that
\begin{align*}
	\sum_{v \in v} \mb{E}[ M_{u,v} \, |  \, Y_{v} = t] &\ge \sum_{v \in V} p_{u,v} \, \til{x}_{u,v} \cdot ( w_{u,v} - \alpha(t) \cdot c_{u} ) \cdot r(t) \\
												&= \left( c_{u} - \alpha(t) \cdot c_{u} \cdot \sum_{v \in V} p_{u,v} \, \til{x}_{u,v} \right) \cdot r(t).
\end{align*}
Now, we know that $\sum_{v \in V} p_{u,v} \, \til{x}_{u,v} \le 1$ by constraint \eqref{eqn:relaxation_efficiency_matching} of \ref{LP:new}.
Thus,
\[
	\sum_{v \in v} \mb{E}[ M_{u,v} \, | \, Y_{v} = t] \ge  c_{u} \cdot (1 - \alpha(t))  \cdot r(t),
\]
and so since the random variables $(Y_{v})_{v \in V}$ are uniformly distributed, we get that
\[
	\sum_{v \in v} \mb{E}[ M_{u,v} ] \ge  c_{u} \cdot \int_{0}^{1} (1 - \alpha(t))  \cdot r(t) \, dt.
\]
By combining this equation with \eqref{eqn:revenue_lower_bound}, it follows that
\begin{align*}
	\mb{E}[ N_{u}] + \sum_{v \in v} \mb{E}[ M_{u,v} ]  &\ge c_{u} \cdot \left( (1 - 1/e) + \int_{0}^{1} r(t) \cdot \alpha'(t) \, dt \right) + c_{u} \cdot \int_{0}^{1} (1 - \alpha(t))  \cdot r(t) \, dt   \\		
							&= c_{u} \cdot (1 - 1/e) + c_{u} \, \int_{0}^{1} r(t) \cdot (1 - \alpha(t) + \alpha'(t)) \, dt \\
							&= (1 -1 /e) \cdot c_{u}, 													   
\end{align*}	
as $1 - \alpha(t) + \alpha'(t) = 0$ for all $t \in [0,1]$.

As this argument holds for each $u \in U$, the proof is complete.

\end{proof}

\section{Online Stochastic Matching in the Known I.I.D. Model} \label{sec:stochastic_known_iid}
\label{iid-patience1}
In this section, we consider a generalization of the classical known i.i.d. matching problem (introduced in Feldman et al. \cite{FeldmanMMM09}) to the stochastic matching setting (as first studied in Bansal et al. \cite{BansalGLMNR12}). Once we review the relevant framework and terminology, we introduce an online probing algorithm which achieves a competitive ratio of $1-1/e$ for arbitrary patience and edge weights, thereby proving
Theorem \ref{thm:iid}. Our algorithm generalizes the algorithm of \cite{BrubachSSX16}
to arbitrary patience, in which Brubach et al. proved a competitive ratio of $1-1/e$ for the unit patience setting. This allows us to
improve upon  the previously best known competitive ratio of $0.46$ for the case of arbitrary patience,
as presented in \cite{BrubachSSX20}.

We again note that our positive results from Section \ref{sec:known_stochastic} do \textit{not} seem to 
immediately imply the same positive results in the known i.i.d. stochastic matching problem, as we explain in more detail
after reviewing the model.

\subsection{The Known I.I.D. Stochastic Setting}
Let us suppose that $G = (U,V,E)$ is a stochastic graph with edges weights $(w_{e})_{e \in E}$, edge probabilities $(p_{e})_{e \in E}$ and offline patience values $(\ell_{v})_{v \in V}$ associated with it. In the known i.i.d. setting, we refer to $G$ as a \textbf{stochastic type graph} (or \textbf{type graph} when clear),
and the vertices of $V$ as the \textbf{type nodes} of $G$.

Now, fix a parameter $n \ge 1$ (which need not be equal to $|V|$), indicating the number of \textbf{rounds} or \textbf{arrivals} to occur. Moreover, consider $\bm{r}=(r_{v})_{v \in V}$, where $r_v > 0$ for each $v \in V$, and $\sum_{v \in V} r_{v} = n$. We refer to $r_{v}$ as the (fractional) \textbf{arrival rate} of type node $v \in V$. An input to the stochastic known i.i.d. matching problem then consists of the tuple $(G, \bm{r}, n)$, which
we refer to as a \textbf{known i.i.d. input} with \text{fractional arrival rates}.

An \textbf{online probing algorithm} $\scr{A}$ is given access to $(G, \bm{r},n)$ as part of its input. For each $t=1, \ldots ,n$, \textit{vertex arrival} $v_{t} \in V$ is drawn independently in round $t$ using the distribution $\bm{r}/n$, at which point $v_t$ is said to be of type $v \in V$, provided
$v_{t}=v$. We emphasize that the edge states of $v_t$ are statistically independent from
the edge states of all the previously drawn nodes (even if $v_t$ is not the first vertex of
type $v$ to arrive).

Using all past available information regarding the outcomes of the probes involving $v_{1}, \ldots ,v_{t-1}$, together with the edge probabilities $(p_{u,v_t})_{u \in U}$, weights $(w_{u,v_t})_{u \in U}$ and patience value $\ell_{v_t}$, $\scr{A}$ may
probe up to $\ell_{v_t}$ edges adjacent to $v_t$. The algorithm is again restricted by commitment, in that $v_{t}$ may only be matched
to the first $u \in U$ for which the probe to $(u,v_{t})$ confirms that the edge is active.
%

Observe that while the type graph $(G, \bm{r},n)$ is passed as input to $\scr{A}$, the stochastic graph $\scr{A}$ actually executes on is in fact randomly generated, and unknown to $\scr{A}$. Let us denote this
(random) stochastic graph by $\hat{G}=(U, \hat{V}, \hat{E})$. Here,
$\hat{V}$ consists of the random arrival nodes of $V$ presented to the algorithm, and
$\hat{E}$ includes all the relevant edges between  $U$ and $\hat{V}$ (since the same node
from $V$ can arrive multiple times, $\hat{V}$ and $\hat{E}$ are multisets). We assume that $\hat{G}$
also encodes all the edge weights, probabilities and patience values induced from the arrival
nodes of $\hat{V}$.

We refer to $\hat{G}$ as the \textbf{instantiated stochastic graph} or simply the \textbf{instantiated graph} when clear.
Observe that since $(G ,\bm{r}, n)$ encodes the distribution of $\hat{G}$, we say that $\hat{G}$ is distributed according to the known i.i.d. input $(G, \bm{r},n)$, which we denote by $\hat{G} \sim (G, \bm{r},n)$.

Denote $\val( \scr{A}( \hat{G}))$ as the value of the matching $\scr{A}$ constructs when passed the instantiated graph $\hat{G}$. Our performance measure for $\scr{A}$ then involves averaging over
all the possible instantiations of $\hat{G}$. Specifically, we wish to maximize
\[
	\mb{E}[ \val(\scr{A}(\hat{G}))],
\]
where the expectation is over the randomness in drawing $\hat{G}$ from $(G,\bm{r},n)$, together with the inherent randomness in the states of the edges of $\hat{G}$, as well as any randomized decisions $\scr{A}$ may make.

For each randomly drawn $\hat{G} \sim (G,\bm{r},n)$, we can consider
the committal benchmark, and the evaluation it takes on $\hat{G}$,
namely $\OPT(\hat{G})$. This yields a committal probing strategy,
which we refer to as the \textbf{committal benchmark} for the stochastic type
graph $(G,\bm{r},n)$. We denote the expected performance of the committal benchmark
by $\OPT(G,\bm{r},n)$. Observe that
\[
	\OPT(G,\bm{r},n)= \mb{E}[ \OPT(\hat{G})], 
\]
where the expectation is over the randomness in generating $\hat{G}$. 
We can define the \textbf{non-committal benchmark} for the stochastic
type graph $(G,\bm{r},n)$ analogously, which we denote by $\OPT_{non}(G,\bm{r},n)$.

The standard in the literature (see \cite{Adamczyk15, BansalGLMNR12,BrubachSSX20}) is to prove
competitive ratios against the committal benchmark. More precisely, the
goal is to find an online probing algorithm $\scr{A}$ for which the (strict) competitive ratio
\[
	\inf_{(G, \bm{r}, n)} \frac{ \mb{E}[ \val(\scr{A}(\hat{G})) ]}{ \OPT(G, \bm{r},n)}
\]
is as close to $1$ as possible.

Before continuing, we emphasize that there does \textit{not} seem
to be an obvious reduction from the known i.i.d. stochastic matching problem
to the known stochastic matching problem with ROM arrivals. Specifically,
suppose we are presented an online probing algorithm $\scr{A}$ which achieves
competitive ratio $0 <c \le 1 $ in the known stochastic matching problem with ROM arrivals.
In this case, let us now fix a stochastic type graph $(G,\bm{r},n)$, and imagine
trying to use $\scr{A}$ to design a probing algorithm for the i.i.d. matching problem.
If we consider the instantiated graph $\hat{G}$ drawn from $(G,\bm{r},n)$,
then the online vertices of $\hat{G}$ will indeed be presented to $\scr{A}$ in
a random order. That being said, in order for $\scr{A}$
to attain to attain a competitive guarantee of $c \cdot \OPT(\hat{G})$, it needs to be presented the entire description of $\hat{G}$ as well. However, an online probing algorithm in the known i.i.d. setting is only given
access to the type graph, $(G,\bm{r},n)$, \textit{not} the instantiated graph $\hat{G}$.
Moreover, $\hat{G}=(U,\hat{V}, \hat{E})$ may be a very different stochastic graph than $G$; for instance,
type node $v \in V$ may appear multiple times in $\hat{V}$, or perhaps not at all.
As such, it is unclear how to modify $\scr{A}$ to obtain the same competitive ratio of $c$
against $\OPT(G, \bm{r}, n)$.

\subsection{Defining an LP Relaxation}
Given an input $(G, \bm{r}, n)$ to the known i.i.d. matching problem, it is challenging to directly compare the performance of an online probing algorithm to that of the committal benchmark; that is, the value $\OPT(G,\bm{r},n)$. Instead, we once again focus on LP based approaches for upper bounding this quantity.

Let us now review the LP introduced in \cite{BansalGLMNR12,BrubachSSX20}, as defined for $(G, \bm{r},n)$,
specialized to the case of one-sided patience.

\begin{align}\label{LP:standard_definition_iid}
\tag{LP-std-iid}
&\text{maximize} & \sum_{u \in V, v \in V} w_{u,v} \, p_{u,v} \, y_{u,v}  \\
&\text{subject to} & \sum_{v \in V} p_{u, v} \, y_{u,v} & \leq 1 && \forall u \in U \\
& & \sum_{u \in U} p_{u,v} \, y_{u,v} & \le r_v && \forall v \in V \\
& & \sum_{u \in U} y_{u, v} & \leq r_v \cdot \ell_v && \forall v \in V \\
& & 0 \le  y_{u, v} &  \le r_v  && \forall u \in U, v \in V
\end{align}

If $\LPOPT_{std-iid}( G, \bm{r}, n)$ denotes the value of the optimal solution to \ref{LP:standard_definition_iid}, then
it was shown by Bansal et al. \cite{BansalGLMNR12} to be a relaxation of the committal benchmark; that is, 
\[
	\OPT(G,\bm{r},n) \le \LPOPT_{std-iid}( G, \bm{r}, n).
\]
Unfortunately, \ref{LP:standard_definition_iid} suffers the same issues as \ref{LP:standard_definition}, as Example \ref{example:stochastic_gap} continues to apply, as can be seen by setting $r_{v}=1$ for $v \in V$
and $n= |V|$. As such, we introduce a new LP for $(G, \bm{r},n)$, using the same ideas as in the derivation of \ref{LP:new}. 
The essential difference in this LP being that we incorporate the
arrival rates of $(G, \bm{r},n)$, as can be seen below in constraint \eqref{eqn:relaxation_efficiency_distribution_iid}.

\begin{align}\label{LP:new_iid}
\tag{LP-new-iid}
&\text{maximize} &  \sum_{v \in V} \sum_{\bm{u} \in U^{( \le \ell_{v})}} \left( \sum_{i=1}^{|\bm{u}|} w_{u_i,v} \, g^{i}_{v}(\bm{u}) \right) y_{v}(\bm{u}) \\
&\text{subject to} & \sum_{v \in V} \sum_{i=1}^{\ell_v}\sum_{\substack{ \bm{u}^* \in U^{ ( \le \ell_{v})} \\ u_{i}^*=u}} 
g_{v}^{i}(\bm{u}^*) \, y_{v}( \bm{u}^*)  \leq 1 && \forall u \in U  \label{eqn:relaxation_efficiency_matching_iid}\\
&& \sum_{\bm{u} \in U^{( \le \ell_{v})}} y_{v}(\bm{u}) \le r_v && \forall v \in V,  \label{eqn:relaxation_efficiency_distribution_iid} \\
&&y_{v}( \bm{u}) \ge 0 && \forall v \in V, \bm{u} \in U^{(\le \ell_v)}
\end{align}
Given a feasible solution to \ref{LP:new_iid}, say $(y_{v}(\bm{u})_{v \in V, \bm{u} \in U^{( \le \ell_v)}}$ we define the \textbf{induced edge variables}, hereby denoted $(\til{y}_{u,v})_{u \in U, v \in V}$, as in the case of \ref{LP:new};
that is,
\[
	\til{y}_{u,v} := \sum_{i=1}^{\ell_v} \sum_{\substack{\bm{u}^{*} \in U^{ (\le \ell_v)}: \\ u_{i}^{*} = u}} \frac{g_{v}^{i}(\bm{u}^*) \, y_{v}(\bm{u})}{p_{u,v}}. 
\]

Let us now denote $\LPOPT_{new-iid}(G, \bm{r},n)$ as the value of an optimal solution to \ref{LP:new_iid}.
We claim that \ref{LP:new_iid} is a relaxation of the committal benchmark.
This follows from a conditioning argument involving an application
of Theorem \ref{thm:new_LP_relaxation} to $\hat{G} \sim (G, \bm{r},n)$, so we defer the details
to Appendix \ref{appendix:deferred_proofs}. We remark that the techniques used in this
proof constitute a general method for extending LP relaxations to the stochastic known
i.i.d. setting, and so they may be of independent interest.

\begin{lemma}\label{lem:LP_validity_known_iid}
For any input $(G, \bm{r}, n)$ of the known i.i.d stochastic matching problem,
\[
	\OPT(G,\bm{r},n) \le \LPOPT_{new-iid}(G, \bm{r},n).
\]
\end{lemma}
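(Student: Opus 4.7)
The plan is to construct, by averaging, a feasible solution to \ref{LP:new_iid} whose objective value upper bounds $\OPT(G,\bm{r},n)$, relying on Theorem~\ref{thm:new_LP_relaxation} applied to the random instantiated graph $\hat{G} \sim (G,\bm{r},n)$. For each possible realization of $\hat{G}=(U,\hat{V},\hat{E})$, let $(x_{\hat{v}}(\bm{u}))_{\hat{v} \in \hat{V}, \bm{u} \in U^{(\le \ell_{\hat{v}})}}$ denote an optimal solution to \ref{LP:new} for $\hat{G}$ (fixing some measurable tie-breaking rule so the resulting family is a valid random variable). Theorem~\ref{thm:new_LP_relaxation} gives $\LPOPT_{new}(\hat{G}) \ge \OPT(\hat{G})$, and by definition of the committal benchmark, $\mb{E}[\OPT(\hat{G})] = \OPT(G,\bm{r},n)$.

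Next, I would define a candidate solution to \ref{LP:new_iid} by setting
\[
y_{v}(\bm{u}) := \mb{E}\!\left[\sum_{\hat{v} \in \hat{V}:\, \text{type}(\hat{v})=v} x_{\hat{v}}(\bm{u})\right]
\]
for every $v \in V$ and $\bm{u} \in U^{(\le \ell_v)}$, where the expectation is over the randomness of $\hat{G}$. Non-negativity is immediate. For constraint \eqref{eqn:relaxation_efficiency_matching_iid}, fix $u \in U$ and substitute the definition of $y_v(\bm{u})$; by linearity of expectation and reindexing the double sum over $(v, \hat{v})$ as a single sum over $\hat{v} \in \hat{V}$, the left-hand side becomes
\[
\mb{E}\!\left[\sum_{\hat{v} \in \hat{V}} \sum_{i=1}^{\ell_{\hat{v}}} \sum_{\bm{u}^*:\, u_i^*=u} g_{\hat{v}}^{i}(\bm{u}^*) \, x_{\hat{v}}(\bm{u}^*)\right] \le \mb{E}[1] = 1,
\]
since $x$ is feasible for \ref{LP:new} on $\hat{G}$. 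For constraint \eqref{eqn:relaxation_efficiency_distribution_iid}, summing over $\bm{u}$ and using $\sum_{\bm{u}} x_{\hat{v}}(\bm{u}) \le 1$ for each $\hat{v}$ shows that $\sum_{\bm{u}} y_v(\bm{u})$ is bounded by the expected number of arrivals of type $v$, which is exactly $n \cdot (r_v/n) = r_v$.

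Finally, I would compute the objective value of $y$: by the same reindexing,
\[
\sum_{v \in V} \sum_{\bm{u}} \left(\sum_{i=1}^{|\bm{u}|} w_{u_i,v} \, g_v^{i}(\bm{u})\right) y_v(\bm{u}) = \mb{E}\!\left[\sum_{\hat{v} \in \hat{V}} \sum_{\bm{u}} \left(\sum_{i=1}^{|\bm{u}|} w_{u_i,\hat{v}} \, g_{\hat{v}}^{i}(\bm{u})\right) x_{\hat{v}}(\bm{u})\right] = \mb{E}[\LPOPT_{new}(\hat{G})],
\]
which is at least $\mb{E}[\OPT(\hat{G})] = \OPT(G,\bm{r},n)$. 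Since $y$ is feasible, this gives $\LPOPT_{new-iid}(G,\bm{r},n) \ge \OPT(G,\bm{r},n)$, as required.

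The main subtlety is purely notational: $\hat{V}$ is a random multiset, so one must be careful distinguishing arriving vertices $\hat{v} \in \hat{V}$ from their types in $V$, and ensure that the chosen optimal $(x_{\hat{v}}(\bm{u}))$ is a legitimate random variable (measurable in $\hat{G}$). Once these bookkeeping issues are set up, all three LP checks reduce to linearity of expectation plus feasibility of $x$ on each realization of $\hat{G}$, so no nontrivial combinatorial work is required beyond the reduction itself.
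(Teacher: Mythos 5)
Your proposal is correct and follows essentially the same route as the paper's proof: the paper defines random variables $X_t(\bm{u}) = x_{v_t}(\bm{u})$ from an optimum \ref{LP:new} solution on $\hat{G}$, sets $y_v(\bm{u}) := \sum_{t=1}^n \mb{E}[X_t(\bm{u})\,\bm{1}_{[v_t = v]}]$ (which is precisely your $\mb{E}\bigl[\sum_{\hat{v}\in\hat{V}:\,\text{type}(\hat{v})=v} x_{\hat{v}}(\bm{u})\bigr]$), and verifies feasibility and the objective bound by linearity of expectation and Theorem~\ref{thm:new_LP_relaxation}, exactly as you do. The only differences are notational (index-$t$ bookkeeping vs.\ your multiset formulation), and you correctly flagged the measurability of the optimal-solution selection as the one subtlety to nail down.
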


The procedure for solving \ref{LP:new_iid} efficiently follows the same approach
as that of \ref{LP:new}, so we defer the details to Appendix \ref{appendix:efficient_probing_algorithms}.

\subsection{Defining a Known I.I.D. Probing Algorithm}
We now consider an online probing algorithm for the known i.i.d. stochastic matching problem,
which generalizes the unit patience probing algorithm of Brubach et al. \cite{BrubachSSX16}. 

Given $(G, \bm{r},n)$, suppose that we consider a feasible
solution to \ref{LP:new_iid}, which we denote by $(y_{v}( \bm{u}))_{v \in V, \bm{u} \in U^{(\le \ell_v)}}$.
If we fix $v \in V$, then the values $(y_{v}(\bm{u})/ r_v)_{\bm{u} \in U^{( \le \ell_v)}}$
satisfy,
\begin{equation}
	\sum_{ \bm{u} \in U^{( \le \ell_v)}} \frac{ y_{v}(\bm{u})}{r_v} \le 1,
\end{equation}
as a result of constraint \eqref{eqn:relaxation_efficiency_distribution_iid}.
As such, given the input $(U, (y_{v}(\bm{u})/ r_v)_{\bm{u} \in U^{( \le \ell_v)}}, v)$
for a fixed $v \in V$, we can execute \textsc{VertexProbe}. In particular, we can apply Lemma \ref{lem:fixed_vertex_probe} in the
known i.i.d. setting to get the following lemma:

\begin{lemma}\label{lem:commitment_probability}
Fix $u \in U$ and $v \in V$. For each $t=1, \ldots ,n$, denote
$C(u,v_t)$ as the event in which Algorithm \ref{alg:known_iid} commits $v_t$ to $u$ in one of
its probes. In this case,
\[
	\mb{P}[ C(u,v_t) \, | \, v_t = v] = \frac{\til{y}_{u,v} \, p_{u,v}}{r_v},
\]

where $(\til{y}_{u,v})_{u \in U, v \in V}$ are the induced edge variables
of the solution $(y_{v}(\bm{u}))_{v \in V, \bm{u} \in U^{( \le \ell_v)}}$.

\end{lemma}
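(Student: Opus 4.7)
The plan is to reduce the claim directly to Lemma \ref{lem:fixed_vertex_probe} via conditioning on the type of the arrival $v_t$. From the discussion preceding the lemma statement, Algorithm \ref{alg:known_iid} is set up so that upon seeing an arrival of type $v$ it invokes \textsc{VertexProbe} on the rescaled distribution $(y_{v}(\bm{u})/r_v)_{\bm{u} \in U^{(\le \ell_v)}}$, which is a valid sub-probability distribution over $U^{(\le \ell_v)}$ by constraint \eqref{eqn:relaxation_efficiency_distribution_iid} of \ref{LP:new_iid}. So the statement is really about one isolated call to \textsc{VertexProbe} on a specific input.

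First, I would condition on the event $\{v_t = v\}$. Under this conditioning, the execution of \textsc{VertexProbe} during round $t$ depends only on the rescaled input $(y_{v}(\bm{u})/r_v)_{\bm{u} \in U^{(\le \ell_v)}}$ together with the internal randomness of \textsc{VertexProbe} and the edge states $(\st(u,v_t))_{u \in U}$. By the i.i.d. model, these last two sources of randomness are independent both of the type assignment $v_t = v$ and of everything that occurred in rounds $1,\ldots,t-1$. Thus Lemma \ref{lem:fixed_vertex_probe} applies verbatim to this single call, treating the vector $(y_{v}(\bm{u})/r_v)_{\bm{u}}$ as the LP slice at the ``fixed'' node $v$.

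Second, I would compute the induced edge variable corresponding to this rescaled slice. Directly from the definition of induced edge variables given immediately after \ref{LP:new_iid}, rescaling every $y_{v}(\bm{u})$ by a common factor $1/r_v$ rescales the induced edge variable by the same factor, so the induced edge variable of $u$ with respect to $(y_{v}(\bm{u})/r_v)_{\bm{u}}$ equals $\til{y}_{u,v}/r_v$. Lemma \ref{lem:fixed_vertex_probe} then yields
\[
	\mb{P}[C(u,v_t) \, | \, v_t = v] \;=\; p_{u,v} \cdot \frac{\til{y}_{u,v}}{r_v},
\]
which is exactly the claim.

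There is no real obstacle here; the lemma is essentially a bookkeeping corollary of Lemma \ref{lem:fixed_vertex_probe}. The only step worth stating carefully is the independence observation that conditioning on the type identity $\{v_t = v\}$ does not perturb either the internal randomness of \textsc{VertexProbe} or the edge states of $v_t$, since in the i.i.d. arrival model the type of $v_t$ is drawn independently of both.
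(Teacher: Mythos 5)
Your proof is correct and follows essentially the same route as the paper's: condition on $\{v_t = v\}$, recognize that the algorithm's action is then a single call to \textsc{VertexProbe} on the rescaled slice $(y_v(\bm{u})/r_v)_{\bm{u}}$, and invoke Lemma \ref{lem:fixed_vertex_probe}. Your additional remark that the induced edge variable of the rescaled slice is $\til{y}_{u,v}/r_v$ is a small but useful piece of bookkeeping that the paper leaves implicit.
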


\begin{proof}

When Algorithm \ref{alg:known_iid} processes $v_{t}$, it 
executes $\textsc{VertexProbe}(U, (y_{v_t}( \bm{u}^*)/r_{v_t})_{\bm{u}\in U^{( \le \ell_{v_t})}}, v_t)$.
If we condition on the event in which $v_{t}=v$, then this corresponds to executing
\textsc{VertexProbe} using the input $(U, (y_{v}( \bm{u}^*)/r_{v})_{\bm{u}\in U^{( \le \ell_{v})}}, v)$.
As a result, an application of Lemma \ref{lem:fixed_vertex_probe} ensures that
\[
	\mb{P}[ C(u,v_t) \, | \, v_t = v] = \frac{\til{y}_{u,v} \, p_{u,v}}{r_v},
\]
thus completing the proof.

\end{proof}

We now adapt Algorithm \ref{alg:known_stochastic_graph} to the known i.i.d. setting,
leading to the following algorithm:

\begin{algorithm} 
\caption{Known I.I.D.}\label{alg:known_iid}
\begin{algorithmic}[1] 
\Statex Input $G=(U,V,E)$, an arbitrary stochastic type graph. 
\Statex Input $n \ge 1$, the number of arriving vertices, and the arrivals rates of $V$, $\bm{r}=(r_{v})_{v \in V}$.

\State Set $\scr{M} \leftarrow \emptyset$.

\State Solve \ref{LP:new_iid}, and find an optimal solution $(y_{v}( \bm{u}))_{v \in V, \bm{u} \in U^{(\le \ell_v)}}$.
\For{$t=1, \ldots , n$}
\State Let $v_t$ be the vertex that arrives at time t.
\State Identify the type of $v_t$ in $V$, and the corresponding values $(y_{v_t}( \bm{u}^*)/r_{v_t})_{\bm{u}^* \in U^{( \le \ell_{v_t})}}$
\State Set $(u_{t},v_t) \leftarrow \textsc{VertexProbe}(U, (y_{v_t}( \bm{u}^*)/r_{v_t})_{\bm{u}^*\in U^{( \le \ell_{v_t})}}, v_t)$.
\If{$(u_t,v_t) \neq \emptyset$ and $u_t$ is unmatched}
\State Set $\scr{M}(v_t) = u_t$.
\EndIf
\EndFor
\State Return $\scr{M}$.
\end{algorithmic}
\end{algorithm}

\begin{theorem}
\label{thm:iid}
Algorithm \ref{alg:known_iid} achieves a competitive ratio of $1-1/e$ against
the committal benchmark, for arbitrary edge weights and patience values.
\end{theorem}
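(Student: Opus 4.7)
Proof proposal for Theorem~\ref{thm:iid}:

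My plan is to analyze the performance of Algorithm~\ref{alg:known_iid} by tracking, for each offline vertex $u \in U$ and each type $v \in V$, the probability that $u$ is matched to an arrival of type $v$. Let $(y_{v}(\bm{u}))_{v \in V, \bm{u} \in U^{(\le \ell_v)}}$ be the optimal \ref{LP:new_iid} solution used by the algorithm, with induced edge variables $(\til{y}_{u,v})$. Define $q_u := \sum_{v \in V} p_{u,v}\, \til{y}_{u,v}$, and note that $q_u \le 1$ by constraint \eqref{eqn:relaxation_efficiency_matching_iid}.

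For each round $t \in [n]$, let $C(u,v_t)$ denote the event that \textsc{VertexProbe} commits to $u$ in round $t$. The first key step is to marginalize over the random type of $v_t$. Using Lemma~\ref{lem:commitment_probability} together with the fact that $\mb{P}[v_t = v] = r_v/n$, I get
\[
\mb{P}[C(u,v_t)] \;=\; \sum_{v \in V} \frac{r_v}{n}\cdot\frac{\til{y}_{u,v}\,p_{u,v}}{r_v} \;=\; \frac{q_u}{n},
\]
and, similarly, $\mb{P}[v_t = v, C(u,v_t)] = \til{y}_{u,v}\,p_{u,v}/n$. Because the arrivals are i.i.d.\ and \textsc{VertexProbe} uses fresh randomness each round, the commitment events $\{C(u,v_s)\}_{s=1}^n$ are mutually independent.

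The second step is to observe that $u$ is unmatched at the start of round $t$ if and only if no commitment to $u$ occurred in rounds $1, \ldots, t-1$; thus $\mb{P}[u \text{ unmatched at time } t] = (1-q_u/n)^{t-1}$, and this event is independent of the behavior of round $t$. Letting $\scr{M}(u,v)$ be the event that $u$ gets matched to an arrival of type $v$, I can therefore compute
\[
\mb{P}[\scr{M}(u,v)] \;=\; \sum_{t=1}^{n} (1-q_u/n)^{t-1}\cdot\frac{\til{y}_{u,v}\,p_{u,v}}{n} \;=\; \frac{\til{y}_{u,v}\,p_{u,v}}{q_u}\bigl(1 - (1-q_u/n)^n\bigr).
\]
Using the standard inequalities $(1-q_u/n)^n \le e^{-q_u}$ and $(1-e^{-z})/z \ge 1-1/e$ for $0 \le z \le 1$ (which applies since $q_u \le 1$), this yields $\mb{P}[\scr{M}(u,v)] \ge (1-1/e)\,\til{y}_{u,v}\,p_{u,v}$.

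Summing over all $u$ and $v$ and weighting by $w_{u,v}$,
\[
\mb{E}[\val(\scr{M})] \;\ge\; \Bigl(1-\tfrac{1}{e}\Bigr)\sum_{u \in U, v \in V} w_{u,v}\,p_{u,v}\,\til{y}_{u,v} \;=\; \Bigl(1-\tfrac{1}{e}\Bigr)\,\LPOPT_{new\text{-}iid}(G,\bm{r},n),
\]
and Lemma~\ref{lem:LP_validity_known_iid} then gives $\mb{E}[\val(\scr{M})] \ge (1-1/e)\,\OPT(G,\bm{r},n)$. The main conceptual obstacle is ensuring that the probability that $u$ is unmatched at time $t$ factors cleanly — this requires being careful about the distinction between a commitment occurring and a match actually being recorded, and relies crucially on the fact that in the i.i.d.\ model the per-round commitment probability is the same $q_u/n$, which is what unlocks the clean geometric-sum/$(1-1/e)$ bound.
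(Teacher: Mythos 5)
Your proof is correct and follows essentially the same approach as the paper's: marginalize over the random type, use per-round independence of the commitment events to obtain geometric decay of the probability that $u$ is still unmatched, sum the resulting series, and invoke Lemma~\ref{lem:LP_validity_known_iid}. The only (cosmetic) difference is that you keep the exact per-round commitment probability $q_u/n$ and close with the inequality $(1-e^{-z})/z \ge 1-1/e$ on $[0,1]$, whereas the paper first upper-bounds $\mb{P}[C(u,v_k)]$ by $1/n$ and then uses $1-(1-1/n)^n \ge 1-1/e$.
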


\begin{proof}

Let us fix $u \in U$ and $v \in V$, where $G=(U,V,E)$. While Algorithm \ref{alg:known_iid} executes
on the instantiated graph $\hat{G}=(U,\hat{V},\hat{E})$, let us say that the algorithm
matches the edge $e=(u,v) \in E$, provided there exists some $1 \le t \le n$
for which $v_t = v$ and $\scr{M}(v_t) = u$ (here $v_{1}, \ldots ,v_{n}$ are the ordered
arrivals of the vertices of $\hat{V}$). Observe then that
\[
	\mb{E}[ \val( \scr{M})] = \sum_{e \in E} w_{e} \, \mb{P}[\text{$e$ is matched}].
\]
As such, we focus on lower bounding $\mb{P}[\text{$e$ is matched}]$
for each $e \in E$.

Observe now that
\[
	\mb{P}[\text{$e$ is matched}] = \sum_{t=1}^{n} \mb{P}[ \scr{M}(v_t)=u \, | \, v_{t} =v ] \cdot \mb{P}[v_t =v]. 
\]
Moreover, if $R_{t} \subseteq U$ denotes the unmatched vertices of $U$ after
vertices $v_{1}, \ldots ,v_{t-1}$ arrive, then

\begin{align*}
	\mb{P}[ \scr{M}(v_t)=u \, | \, v_{t} =v ] &= \mb{P}[ C(u,v_t) \cap \{u \in R_t\} \, | \, v_{t} =v ]	\\
										&= \mb{P}[C(u,v_t) \, | \, v_{t} =v] \cdot \mb{P}[ u \in R_{t} \, | \,v_{t} = v],
\end{align*}
as the events $C(u,v_t)$ and $u \in R_{t}$ are conditionally independent given $v_{t}=v$,
since the algorithm decides upon the probes of $v_t$ independently from those of $v_{1}, \ldots ,v_{t-1}$.

Moreover, the event $u \in R_{t}$ can be determined from the probes of the vertices $v_{1}, \ldots ,v_{t-1}$,
and is therefore independent from the event $v_{t}=v$. Thus,
\[
	\mb{P}[ \scr{M}(v_t)=u \, | \, v_{t} =v ] = \mb{P}[C(u,v_t) \, | \, v_{t} =v] \cdot \mb{P}[ u \in R_{t}],
\]
and so
\[
	\mb{P}[ \scr{M}(v_t)=u \, | \, v_{t} =v ] = \til{y}_{u,v} \, p_{u,v} \, \mb{P}[ u \in R_{t}],
\]
after applying Lemma \ref{lem:commitment_probability}.

It suffices to lower bound $\mb{P}[ u \in R_{t}]$.
Observe that for each $k=1, \ldots ,n-1$,
\[
	\mb{P}[ u \in R_{k+1}] = \mb{P}[ \cap_{j=1}^{k}  \neg C(u,v_j) ] = \mb{P}[ \neg C(u,v_{k})] \cdot \mb{P}[ u \in R_{k}]
\]
as the probes of $v_k$ are drawn independently from those of $v_{1}, \ldots ,v_{k-1}$.

Yet,
\begin{align*}
	\mb{P}[ C(u,v_k)] &= \sum_{v \in V} \mb{P}[  C(u,v_k) \, | \, v_{k} = v] \cdot \mb{P}[v_{k}=v]	\\
						   &= \sum_{v \in V} \frac{\til{y}_{u,v} \, p_{u,v}}{r_v} \, \frac{r_{v}}{n} \\
						   &= \sum_{v \in V} \frac{\til{y}_{u,v} \, p_{u,v}}{n} \\
						   &\le \frac{1}{n},
\end{align*}
by Lemma \ref{lem:commitment_probability} and the constraints of \ref{LP:new_iid}.
Thus,
\begin{equation}
	\mb{P}[ u \in R_{t}] \ge \left(1 - \frac{1}{n} \right)^{t-1}
\end{equation}
after applying the above recursion.

As a result,
\[
	\mb{P}[ \scr{M}(v_t)=u \, | \, v_{t} =v ] \ge p_{u,v} \, \til{y}_{u,v} \left(1 - \frac{1}{n} \right)^{t-1},
\]
and so
\begin{align*}
	\mb{P}[ \text{$(u,v)$ is matched}] &= \sum_{t=1}^{n} \mb{P}[ \scr{M}(v_t)=u \, | \, v_{t} =v ] \cdot \mb{P}[v_t =v] \\
								   &\ge \sum_{t=1}^{n} \frac{\til{y}_{u,v} \, p_{u,v}}{r_{v}} 
								   \left(1 - \frac{1}{n} \right)^{t-1} \frac{r_v}{n} \\
								   &= \sum_{t=1}^{n} \left(1 - \frac{1}{n} \right)^{t-1} \frac{\til{y}_{u,v} \, p_{u,v}}{n}.
\end{align*}
Now, $\sum_{t=1}^{n}\frac{1}{n}\left(1 - \frac{1}{n} \right)^{t-1} \ge 1- \frac{1}{e}$,
so
\[
	\mb{P}[ \text{$(u,v)$ is matched}] \ge \left(1 - \frac{1}{e} \right) \til{y}_{u,v} \, p_{u,v}
\]
for each $u \in U, v \in V$. As such
\[
	\mb{E}[ \val( \scr{M})] \ge \sum_{u \in U, v \in V} w_{u,v} \, \til{y}_{u,v} \, p_{u,v} \left(1 - \frac{1}{e} \right).
\]
Since $(y_{v}( \bm{u}))_{v \in V, \bm{u} \in U^{(\le \ell_v)}}$ is an optimum solution to \ref{LP:new_iid},
the algorithm is $1 - 1/e$ competitive by Lemma \ref{lem:LP_validity_known_iid}, thus completing the proof.
\end{proof}

\section{Online Stochastic Matching with ROM Arrivals: The Case of an Unknown Stochastic Graph} \label{sec:unknown_edge}

In this section, we consider the unknown stochastic matching problem in
the setting of arbitrary edge weights. Specifically, we
employ the LP based techniques of the previous section to design a randomized
probing algorithm which generalizes the approach of Kesselheim et al. \cite{KRTV2013}. As in \cite{KRTV2013}, we make the added assumption that the number of vertex arrivals
is known to the online probing algorithm ahead of time. We are then able
to prove a best possible asymptotic competitive ratio of $1/e$,
though unlike the work of Kesselheim et al. \cite{KRTV2013}, our online algorithm requires randomization.

Let us suppose that $G=(U,V,E)$ is a stochastic graph with arbitrary edge weights, probabilities
and patience values. We assume that $n:=|V|$, and that the online nodes of $V$ are denoted
$v_{1}, \ldots ,v_{n}$, where the order is generated uniformly at random.

Since $G$ is unknown to us in the current setting, we cannot directly solve \ref{LP:new} to define a probing algorithm. As such, we must adjust which LP we attempt to solve.

Let us suppose that $S$ is a non-empty subset of the nodes of $V$. We can then denote $G[S]$ as the \textbf{induced stochastic graph} of $G$ on $S$. This is constructed by taking the induced graph of $G$ on the partite sets $U$ and $S$, and restricting the edge weights and probabilities to $(p_{u,s})_{u \in U, s \in S}$ and $(w_{u,s})_{u \in U, s \in S}$ respectively, as well as the patience values to $(\ell_s)_{s \in S}$.

From now on, denote $V_{t}$ as the set of first $t$ arrivals of $V$; that is, $V_{t}:= \{v_{1}, \ldots ,v_{t}\}$. Moreover, set $G_{t}:= G[V_t]$, and $\LPOPT_{new}(G_t)$ as the value of an optimum solution to \ref{LP:new} (this is a random variable, as $V_{t}$ is a random subset of $V$). The following inequality then holds:

\begin{lemma} \label{lem:random_induced_subgraph}
For each $t \ge 1$, 
\[
    \mb{E}[ \LPOPT_{new}(G_{t}) ] \ge \frac{t}{n} \, \LPOPT_{new}(G).
\]
\end{lemma}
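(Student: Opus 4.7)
The plan is to construct, for each realization of the random subset $V_t \subseteq V$, an explicit feasible solution to \ref{LP:new} on $G_t$ by restriction of a fixed optimum solution on $G$, and then to take expectations. This is essentially the standard Kesselheim-style ``random sub-instance'' argument, adapted to the new LP.

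First, I would fix an optimum solution $(x_v(\bm{u}))_{v \in V,\bm{u} \in U^{(\le \ell_v)}}$ to \ref{LP:new} for $G$, so that its objective value equals $\LPOPT_{new}(G)$. Given any realization of $V_t$, define the candidate solution on $G_t$ by simply keeping the variables $x_v(\bm{u})$ for $v \in V_t$ (and discarding the rest). Feasibility is immediate: constraint \eqref{eqn:relaxation_efficiency_distribution} for each $v \in V_t$ is unchanged, and constraint \eqref{eqn:relaxation_efficiency_matching} for each $u \in U$ only gets smaller when we sum over $V_t \subseteq V$; nonnegativity is preserved. Hence this restriction is a feasible solution to \ref{LP:new} on $G_t$, and so its objective value is a lower bound on $\LPOPT_{new}(G_t)$:
\[
    \LPOPT_{new}(G_{t}) \;\ge\; \sum_{v \in V_t} \sum_{\bm{u} \in U^{(\le \ell_v)}} \left( \sum_{i=1}^{|\bm{u}|} w_{u_i,v}\, g^i_v(\bm{u}) \right) x_v(\bm{u}).
\]

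Next, I take expectation over the random set $V_t$, which in the ROM model is a uniformly random $t$-subset of $V$. In particular, for each $v \in V$, $\mb{P}[v \in V_t] = t/n$. By linearity of expectation,
\[
    \mb{E}[\LPOPT_{new}(G_t)] \;\ge\; \frac{t}{n} \sum_{v \in V} \sum_{\bm{u} \in U^{(\le \ell_v)}} \left( \sum_{i=1}^{|\bm{u}|} w_{u_i,v}\, g^i_v(\bm{u}) \right) x_v(\bm{u}) \;=\; \frac{t}{n}\,\LPOPT_{new}(G),
\]
since $(x_v(\bm{u}))$ is optimal for \ref{LP:new} on $G$. This yields the desired inequality.

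There is no real obstacle here: the argument is structural and does not require any computation with the functions $g^i_v(\bm{u})$ beyond recognizing that both constraints of \ref{LP:new} are monotone under deletion of a vertex $v \in V$. The only subtlety to mention is that although $V_t$ is random, it is a deterministic $t$-subset once the ROM permutation is fixed, so the ``solution'' we construct is itself a random feasible solution to the (random) LP on $G_t$, and bounding $\LPOPT_{new}(G_t)$ by its objective is valid pointwise; linearity of expectation then does the rest.
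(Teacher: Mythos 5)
Your proof is correct, and it is the natural (and essentially unique) argument for this statement: restrict a fixed optimum solution of \ref{LP:new} on $G$ to the random sub-instance $G_t$, observe that both constraint families are monotone under deleting online vertices so feasibility is preserved, bound $\LPOPT_{new}(G_t)$ below by the objective of this restricted solution, and then apply linearity of expectation with $\mb{P}[v \in V_t]=t/n$. The paper states Lemma \ref{lem:random_induced_subgraph} without a proof, evidently treating it as the standard Kesselheim-style random sub-instance bound; your write-up supplies exactly the omitted argument.
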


In light of this observation, we design an online probing algorithm which makes use of $V_{t}$, the currently known nodes, to derive an optimum LP solution with respect to $G_{t}$. As such, each time an online node
arrives, we must compute an optimum solution for the LP associated to $G_{t}$, distinct from the solution computed
for that of $G_{t-1}$. 

\begin{algorithm}[H]
\caption{Unknown Stochastic Graph ROM} 
\label{alg:ROM_edge_weights}
\begin{algorithmic}[1]

\Statex Input $U$, $n:=|V|$, and $0 \le \alpha \le 1$.
\State Set $\scr{M} \leftarrow \emptyset$.
\State Set $G_{0} = (U, \emptyset, \emptyset)$

\For{$t=1, \ldots , |V|$}
\State Input $v_{t}$, with $(w_{u,v_t})_{u \in U}$, $(p_{u,v_t})_{u \in U}$ and $\ell_{v_t}$.  
\State Compute $G_{t}$, by updating $G_{t-1}$ to contain $v_{t}$ and its edges into $U$,
as well its edge weights, probabilities and patience.

\If{ $t < |V|\alpha$}
\State Pass on $v_{t}$.
\Else
\State Solve \ref{LP:new} for $G_{t}$ and find an optimum solution.
\State Encode this (new) optimum solution as $(x_{v}(\bm{u}))_{v \in V_{t}, \bm{u} \in U^{( \le \ell_{v})}}$.
\State Process $v_t$, and set $(u_t,v_t) \leftarrow \textsc{VertexProbe}(G_t, (x_{v_t}(\bm{u}))_{\bm{u} \in U^{( \le \ell_{v_t})}}, v_t)$.
\If{$(u_t,v_t) \neq \emptyset$ and $u_t$ is unmatched}
\State Set $\scr{M}(v_t) = u_t$.
\EndIf
\EndIf
\EndFor

\State Return $\scr{M}$.
\end{algorithmic}
\end{algorithm}

\begin{theorem}\label{thm:ROM_edge_weights}

When $\alpha$ is set to $1/e$, Algorithm \ref{alg:ROM_edge_weights} achieves an asymptotic competitive ratio\footnote{The asymptotic competitive ratio for an online probing algorithm $\scr{A}$ in the ROM setting is defined as
$\liminf_{\OPT(G) \rightarrow \infty}  \frac{ \mb{E}[ \val( \scr{A}(G))]}{\OPT(G)}$.} of $1/e$ against the committal
benchmark.

\end{theorem}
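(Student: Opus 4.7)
The plan is to adapt the approach of Kesselheim et al.~\cite{KRTV2013}, with \ref{LP:new} playing the role of their bipartite matching LP. Fix $\alpha = 1/e$ and write $k := \lceil \alpha n \rceil$. The algorithm passes on the first $k - 1$ arrivals and, at each step $t = k, \ldots, n$, re-solves \ref{LP:new} on the random induced subgraph $G_t := G[V_t]$ and then executes \textsc{VertexProbe} on $v_t$ using the resulting induced edge variables, which I denote by $(\til{x}^{(t)}_{u,v})_{u \in U,\, v \in V_t}$.

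First, I would lower bound the expected weight of the edge \textsc{VertexProbe} commits to at each step $t \ge k$. The key ROM observation is that, conditional on $V_t$, the arrival $v_t$ is uniformly distributed over $V_t$. By Lemma \ref{lem:fixed_vertex_probe}, the expected commit weight at step $t$, conditional on $V_t$, equals
\begin{equation*}
\frac{1}{t}\sum_{v \in V_t} \sum_{u \in U} w_{u,v}\, p_{u,v}\, \til{x}^{(t)}_{u,v} \;=\; \frac{\LPOPT_{new}(G_t)}{t},
\end{equation*}
and averaging over $V_t$ together with Lemma \ref{lem:random_induced_subgraph} gives the uniform lower bound $\LPOPT_{new}(G)/n$ at every such step.

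The main obstacle is the second step: lower bounding the probability that the vertex $u$ that \textsc{VertexProbe} commits to at step $t$ is not already in $M_{t-1}$. Following Kesselheim et al., I would prove by induction on $t$ the pointwise survival bound
\begin{equation*}
\mathbb{P}[u \notin M_{t-1} \,|\, V_t, v_t] \;\geq\; \frac{k-1}{t-1} \qquad \text{for every } u \in U,
\end{equation*}
with base case $t = k$ trivial. The inductive step uses that, conditional on $(V_t, v_t)$, the commit event at step $t$ is independent of the history $M_{t-1}$; the aggregate constraint $\sum_{v \in V_t} p_{u,v}\, \til{x}^{(t)}_{u,v} \le 1$ from \eqref{eqn:relaxation_efficiency_matching} applied to $G_t$ (which plays the role of ``$u$ is the LP-match of at most one $v$'' in the non-stochastic proof); and a backward principle-of-deferred-decisions view of the ROM, in which $V_{t-1}$ is obtained from $V_t$ by removing $v_t$ uniformly at random. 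The technical difficulty relative to \cite{KRTV2013} is that the LP solutions for different $G_s$ are randomly determined from overlapping random subsets of $V$ and that \textsc{VertexProbe} commits to $u$ only via the indirect probability $p_{u,v_s} \til{x}^{(s)}_{u,v_s}$, so one must carefully track how the per-step ``poaching'' probability of $u$ is controlled by $1/s$ on average.

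Granted the pointwise survival bound, one gets $\mathbb{E}[W_t \mid V_t, v_t] \geq \tfrac{k-1}{t-1}\sum_{u} w_{u,v_t} p_{u,v_t} \til{x}^{(t)}_{u,v_t}$ by conditional independence of the commit event and $M_{t-1}$, and taking expectations yields the expected weight added at step $t$ to be at least $\tfrac{k-1}{t-1} \cdot \LPOPT_{new}(G)/n$. Summing,
\begin{equation*}
\mathbb{E}[\val(\scr{M})] \;\geq\; \frac{k-1}{n}\sum_{t=k}^{n} \frac{\LPOPT_{new}(G)}{t-1} \;\longrightarrow\; \alpha \ln(1/\alpha)\cdot \LPOPT_{new}(G) \;=\; \tfrac{1}{e}\, \LPOPT_{new}(G)
\end{equation*}
as $n \to \infty$, using $k/n \to \alpha$ and $\sum_{t=k}^{n} 1/(t-1) \sim \ln(1/\alpha)$. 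Combined with $\LPOPT_{new}(G) \geq \OPT(G)$ from Theorem \ref{thm:new_LP_relaxation}, this gives the claimed asymptotic ratio of $1/e$, and $\alpha = 1/e$ is precisely the value that maximizes $\alpha \ln(1/\alpha)$, as in~\cite{KRTV2013}.
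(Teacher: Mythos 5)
Your proposal follows essentially the same route as the paper's proof: the same per-step expected commit-weight bound $\LPOPT_{new}(G_t)/t$ followed by Lemma \ref{lem:random_induced_subgraph}, the same survival bound $\mb{P}[u \notin M_{t-1}\mid V_t, v_t]\ge (k-1)/(t-1)$ (the paper's Lemma \ref{lem:availability_lower_bound}, proved by the same backward-conditioning / deferred-decisions argument), the same decomposition via conditional independence of the commit event and the matching state given $(V_t,v_t)$, and the same sum. Your sketch is correct; the only difference is that the paper spells out the survival bound with explicit sigma-algebras $\scr{H}_k$ and the reversed indexing $\bar{v}_k := v_{t-k}$ while you describe it at a higher level, and your $(k-1)/(t-1)$ with $k=\lceil\alpha n\rceil$ is the slightly more carefully discretized form of the paper's $\alpha n/(t-1)$.
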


\begin{proof}

Observe that by definition, Algorithm \ref{alg:ROM_edge_weights} does not
probe any of the neighbours of $v_{t}$ for $1 \le t \le \alpha n -1$. As such,
these online vertices do not contribute to the matching returned by the algorithm,
and so we hereby fix $t$ and assume that $t \ge \alpha n$. We emphasize
that the value of $x_{v}(\bm{u})$ corresponds to
this fixed value of $t$, for each $v \in V_t$ and $\bm{u} \in U^{(\le \ell_v)}$.

Let us now define $e_{t}:=(u_t,v_{t})$, where $u_{t}$ is the vertex $u \in U$ which $v_{t}$ commits to (recall that $(u_t,v_t) = \emptyset$ if $v_t$ remains uncommitted after its probes). We now define the random variable
\[
	\val(e_t):= w_{e_t} \bm{1}_{[ e_t \neq \emptyset]},
\]
which indicates the weight of the edge $v_t$ commits to (which is
zero, provided $v_t$ remains uncommitted).

For each $u \in U$, denote $C(u,v_t)$ as the event in which
$v_t$ commits to $u$. 
Let us now condition on the random subset $V_{t}$, as well as the 
random vertex $v_{t}$. 
In this case, 
\[
    \mb{E}[ \val(e_{t})  \, | \, V_{t}, v_t] = \sum_{u \in U} w_{u,v_t} \, \mb{P}[ C(u,v_t)  \, | \, V_{t}, v_{t}].
\]
Observe however that once we condition on $V_{t}$ and $v_{t}$, Algorithm \ref{alg:ROM_edge_weights} corresponds to executing \textsc{VertexProbe} on the instance $(G_t, (x_{v_t}(\bm{u}))_{\bm{u} \in U^{( \le \ell_{v_t})}}, v_t)$. Thus,
Lemma \ref{lem:fixed_vertex_probe} implies that
\[
\mb{P}[ C(u,v_t)  \, | \, V_{t}, v_{t}]  =  p_{u,v_t} \, \til{x}_{u,v_t},
\]
where $\til{x}_{u,v_t}$ is the induced edge variable associated with the solution $(x_{v}(\bm{u}))_{v \in V_t, \bm{u} \in U^{(\ell_{v})}}$.
As such,
\[
	\mb{E}[ \val(e_{t})  \, | \, V_{t}, v_t] = \sum_{u \in U} w_{u,v_t} \, p_{u,v_t} \, \til{x}_{u,v_t}.
\]
On the other hand, if we condition on \textit{solely} $V_{t}$, then $v_{t}$ remains distributed uniformly
at random amongst the vertices of $V_{t}$. Moreover, once we condition on $V_t$, the graph $G_t$ is determined, and thus so are
the values $(x_{v}(\bm{u}))_{v \in V_t, \bm{u} \in U^{(\ell_{v})}}$ of \ref{LP:new}. These observations together imply that
\begin{equation}\label{eqn:conditional_expectation_value}
\mb{E}[ w_{u,v_t} \, p_{u,v_t} \,  \til{x}_{u,v_t}  \, | \, V_{t}] = \frac{\sum_{v \in V_t} w_{u,v} \, p_{u,v} \, \til{x}_{u,v}}{t}
\end{equation}
for each $u \in U$ and $\alpha n \le t \le n$.

If we now take expectation over $v_{t}$, then using the law of iterated
expectations,
\begin{align*}
    \mb{E}[ \val(e_t)  \, | \, V_{t}] &= \mb{E}[ \,  \mb{E}[ \val(e_t) \, | \, V_t, v_t] \,   \, | \, V_{t} ]  \\
                               &= \mb{E}\left[ \sum_{u \in U} w_{u,v_t} \, p_{u,v_t} \, \til{x}_{u,v_t}   \, | \, V_{t} \right] \\
                               &= \sum_{u \in U} \mb{E}[ w_{u,v_t} \, p_{u,v_t} \, \til{x}_{u,v_t}  \, | \, V_{t}] \\
                               &= \sum_{u \in U} \sum_{v \in V_t} \frac{w_{u,v} p_{u,v} \, \til{x}_{u,v}}{t},
\end{align*}
where the final equation follows from \eqref{eqn:conditional_expectation_value}.

Observe however that
\[
    \LPOPT_{new}(G_t)=\sum_{v \in V_t} \sum_{u \in U} w_{u,v_t} \, p_{u,v_t} \, \til{x}_{u,v_t},
\]
as $(x_{v}(\bm{u}))_{v \in V_{t}, \bm{u} \in U^{( \le \ell_{v_t})}}$ is an optimum solution to
\ref{LP:new} for $G_t$. As a result,
\[
    \mb{E}[ \val(e_t)  \, | \, V_{t}]  = \frac{\LPOPT_{new}(G_t)}{t},
\]
and so
\[
	\mb{E}[ \val(e_t)]  = \frac{\mb{E}[\LPOPT_{new}(G_t)]}{t},
\]

after taking taking expectation over $V_{t}$.

On the other hand, Lemma \ref{lem:random_induced_subgraph} implies that
\[
	\frac{\mb{E}[\LPOPT_{new}(G_t)]}{t} \ge  \frac{\LPOPT_{new}(G)}{n}.
\]
Thus,
\begin{equation}\label{eqn:edge_value_lower_bound}
\mb{E}[ \val(e_t)] \ge \frac{\LPOPT_{new}(G)}{n},
\end{equation}
provided $ \alpha n \le t \le n$.

Let us now consider the matching $\scr{M}$ returned by the algorithm,
as well as its value, which we denote by $\val(\scr{M})$. For each $\alpha n \le t \le n$,
define $R_{t}$ as the \textit{remaining vertices} of $U$ when vertex $v_{t}$ arrives (these are the unmatched vertices of $U$, after $v_{1}, \ldots ,v_{t-1}$
are processed). With this notation, we have that
\begin{equation}\label{eqn:value_of_matching}
    \val(\scr{M}) = \sum_{t=\alpha n}^{n} \val(u_t, v_t) \, \bm{1}_{[u_t \in R_t]}.
\end{equation}
Moreover, we have the following lemma, whose proof we defer until afterwards.

\begin{lemma} \label{lem:availability_lower_bound}

If $f(t,n):= \alpha n/(t-1)$, then
\[
    \mb{P}[ u_{t} \in R_t  \, | \, V_{t},v_t] \ge f(t,n),
\]

for $t \ge \alpha n$.

\end{lemma}

Now, $\val(u_t, v_t)$ and $\{ u_{t} \in R_t \}$ are conditionally independent given
$(V_{t},v_t)$, as the probes of $v_{t}$ are independent from those of $v_{1}, \ldots ,v_{t-1}$. Thus,
\[
\mb{E}[ \val(u_t,v_t) \, \bm{1}_{[u_t \in R_t]}  \, | \, V_{t}, v_{t}] = \mb{E}[ \val(u_t,v_t)  \, | \, V_t, v_t] \cdot \mb{P}[ u_t \in R_t  \, | \, V_{t}, v_t].
\]
Moreover, for each $t \ge \alpha n$, Lemma \ref{lem:availability_lower_bound} implies that
\[    
   \mb{E}[ \val(u_t,v_t)  \, | \, V_t, v_t] \cdot \mb{P}[ u_t \in R_t  \, | \, V_{t}, v_t] \ge 
   \mb{E}[ \val(u_t,v_t)  \, | \, V_{t}, v_t] \, f(t,n),
\]
and so
\[
\mb{E}[ \val(u_t,v_t) \, \bm{1}_{[u_t \in R_t]}  \, | \, V_{t}, v_{t}] \ge \mb{E}[ \val(u_t,v_t)  \, | \, V_{t}, v_t] \, f(t,n).
\]
Thus, by applying the law of iterated expectations,
\begin{align*}
\mb{E}[ \val(u_t,v_t) \bm{1}_{[u_t \in R_t]} ] &= \mb{E}[ \, \mb{E}[ \val(u_t,v_t) \, \bm{1}_{[u_t \in R_t]}  \, | \, V_{t}, v_{t}] \, ] \\
			&\ge \mb{E}[ \, \mb{E}[ \val(u_t,v_t)  \, | \, V_{t}, v_t] \, f(t,n) \, ] \\
			&= f(t,n) \, \mb{E}[ \val(u_t,v_t)],
\end{align*}
for each $t \ge \alpha n$.

As a result, using \eqref{eqn:value_of_matching}, we get that
\begin{align*}
	\mb{E}[\val(\scr{M})] &= \sum_{t=\alpha n}^{n} \mb{E}[ \val(u_t, v_t) \, \bm{1}_{[u_t \in R_t]} ] \\
						  &\ge \sum_{t=\alpha n}^{n} f(t,n) \, \mb{E}[ \val(u_t,v_t)].
\end{align*}

We may thus conclude that
\[
    \mb{E}[ \val(\scr{M})] \ge \LPOPT_{new}(G) \sum_{t=\alpha n}^{n} \frac{ f(t,n)}{n},
\]
after applying \eqref{eqn:edge_value_lower_bound}.

As $\sum_{t=\alpha n}^{n} f(t,n)/n = (1 + o(1)) 1/e$ when $\alpha =1/e$ (where the asymptotics
are as $n \rightarrow \infty$), the result holds.
\end{proof}

In order to complete the proof of Theorem \ref{thm:ROM_edge_weights}, we must prove Lemma \ref{lem:availability_lower_bound}. Up until now, when Algorithm \ref{alg:ROM_edge_weights}
solves \ref{LP:new} for $G_t$, we have been able to notate
the induced edge variables as $(\til{x}_{u,v})_{u \in U, v \in V_t}$ without
ambiguity, despite the dependence on $\alpha n \le t \le n$. In the proof below, it
is necessary to be more explicit in our notation, so we denote $\til{x}_{u,v}$
as $\til{x}^{(t)}_{u,v}$ to indicate that we are working with an edge variable
from the relevant LP solution involving $G_t$. 

\begin{proof}[Proof of Lemma \ref{lem:availability_lower_bound}]

In what follows, let us assume that $ \alpha n \le t \le n$ is fixed. We wish to prove that for each $u \in U$, 
\[
\mb{P}[u \in R_{t}  \, | \, V_{t},v_t] \ge \frac{ \alpha n}{t-1}.
\]
As such, we must condition on $(V_t,v_t)$ throughout the remainder of the proof.
To simplify the argument, we abuse notation slightly and remove $(V_t,v_t)$
from the subsequent probability computations, though it is understood to implicitly
appear.

Given arriving node $v_{j}$ for $j=1, \ldots ,n$, once again denote $C(u,v_j)$
as the event in which $v_j$ commits to $u \in U$. As $R_{t}$ denotes the
unmatched nodes after the vertices $v_{1}, \ldots , v_{t-1}$ are processed
by Algorithm \ref{alg:ROM_edge_weights}, observe that
$u \in R_{t}$ if and only if $\neg C(u,v_j)$ occurs for each $j=1, \ldots , t-1$.
As a result,
\[
	\mb{P}[ u \in R_{t} ] = \mb{P}[\cap_{j=1}^{t-1} \neg C(u,v_j)].
\]
We therefore focus on lower bounding $\mb{P}[\cap_{j=1}^{t-1} \neg C(u,v_j) ]$ in order to prove
the lemma.

First observe that for $j=1, \ldots , \alpha n -1$, the algorithm passes on all the
trials of $v_j$ by definition. As such, we may focus on lower bounding
\[
	\mb{P}[\cap_{j= \alpha n}^{t-1} \neg C(u,v_j)],
\]
which depends only on the vertices of $V_{t-1} \setminus V_{\alpha n -1}$.
We denote $\bar{t}:= t - \alpha n$ as the number of vertices within this set.

Let us first consider the vertex $v_{t-1}$,
and the induced edge variable $\til{x}^{(t-1)}_{u,v}$ for each $v \in V_{t-1}$.
Observe that after applying Lemma \ref{lem:fixed_vertex_probe},
\begin{align*}
	\mb{P}[ C(u,v_{t-1})] &= \sum_{ v \in V_{t-1}} \mb{P}[ C(u,v_{t-1})  \, | \, v_{t-1} = v] \cdot \mb{P}[v_{t-1}=v] \\
											  &= \frac{1}{t-1} \sum_{v \in V_{t-1}} \til{x}_{u,v}^{(t-1)} \, p_{u,v},
\end{align*}
as once we condition on $(V_{t}, v_t)$, $v_{t-1}$ is uniformly distributed amongst $V_{t-1}$.
On the other hand, the values $(\til{x}_{u,v}^{(t-1)})_{u \in U, v \in V_{t-1}}$ are derived
from a solution to \ref{LP:new} for $G_{t-1}$, and so
\[
	\sum_{v \in V_{t-1}} \til{x}_{u,v}^{(t-1)} \, p_{u,v} \le 1.
\]
We therefore get that
\[
	\mb{P}[ C(u,v_{t-1}) ]  \le \frac{1}{t-1}.
\]
Similarly, if we fix $1 \le k \le \bar{t}$, then we can generalize the above
argument by conditioning
on the identities of all the vertices preceding $v_{t-k}$, as well as the probes
they make; that is, $(u_{t-1},v_{t-1}), \ldots ,(u_{t-(k-1)},v_{t-(k-1)})$ (in addition to $V_t$ and $v_t$ as always). 

In order to simplify the resulting indices, let us reorder the vertices of $V_{t-1} \setminus V_{\alpha n -1}$.
Specifically, define $\bar{v}_{k}:=v_{t-k}, \bar{u}_k := u_{t-k}$ and $\bar{e}_k:=e_{t-k}$
for $k=1,\ldots ,\bar{t}$. With this notation, we denote
$\scr{H}_{k}$ as encoding the information available based on the vertices $\bar{v}_1, \ldots , \bar{v}_{k}$ and the edges they (potentially) committed to, namely $\bar{e}_{1}, \ldots ,\bar{e}_{k}$\footnote{Formally, $\scr{H}_k$ is the sigma-algebra
generated from $V_t,v_t$ and $\bar{e}_{1}, \ldots ,\bar{e}_{k}$.}. By convention,
we define $\scr{H}_{0}$ as encoding the information regarding $V_t$ and $v_t$. 

An analogous computation to the above case then implies that
\[
	\mb{P}[ C(u,\bar{v}_{k})  \, | \, \scr{H}_{k-1}] = \sum_{ v \in V_{t-k}} \til{x}_{u,v}^{(t-k)} \, p_{u,v} \, \mb{P}[\bar{v}_k=v] \le \frac{1}{t-k},
\]
for each $k=1,\ldots , \bar{t}$, where
$\til{x}_{u,v}^{(t-k)}$ is the edge variable for $v \in V_{t-k}$.

Observe now that in each step, we condition on strictly more information;
that is, $\scr{H}_{k-1} \subseteq \scr{H}_{k}$
for each $k=2, \ldots , \bar{t}$. On the other hand, observe that
if we condition on $\scr{H}_{k-1}$ for $1 \le k \le \bar{t}-1$,
then the event $C(u,\bar{v}_{j})$ can be determined from $\scr{H}_{k-1}$
for each $1 \le j \le k-1$.

Using these observations, for $1 \le k \le \bar{t}$, the following recursion holds:
\begin{align*}
\mb{P}[ \cap_{j=1}^{k} \neg C(u,\bar{v}_j)] 
  &= \mb{E}\left[ \, \mb{E}\left[ \prod_{j=1}^{k} \bm{1}_{ [\neg C(u,\bar{v}_j)]}  \, | \, \scr{H}_{k-1}\right] \right]	\\
&= \mb{E}\left[ \,  \prod_{j=1}^{k-1} \bm{1}_{ [\neg C(u,\bar{v}_j)]} \, \mb{P}[\neg C(\bar{v}_k, u)  \, | \, \scr{H}_{k-1}]\right] \\
&\ge \left(1 - \frac{1}{t-k}\right) \mb{P}[ \cap_{j=1}^{k-1} \neg C(u,\bar{v}_j)]
\end{align*}

It follows that if $k= t - \alpha n$, then applying the above recursion implies that
\[
	\mb{P}[\cap_{j= \alpha n}^{t-1} \neg C(u,v_j)] \ge \prod_{k=1}^{t- \alpha n} \left(1 - \frac{1}{t-k}\right).
\]
Thus, after cancelling the pairwise products,
\[
	 \mb{P}[\cap_{j= \alpha n}^{t-1}\neg C(u,v_j)] \ge  \frac{\alpha n}{t-1},
\]
and so 
\[
		\mb{P}[ u \in R_{t}] = \mb{P}[\cap_{j= \alpha n}^{t-1}\neg C(u,v_j)] \ge \frac{\alpha n}{t-1}
\]

thereby completing the argument.

\end{proof}

\section{Conclusion and Open Problems}
\label{sec:conclusion}

We discussed the online stochastic matching problem in various settings and gave new and improved results with respect to a new LP relaxation which upper bounds the performance of the committal benchmark. We use our LP to create fractional solutions which can then be rounded to determine a non-adaptive sequence of edge probes. Our LP has a better stochasticity gap, as compared to the linear programs discussed in previous papers. We considered the ROM input model in the unknown stochastic graph setting, and adversarial, ROM and i.i.d. input models in the known stochastic graph setting. All of our results hold for arbitrary patience values and we consider both offline vertex weights and the more general  edge weights in determining the stochastic reward. 

Our results leave unsettled many interesting questions. We can view many open problems in terms of one basic issue: 
When (if ever) is there a provable difference between the classical online bipartite matching problem and the corresponding stochastic matching problem?    
What negative 
(i.e., inapproximation) results (if any) can be strengthened beyond what is known in the corresponding classical settings? 




One of the questions we have left open is whether our competitive ratios can be seen to hold against the
non-committal benchmark, or whether we must allow them more power. For instance, if our online probing algorithms execute without needing to respect commitment, is it clear that a competitive ratio of $1-1/e$ is attainable against the non-committal benchmark?
What if we enforce commitment, but allow our algorithms to execute adaptively? We believe that these open questions
highlight the difficulty of having to design probing algorithms which work for arbitrary patience constraints.

Another direction is to improve the linear program for the case of unit patience since a stochasticity gap of $1-1/e$ holds here,
and our linear program is equivalent to those linear programs discussed in earlier papers, when restricted to unit patience. This seems to be a bottleneck in proving positive results in any model. It would also be interesting to look at other methods to prove competitive ratios without using linear programs at all (i.e., by combinatorial methods). Or when (if ever) is $1 - 1/e$ an optimal competitive ratio? 

We are also interested in whether our results for stochastic matching can be extended so that offline, as well as online vertices, have patience constraints. Another extension would be to generalize the patience constraints so that now online vertices have budgets, and edges have non-uniform probing costs. The constraint would be  that the cost of probes adjacent to an online vertex is limited to its budget. And finally, we are interested in whether we can obtain improved competitive ratios, for special cases, such as when the edge probabilities are decomposable or vanishingly small as studied in Goyal and Udwani \cite{Goyal2020}.

\vspace{.3in}

\textbf{Acknowledgements}

We would like to thank Denis Pankratov for his helpful comments.

%
%
%

\bibliographystyle{plain}

{\footnotesize
\bibliography{bibliography}}

\appendix

\section{Relaxing the Committal Benchmark} \label{appendix:committal_benchmark}

In this section, we consider the committal benchmark,
as defined in Section \ref{sec:prelim}. In particular, we prove that \ref{LP:new}
is a relaxation of the committal benchmark (Theorem \ref{thm:new_LP_relaxation}).
It is  convenient to extend our  definition of an online probing algorithm
to the offline setting (as has previously been implicitly suggested by the committal and
non-committal benchmarks).

Suppose that we are given an arbitrary stochastic
graph $G=(U,V,E)$. We define an \textbf{(offline) probing algorithm} as an algorithm which adaptively reveals
the edge states $G$, while respecting the patience values of $G$. Notably, we do \textit{not}
restrict a probing algorithm to any specific ordering of the edges of $G$. The goal of a probing
algorithm is again to return a matching of active edges of large expected weight, though it
must \textbf{respect commitment}. That is, it has the property that if it makes a probe which
yields an active edge, then this edge must be included in the current matching (if possible).
The value of the committal benchmark on $G$, denoted $\OPT(G)$, simply corresponds to
the largest expected value a probing algorithm can attain on $G$.

Let us now restate \ref{LP:new} for convenience:
\begin{align}\label{LP:new_restatement}
\tag{LP-new}
&\text{maximize} &  \sum_{v \in V} \sum_{\bm{u} \in U^{( \le \ell_{v})}} \left( \sum_{i=1}^{|\bm{u}|} w_{u_i,v} \, g^{i}_{v}(\bm{u}) \right) \cdot x_{v}(\bm{u}) \\
&\text{subject to} & \sum_{v \in V} \sum_{i=1}^{\ell_v}\sum_{\substack{ \bm{u}^* \in U^{ ( \le \ell_{v})}: \\ u_{i}^*=u}} 
g_{v}^{i}(\bm{u}^*) \cdot x_{v}( \bm{u}^*)  \leq 1 && \forall u \in U  \label{eqn:relaxation_efficiency_matching_restatement}\\
&& \sum_{\bm{u} \in U^{( \le \ell_{v})}} x_{v}(\bm{u}) \le 1 && \forall v \in V,  \label{eqn:relaxation_efficiency_distribution_restatement} \\
&&x_{v}( \bm{u}) \ge 0 && \forall v \in V, \bm{u} \in U^{(\le \ell_v)}
\end{align}
In order to prove that \ref{LP:new_restatement} is a relaxation of the committal benchmark,
we must show that $\OPT(G) \le \LPOPT_{new}(G)$. Observe that in the above terminology,
this is equivalent to showing that for each (offline) probing algorithm $\scr{A}$, 
$\mb{E}[ \val(\scr{A}(G))] \le \LPOPT_{new}(G)$, where $\scr{A}(G)$ is the matching returned by $\scr{A}$.

Suppose now that we define $x_{v}(\bm{u})$ to be the probability that $\scr{A}$
probes the edges $(u_{i},v)_{i=1}^{|\bm{u}|}$ in order, where $v \in V$ and $\bm{u} \in U^{(\le \ell_v)}$. 
Let us suppose that $\scr{A}$ has the following property:
\begin{enumerate}
\item For each $v \in V$, the edge probes involving $v$ are made independently of the edge states $(\st(u,v))_{u \in U}$. \label{eqn:probing_algorithm_assumptions}
\end{enumerate}
Observe that by using \eqref{eqn:probing_algorithm_assumptions},
the expected value of the edge assigned to $v$ is
\[
	\sum_{\bm{u} \in U^{( \le \ell_{v})}} \left( \sum_{i=1}^{|\bm{u}|} w_{u_i,v} \cdot g^{i}_{v}(\bm{u}) \right) \cdot x_{v}(\bm{u}).
\] 
By additionally arguing that $(x_{v}(\bm{u}))_{v \in V, \bm{u} \in U^{( \le \ell_v}}$
is a feasible solution to \ref{LP:new_restatement}, we can then upper bound $\mb{E}[\val(\scr{A}(G)]$ by $\LPOPT_{new}(G)$.

That being said, since we make no assumption on how $\scr{A}$ moves between edge probes,
it is not clear that we can assume it satisfies \eqref{eqn:probing_algorithm_assumptions} without loss of generality. 
As such, the natural interpretation of the
variables of \ref{LP:new_restatement} does not seem to easily lend itself to a proof
of Theorem \ref{thm:new_LP_relaxation}.

In order to get around these issues, we introduce a new stochastic probing problem
for a stochastic graph $G=(U,V,E)$, known as the \textbf{relaxed stochastic matching problem}. This problem is closely related to the stochastic matching problem, however it places fewer restrictions on how many times each vertex $u \in U$
may be matched by a probing algorithm. As such, it has an optimum solution which upper bounds $\OPT(G)$. Interestingly, \ref{LP:new} exactly encodes this relaxed matching problem, which implies Theorem \ref{thm:new_LP_relaxation} as a corollary.

We now introduce the definition of a \textbf{relaxed probing algorithm}, described in the following way:

A relaxed probing algorithm probes edges of $G$,
while respecting the patience constraints of the online nodes of $V$.
It is allowed to arbitrarily move between the edges of $G$,
and it must return $\scr{M}$, a subset of its probes which yielded
active edges. The goal of
the relaxed probing algorithm is to maximize the expected weight of $\scr{M}$,
while ensuring that the following properties are satisfied:
\begin{enumerate}
\item Each $v \in V$ appears in at most one edge of $\scr{M}$.
\item For each $u \in U$, the \textit{expected} number of edges which contain
$u$ is at most one.
\end{enumerate}
We refer to $\scr{M}$ as a \textbf{one-sided matching} for the online nodes.
In a slight abuse of terminology, we say that a relaxed probing algorithm \textbf{matches}
the edge $e$, provided $e$ is included in $\scr{M}$.

A relaxed probing algorithm must \textbf{respect commitment}. That is, it has the property that if a probe to $e=(u,v)$ yields an active edge, then the edge is included in $\scr{M}$ (provided $v$ is currently not in $\scr{M}$). Observe that
this requires the relaxed probing algorithm to include $e$, even if $u$ is already adjacent to some
element of $\scr{M}$.

We define the \textbf{relaxed benchmark} as the optimum relaxed probing algorithm
on $G$, and denote $\OPT_{rel}(G)$ as the value this benchmark attains  $G$. Observe that by definition,
\[
	\OPT(G) \le \OPT_{rel}(G),
\]
where $\OPT(G)$ is the value of the committal benchmark on $G$.

Finally, we say that a relaxed probing algorithm is \textbf{non-adaptive},
provided its edge probes are statistically independent
from the edge states of $G$; that is, the random variables $(\st(e))_{e \in E}$. We
emphasize that this is equivalent to specifying an ordering $\lambda$ on a subset of $E$, where each vertex $v \in V$
appears in at most $\ell_v$ edges of $\lambda$. Once $\lambda$ is generated (potentially
using randomness), the edges specified by $\lambda$ are probed in order, and an edge
is added to the matching, provided its online node is unmatched.

Unlike the committal benchmark, $\OPT_{rel}(G)$ can be attained
by a non-adaptive relaxed probing algorithm.

\begin{theorem} \label{thm:non_adaptive_optimum}
There exists a relaxed probing algorithm which is non-adaptive
and attains value $\OPT_{rel}(G)$ in expectation.
\end{theorem}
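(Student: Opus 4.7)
The plan is to identify $\OPT_{rel}(G)$ with $\LPOPT_{new}(G)$, and then observe that the latter is attained by the natural non-adaptive algorithm: for each $v \in V$ independently, run $\textsc{VertexProbe}$ using the marginal distribution $x_v^*(\cdot)$ from an optimal solution to \ref{LP:new_restatement}. This algorithm is non-adaptive because it decides its probes for $v$ independently of the probing outcomes of all other online nodes. Lemma \ref{lem:fixed_vertex_probe} shows that under this algorithm, each $u$ is matched to $v$ with probability exactly $p_{u,v} \til{x}_{u,v}^*$, and summing over $v$ yields the LHS of \eqref{eqn:relaxation_efficiency_matching_restatement}, which is bounded by $1$; hence the algorithm is a valid relaxed probing algorithm, and its expected value is $\sum_{u,v} w_{u,v} p_{u,v} \til{x}_{u,v}^* = \LPOPT_{new}(G)$. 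So it suffices to prove the inequality $\OPT_{rel}(G) \le \LPOPT_{new}(G)$.

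To establish this inequality, I would take an arbitrary (possibly adaptive) relaxed probing algorithm $\scr{A}$ and construct a feasible solution $(x_v(\bm{u}))_{v,\bm{u}}$ to \ref{LP:new_restatement} whose objective equals $\mb{E}[\val(\scr{A}(G))]$. The construction uses a coupling: imagine running $\scr{A}$ in a \emph{fictive} world that uses the same random coins as the real execution and the same states for all edges not incident to $v$, but pretends every $v$-edge is inactive when probed. Because no $v$-edge is active in the fictive world, commitment is never triggered on $v$, and $\scr{A}$ produces a well-defined (random) full sequence of $v$-probes, which I call $\bm{u}^v$, taking values in $U^{(\le \ell_v)} \cup \{\emptyset\}$. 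Crucially, $\bm{u}^v$ depends only on $\scr{A}$'s coins and the states of non-$v$ edges, hence is statistically independent of the true states $(\st(u,v))_{u \in U}$. I then set $x_v(\bm{u}) := \mb{P}[\bm{u}^v = \bm{u}]$, so that \eqref{eqn:relaxation_efficiency_distribution_restatement} holds trivially.

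The key technical step is to show that coupling the fictive and real executions, the actual sequence of $v$-probes made by $\scr{A}$ coincides with the shortest prefix of $\bm{u}^v$ ending at a truly active $v$-edge (or is all of $\bm{u}^v$ if no truly active $v$-edge is encountered). This is intuitive because before the first truly active $v$-probe, the two executions differ only in the (unobserved) states of future $v$-probes, and after the first truly active $v$-probe commitment halts further $v$-probes in the real execution. It follows that $\scr{A}$ matches $u$ to $v$ exactly when $u = u_i^v$ for some $i$, the first $i-1$ entries of $\bm{u}^v$ have truly inactive $v$-edges, and $(u,v)$ is truly active; by independence this event has probability $\sum_i \sum_{\bm{u}: u_i = u} g^i_v(\bm{u}) \cdot x_v(\bm{u})$. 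Summing over $v$ and comparing to the relaxed matching constraint at $u$ gives constraint \eqref{eqn:relaxation_efficiency_matching_restatement}, and summing over $u$ after multiplying by $w_{u,v}$ gives exactly the LP objective for $\mb{E}[\val(\scr{A}(G))]$.

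The hard part will be formalizing the fictive-execution coupling, in particular ensuring that $\bm{u}^v$ is a well-defined random variable measurable in only the coins and non-$v$ edge states, and that the prefix-coupling identity above holds almost surely. Some care is needed because $\scr{A}$'s choice of which $v$-edge (if any) to probe next may be interleaved with probes of other online vertices, and may depend on observed inactivity of previous $v$-probes; the coupling argument relies on the observation that \emph{up to the first truly active $v$-probe}, this conditioning is satisfied identically in both executions. Once this is established, combining the above yields $\OPT_{rel}(G) \le \LPOPT_{new}(G)$, and together with the non-adaptive lower bound from the first paragraph, the non-adaptive $\textsc{VertexProbe}$-per-$v$ algorithm witnesses $\OPT_{rel}(G) = \LPOPT_{new}(G)$, completing the proof.
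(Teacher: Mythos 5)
Your approach is correct, but it is genuinely different from the paper's. The paper proves Theorem~\ref{thm:non_adaptive_optimum} indirectly, by passing through Theorem~\ref{thm:relaxed_benchmark_characterization}: it introduces a second, set-indexed LP, \ref{LP:relaxed_benchmark}, shows the relaxed benchmark yields a feasible solution via a negative-correlation argument (constraint~\eqref{eqn:costello_general_constraint}), and then rounds an optimum LP solution into a non-adaptive algorithm by invoking the black-box probing subroutine $\scr{B}_v(R)$ of Gamlath et al.~\cite{Gamlath2019} (Theorem~\ref{thm:costello_svensson_guarantee}); the equality with $\LPOPT_{new}(G)$ is then obtained as a corollary. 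You instead prove $\OPT_{rel}(G)=\LPOPT_{new}(G)$ directly, without ever introducing \ref{LP:relaxed_benchmark} or the Gamlath--Costello subroutine. The key idea of a ``fictive execution'' in which $v$-edges are declared inactive cleanly isolates a random probe sequence $\bm{u}^v$ measurable in the coins and non-$v$ edge states alone, so that $(x_v(\bm{u}))$ is well-defined and independent of $(\st(u,v))_u$, which is exactly the structure \ref{LP:new} is built to exploit. The prefix-coupling argument you sketch is sound: an inductive step shows that the algorithm's state just before the $j$-th $v$-probe agrees in the real and fictive runs as long as all previous $v$-probes were inactive, which is enough to determine the matching outcome for $v$. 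Your approach is more self-contained (no appeal to an external probing subroutine, no exponential-constraint LP), while the paper's buys the explicit generalization of the Gamlath et al.\ LP, which they advertise as a contribution in its own right.

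Two small points worth tightening. First, your claim that the real $v$-probe sequence \emph{equals} the shortest prefix of $\bm{u}^v$ through the first truly active edge is not literally true if $\scr{A}$ chooses to probe further $v$-edges after $v$ is matched (which the model permits, since further active probes need not be included once $v\in\scr{M}$); but the matching outcome for $v$ is determined entirely by that prefix, so the conclusion you draw from it is still correct (alternatively, one may assume WLOG that $\scr{A}$ makes no wasted $v$-probes). Second, \textsc{VertexProbe} as stated in Algorithm~\ref{alg:vertex_probe} stops probing once an active edge is found, so the \emph{set} of probes it makes depends on edge states; under the paper's definition of non-adaptive you should either run the ``keep probing all of $\bm{u}^*$'' variant or observe that the induced matching is identical, so that specifying $\lambda$ as the concatenation of the drawn tuples and matching greedily gives the same one-sided matching non-adaptively.
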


We defer the proof of Theorem \ref{thm:non_adaptive_optimum} for now,
and instead show how it allows us to prove Theorem \ref{thm:new_LP_relaxation}.
In fact, we prove that \ref{LP:new_restatement} encodes the value
of the relaxed benchmark exactly, thus implying Theorem \ref{thm:new_LP_relaxation}
since $\OPT(G) \le \OPT_{rel}(G)$. 

\begin{theorem}

For any stochastic graph $G$, an optimum solution to \ref{LP:new_restatement} has value
equal to $\OPT_{rel}(G)$, the value of the relaxed benchmark on $G$.

\end{theorem}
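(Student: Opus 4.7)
The plan is to prove equality by showing both inequalities $\LPOPT_{new}(G) \le \OPT_{rel}(G)$ and $\LPOPT_{new}(G) \ge \OPT_{rel}(G)$, with the key tool being Theorem \ref{thm:non_adaptive_optimum} for one direction and an explicit algorithmic construction for the other.

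For the direction $\LPOPT_{new}(G) \le \OPT_{rel}(G)$, I would take an optimum solution $(x_{v}(\bm{u}))_{v \in V, \bm{u} \in U^{(\le \ell_v)}}$ to \ref{LP:new_restatement} and construct an (explicitly non-adaptive) relaxed probing algorithm $\scr{A}$ which, independently for each $v \in V$, runs the \textsc{VertexProbe} subroutine using the variables $(x_{v}(\bm{u}))_{\bm{u}}$. That is, draw $\bm{u}^*$ with probability $x_{v}(\bm{u}^*)$ (passing with the residual probability, using \eqref{eqn:relaxation_efficiency_distribution_restatement}), probe the edges of $\bm{u}^*$ in order adjacent to $v$, and place the first active edge into $\scr{M}$. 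By construction $v$ appears in at most one matched edge. The main verification is the expected matching constraint on each $u \in U$: since probing decisions for $v$ are independent of $(\st(u,v))_{u \in U}$, the probability that $(u,v)$ is the first active edge in $v$'s probe sequence equals $\sum_{i, \bm{u}^*: u_i^* = u} g_v^i(\bm{u}^*) \cdot x_v(\bm{u}^*)$. Summing over $v$ and invoking \eqref{eqn:relaxation_efficiency_matching_restatement} gives the bound $\le 1$ on the expected number of matched edges incident to $u$, so $\scr{A}$ is a valid relaxed probing algorithm. The same decomposition shows that $\mb{E}[\val(\scr{A}(G))]$ equals the LP objective exactly.

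For the direction $\OPT_{rel}(G) \le \LPOPT_{new}(G)$, I would apply Theorem \ref{thm:non_adaptive_optimum} to obtain a non-adaptive relaxed probing algorithm $\scr{A}^*$ achieving value $\OPT_{rel}(G)$. By non-adaptivity, $\scr{A}^*$ is equivalent to sampling a (possibly correlated across vertices) random ordering $\lambda$ of edges subject to the patience constraints, then probing in that order. For each $v \in V$, let $\bm{U}_v$ denote the (random) ordered tuple of edges adjacent to $v$ appearing in $\lambda$, and define
\[
    x_{v}(\bm{u}) := \mb{P}[\bm{U}_v = \bm{u}]
\]
for $\bm{u} \in U^{(\le \ell_v)}$. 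Constraint \eqref{eqn:relaxation_efficiency_distribution_restatement} is immediate because the events $\{\bm{U}_v = \bm{u}\}_{\bm{u}}$ are disjoint. Since edge states are independent of $\bm{U}_v$, the probability that $(u,v)$ is the first active edge probed adjacent to $v$ equals $\sum_{i,\bm{u}^*:u_i^* = u} g_v^i(\bm{u}^*) x_v(\bm{u}^*)$; summing this over $v$ gives the expected number of edges in $\scr{M}$ incident to $u$, which is $\le 1$ by the definition of the relaxed benchmark, yielding \eqref{eqn:relaxation_efficiency_matching_restatement}. The same accounting identifies $\mb{E}[\val(\scr{A}^*(G))]$ with the LP objective at this feasible point.

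The subtlety I expect to be the main sticking point, rather than an obstacle per se, is ensuring the passage from $\scr{A}^*$ back to an LP solution does not require the ordered tuples $(\bm{U}_v)_{v \in V}$ to be mutually independent: they need not be, and this is exactly why \ref{LP:new_restatement} only constrains \emph{marginal} distributions at each $v$ and imposes an expectation bound on the $u$-side. Theorem \ref{thm:non_adaptive_optimum} is doing all of the heavy lifting here, reducing what could have been an analysis of arbitrary adaptive strategies (which would not obviously decompose per vertex as the LP does) to a clean counting argument. Combining the two directions gives $\LPOPT_{new}(G) = \OPT_{rel}(G)$, which in particular implies $\OPT(G) \le \OPT_{rel}(G) = \LPOPT_{new}(G)$, proving Theorem \ref{thm:new_LP_relaxation}.
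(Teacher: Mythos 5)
Your proposal is correct and follows essentially the same route as the paper: it proves both inequalities, using \textsc{VertexProbe} per vertex to realize an LP solution as a relaxed probing algorithm for one direction, and invoking Theorem \ref{thm:non_adaptive_optimum} together with the definition $x_v(\bm{u}) := \mb{P}[\text{$\scr{A}^*$ probes $(u_i,v)_{i=1}^{|\bm{u}|}$ in order}]$ for the other. Your observation that the $(\bm{U}_v)_{v\in V}$ need not be mutually independent -- and that this is exactly why \ref{LP:new_restatement} only needs marginal constraints per $v$ plus an expectation bound per $u$ -- is a useful piece of intuition that the paper leaves implicit.
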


\begin{proof}

Suppose we are presented a solution $(x_{v}(\bm{u}))_{v \in V, \bm{u} \in U^{ (\le \ell_v)}}$
to \ref{LP:new_restatement}. We observe then the following relaxed probing algorithm:
\begin{enumerate}
\item $\scr{M} \leftarrow \emptyset$.
\item For each $v \in V$, set $e \leftarrow \textsc{VertexProbe}(G, (x_{v}(\bm{u}))_{\bm{u} \in U^{ (\le \ell_v)}}, v).$
\item If $e \neq \emptyset$, then let $e=(u,v)$ and set $\scr{M}(v) = u$.
\item Return $\scr{M}$.
\end{enumerate}
Using Lemma \ref{lem:fixed_vertex_probe}, it is clear that
\[
	\mb{E}[ \val(\scr{M})] = \sum_{v \in V} \sum_{\bm{u} \in U^{( \le \ell_{v})}} \left( \sum_{i=1}^{|\bm{u}|} w_{u_i,v} \, g^{i}_{v}(\bm{u}) \right) \cdot x_{v}(\bm{u}).
\]
Moreover, each vertex $u \in U$ is matched by $\scr{M}$ at most once in expectation, as a consequence of
\eqref{eqn:relaxation_efficiency_matching_restatement}.

In order to complete the proof, it remains to show that if $\scr{A}$ is
an optimum relaxed probing algorithm,
then there exists a solution to \ref{LP:new_restatement} whose value is equal to
$\mb{E}[\val(\scr{A}(G))]$ (where $\scr{A}(G)$ is the one-sided matching returned by $\scr{A}$).
In fact, by Theorem \ref{thm:non_adaptive_optimum}, we may assume that $\scr{A}$
is non-adaptive.

Observe then that for each $v \in V$ and $\bm{u} =(u_{1}, \ldots ,u_{k}) \in U^{( \le \ell_v)}$ with $k=|\bm{u}|$ we can define
\[
	x_{v}(\bm{u}):= \mb{P}[\text{$\scr{A}$ probes the edges $(u_{i},v)_{i=1}^{k}$ in order}]. 
\]
Setting $\scr{M}= \scr{A}(G)$ for convenience, observe that 
if $\val(\scr{M}(v))$ corresponds to the weight of the edge
assigned to $v$ (which is $0$ if no assignment is made),
then
\[
	\mb{E}[ \val(\scr{M}(v))] = \sum_{\bm{u} \in U^{( \le \ell_{v})}} \left( \sum_{i=1}^{|\bm{u}|} w_{u_i,v} \, g^{i}_{v}(\bm{u}) \right) \cdot x_{v}(\bm{u}),
\]
as $\scr{A}$ is non-adaptive.

Moreover, for each $u \in U$,
\[
	\sum_{v \in V} \sum_{i=1}^{\ell_v} \sum_{\substack{ \bm{u}^* \in U^{ ( \le \ell_{v})}: \\ u_{i}^*=u}} 
g_{v}^{i}(\bm{u}^*) \cdot x_{v}( \bm{u}^*)  \leq 1,
\]
by once again using the non-adaptivity of $\scr{A}$. The proof is therefore complete.

\end{proof}

\subsection{Non-adaptivity in the Relaxed Stochastic Matching Problem}

We now argue that in the relaxed stochastic matching problem, the relaxed benchmark
does not require adaptivity. Our approach is to first generalize the LP used by Gamlath et al. \cite{Gamlath2019}
to the case of arbitrary patience. In doing so, we argue that this generalized
LP also encodes the relaxed stochastic matching problem. Moreover,
given an optimum solution to this LP, we can employ the techniques of 
Gamlath et al. \cite{Gamlath2019} and Costello et al. \cite{costello2012matching}
to recover an optimum probing algorithm which is non-adaptive. These observations
immediately imply Theorem \ref{thm:non_adaptive_optimum}, which allows
us to complete the proof of Theorem \ref{thm:new_LP_relaxation}.

Suppose that $G=(U,V,E)$ is an arbitrary stochastic graph.
For each $S \subseteq U$ and $v \in V$, we first define 
\[
	p(S,v):= 1 - \prod_{u \in S}(1 -p_{u,v}),
\]
which corresponds to the probability that an edge between $v$ and $S$ is active. Moreover,
for $1 \le k \le |U|$, denote $\binom{U}{k}$ as the collection of subsets of $U$ of size $k$,
and $\binom{U}{\le k} = \cup_{i=1}^{k} \binom{U}{i}$ as the collection of (non-empty) subsets of $U$
of size no greater than $k$.

For each $v \in V$ and $R \in \binom{U}{\le \ell_v}$,
we interpret the variable $\alpha_{v}(R)$ as the probability that the relaxed benchmark
probes the edges $\{v\} \times R$. Moreover, for $u \in U$,
we interpret the variable $z_{u,v}(R)$ as corresponding to the probability that the relaxed probing algorithm
probes the edges $\{v\} \times R$ \textit{and} matches the edge $(u,v)$. In this way, the edge probability $p_{u,v}$ is
implicitly encoded in this variable.
\begin{align}\label{LP:relaxed_benchmark}
\tag{LP-rel}
& \text{maximize}  & \sum_{u \in U, v \in V} \sum_{R \in \binom{U}{\le \ell_v}} w_{u,v} \cdot z_{u,v}(R) \\
&\text{subject to} &\, \sum_{v \in V} \sum_{R \in \binom{U}{\le \ell_v}} z_{u,v}(R) &\leq 1 &&\forall u \in U   \label{eqn:matching_constraint} \\
&  &\sum_{u \in S}  z_{u,v}(R) & \le  p(S,v) \cdot \alpha_{v}(R)  && \forall v \in V, \, R \in \binom{U}{\le \ell_v}, S \subseteq R	\label{eqn:costello_general_constraint}\\ 
&  & \sum_{R \in \binom{U}{\le \ell_v}}\alpha_{v}(R) & \le 1 && \forall v \in V \\
&  &z_{u,v}(R) & = 0   && \forall  u \in U \setminus R, v \in V, R \in \binom{U}{\le \ell_v}  \\
&  &z_{u,v}(R) & \ge 0	&& \forall u \in U, v \in V, R \in \binom{U}{\le \ell_v} \\
&  & \alpha_{v}(R) &\ge 0   && \forall v \in V, R \in \binom{U}{\le \ell_v} 
\end{align}

\begin{remark}
When $G=(U,V,E)$ has full patience, $\alpha_{v}(U)$ may be set to $1$ for each
$v \in V$. As a result, only the $z_{u,v}(U)$ variables become relevant 
for $u \in U$, and so \ref{LP:relaxed_benchmark} generalizes the LP 
of Gamlath et al. in \cite{Gamlath2019} (see \eqref{eqn:svensson_constraints} of Section \ref{sec:prelim}).
\end{remark}

If $\LPOPT_{rel}(G)$ corresponds to the optimum value of this LP, then this value
encodes $\OPT_{rel}(G)$ exactly. Moreover, an optimum solution to \ref{LP:relaxed_benchmark}
induces a relaxed probing algorithm which is non-adaptive:

\begin{theorem}\label{thm:relaxed_benchmark_characterization}

For any stochastic graph $G$, we have that
\[
	\LPOPT_{rel}(G) = \OPT_{rel}(G).
\]
Moreover, there exists a relaxed probing algorithm which is non-adaptive
and optimum, thereby proving Theorem \ref{thm:non_adaptive_optimum}.

\end{theorem}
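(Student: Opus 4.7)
The plan is to prove equality of $\LPOPT_{rel}(G)$ and $\OPT_{rel}(G)$ via two matching inequalities, with the harder direction also furnishing the non-adaptive implementation asserted in Theorem \ref{thm:non_adaptive_optimum}.

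For $\LPOPT_{rel}(G) \ge \OPT_{rel}(G)$, I would take an arbitrary (possibly adaptive) relaxed probing algorithm $\scr{A}$ and, for each $v \in V$, attach to $\scr{A}$ an \emph{imagined} sequence $(u_1, \ldots , u_k)$ of $v$-probes, obtained by running $\scr{A}$ in a hypothetical world where all $v$-edges are declared inactive. This sequence is a function of the internal coins of $\scr{A}$ and the states of edges not adjacent to $v$ alone, and in the real run $\scr{A}$ probes exactly the prefix $(u_1, \ldots , u_j)$ up to the first active index $j$ (or the whole sequence, if none is active) and matches $v$ to $u_j$ by commitment. I would then set $\alpha_v(R)$ to be the probability that the imagined sequence has underlying set $R$, and $z_{u,v}(R)$ to be the joint probability that this event occurs and $\scr{A}$ matches $v$ to $u$. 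The constraint $\sum_R \alpha_v(R) \le 1$ is immediate, \eqref{eqn:matching_constraint} is the one-sided matching property of $\scr{A}$, and \eqref{eqn:costello_general_constraint} holds because, conditional on the imagined sequence having support $R$, the event that $\scr{A}$ matches $v$ inside $S \subseteq R$ entails that some edge of $\{v\}\times S$ is active, an event of probability at most $p(S,v)$ (using independence of $(\st(u,v))_{u \in U}$ from the external randomness defining the imagined sequence). A routine calculation then shows the objective value of $(\alpha_v, z_{u,v})$ equals $\mb{E}[\val(\scr{A}(G))]$.

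For $\LPOPT_{rel}(G) \le \OPT_{rel}(G)$ I would rely on the following feasibility lemma, a direct extension of the technique of Costello et al.\ \cite{costello2012matching} and Gamlath et al.\ \cite{Gamlath2019} from full patience to arbitrary restricted neighbourhoods: given $v \in V$, $R \subseteq U$ and non-negative reals $(q_u)_{u \in R}$ satisfying $\sum_{u \in S} q_u \le p(S,v)$ for every $S \subseteq R$, there is a distribution over orderings of $R$ such that probing $\{v\}\times R$ in the sampled order with commitment matches $v$ to $u$ with probability exactly $q_u$. Given an optimum solution $(\alpha_v(R), z_{u,v}(R))$ to \ref{LP:relaxed_benchmark}, I would apply this lemma with $q_u := z_{u,v}(R)/\alpha_v(R)$ for each $(v,R)$ with $\alpha_v(R) > 0$; constraint \eqref{eqn:costello_general_constraint} is exactly the required hypothesis. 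The induced non-adaptive algorithm then processes each $v \in V$ independently by sampling $R \in \binom{U}{\le \ell_v}$ with probability $\alpha_v(R)$ (passing on $v$ with the leftover probability), sampling an ordering of $R$ from the lemma, and probing $\{v\}\times R$ in that order under commitment. By construction the edge $(u,v)$ is added to $\scr{M}$ with probability $z_{u,v}(R)$, so the expected weight is $\LPOPT_{rel}(G)$, while \eqref{eqn:matching_constraint} guarantees that each $u \in U$ is matched at most once in expectation.

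The main obstacle is the feasibility lemma. For $R = U$ (full patience) it is established in \cite{costello2012matching,Gamlath2019} via an inductive construction that, at each step, selects the next vertex to probe from a weighted distribution calibrated to the remaining constraints. I would need to check that this construction carries through verbatim when $R$ is an arbitrary subset of $U$; since the argument is local to $v$ and uses only the probabilities $(p_{u,v})_{u \in R}$ together with the $2^{|R|}$ constraints indexed by subsets of $R$, substituting $R$ for $U$ throughout should be essentially cosmetic.
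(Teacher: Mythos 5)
Your proof is correct and essentially matches the paper for the easy direction $\LPOPT_{rel}(G) \le \OPT_{rel}(G)$: both apply the (subset-restricted) fixed-vertex probing result of Gamlath et al.\ / Costello et al.\ (stated in the paper as Theorem \ref{thm:costello_svensson_guarantee}, which is already phrased for an arbitrary $R \subseteq U$, so no new verification is actually required there) to the normalized values $z_{u,v}(R)/\alpha_v(R)$ and then mix over $R$ with weights $\alpha_v(R)$.

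For the harder direction $\OPT_{rel}(G) \le \LPOPT_{rel}(G)$, however, you take a genuinely different and arguably cleaner route. The paper sets $\alpha_v(R)$ to be the probability that the \emph{actual} set of $v$-probes equals $R$, which is an event that \emph{does} depend on the states $(\st(u,v))_{u \in U}$; consequently they must prove the negative correlation $\mb{P}[E_v(R) \cap \bigcup_{u \in S}\{\st(u,v)=1\}] \le \mb{P}[E_v(R)] \cdot p(S,v)$ by hand, decomposing $E_v(R)$ over ordered tuples in $R^{(k)}$ and exploiting that the decision to probe $(u_k,v)$ is made before $\st(u_k,v)$ is observed. Your ``imagined sequence'' coupling sidesteps this entirely: by defining $\alpha_v(R)$ through the counterfactual run in which every $v$-edge is forced inactive, the event $I_v(R)$ becomes measurable with respect to the algorithm's coins and the non-$v$-edge states alone, hence genuinely \emph{independent} of $(\st(u,v))_{u \in U}$, and constraint \eqref{eqn:costello_general_constraint} falls out of a product-of-probabilities rather than a correlation inequality. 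The two constructions yield different feasible LP points (your $R$'s are stochastically larger, since the real run probes only a prefix of the imagined sequence), but both achieve objective value $\mb{E}[\val(\scr{A}(G))]$, so either serves. One small point worth making explicit in your write-up: the relaxed benchmark respects patience in every instantiation of the edge states, including the all-inactive counterfactual, which is what guarantees $|R| \le \ell_v$ and hence that $\alpha_v$ is supported on $\binom{U}{\le \ell_v}$; and the support condition $z_{u,v}(R)=0$ for $u \notin R$ follows because the real run only ever probes a prefix of the imagined sequence, so $v$ cannot commit to a vertex outside $R$ on the event $I_v(R)$.
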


Before proving this theorem, we first
state a key result from the work of Gamlath et al. \cite{Gamlath2019},
which helps motivate constraint \eqref{eqn:costello_general_constraint} of \ref{LP:relaxed_benchmark}.
We mention that an almost identical guarantee is also proven by Costello et al. in \cite{costello2012matching}.

\begin{theorem}[\cite{Gamlath2019}] \label{thm:costello_svensson_guarantee}
Suppose that $G=(U,V,E)$ is a stochastic graph, and $v \in V$ and $R \subseteq U$ are fixed.
Assume that there are non-negative values $(y_{u,v})_{u \in R}$, such that
\begin{equation} \label{eqn:svensson_requirement}
	\sum_{u \in S} y_{u,v} \le p(v,S)
\end{equation}
for each $S \subseteq R$. Under these assumptions,
there exists a non-adaptive and committal probing algorithm, say $\scr{B}_{v}(R)$,
which processes the single online node $v$ while executing on the stochastic sub-graph $G[\{v\} \cup R]$.
Moreover, it has the guarantee that
\[
	\mb{P}[\text{$\scr{B}_{v}(R)$ matches $v$ to $u$}] = y_{u,v}
\]
for each $u \in R$.
\end{theorem}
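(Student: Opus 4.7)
The plan is to recognize the feasibility condition $\sum_{u \in S} y_{u,v} \le p(v,S) = 1 - \prod_{u \in S}(1-p_{u,v})$ as defining a polymatroid, and then realize every extreme point of this polymatroid by an explicit probing strategy. Define $f(S) := p(v,S)$ for $S \subseteq R$; then $f$ is monotone and submodular with $f(\emptyset) = 0$, since its marginal gain
\[
	f(S \cup \{u\}) - f(S) = p_{u,v} \prod_{u' \in S}(1 - p_{u',v})
\]
is non-increasing in $S$. The hypothesis of the theorem states precisely that $(y_{u,v})_{u \in R}$ lies in the polymatroid $P_f := \{y \in \mathbb{R}^{R}_{\ge 0} : y(S) \le f(S) \text{ for all } S \subseteq R\}$.

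My first step would be to characterize the vertices of $P_f$ via the greedy algorithm: for each ordered subset $\pi = (u_1, \ldots, u_k)$ of $R$, define $y^\pi \in \mathbb{R}^R$ by
\[
	y^\pi_{u_i} := f(\{u_1,\ldots,u_i\}) - f(\{u_1,\ldots,u_{i-1}\}) = p_{u_i,v} \prod_{j<i}(1 - p_{u_j,v})
\]
for $i \in [k]$, and $y^\pi_u := 0$ otherwise. Standard polymatroid theory (Edmonds) implies $y^\pi \in P_f$ and that every vertex of $P_f$ arises this way. The crucial observation is that $y^\pi_{u_i}$ coincides exactly with the probability that a non-adaptive committal algorithm which probes $(u_1, v), (u_2, v), \ldots, (u_k, v)$ in that order matches $u_i$ to $v$, while any $u \notin \{u_1,\ldots,u_k\}$ is matched with probability zero, since it is never probed.

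Next, by Carath\'eodory's theorem in $\mathbb{R}^R$, I would decompose the target $(y_{u,v})_{u \in R} \in P_f$ as a finite convex combination $\sum_{i=1}^N \lambda_i \, y^{\pi_i}$ of vertices. The algorithm $\scr{B}_v(R)$ then samples $i \in [N]$ with probability $\lambda_i$ and probes the edges of $\{v\} \times \pi_i$ in the order specified by $\pi_i$, committing to the first active edge encountered. This procedure is non-adaptive (the probe sequence is drawn independently of the edge states), committal by construction, and respects the patience bound since $|\pi_i| \le |R| \le \ell_v$. By linearity,
\[
	\mb{P}[\scr{B}_v(R) \text{ matches } v \text{ to } u] = \sum_{i=1}^N \lambda_i \, y^{\pi_i}_u = y_{u,v}
\]
for every $u \in R$, which is the guarantee we want.

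The main obstacle will be establishing the vertex characterization of $P_f$ rigorously -- namely, that each greedy vector $y^\pi$ is an extreme point of $P_f$ and that every extreme point of $P_f$ arises in this way. This is a standard fact in polymatroid theory (the tight constraints at $y^\pi$ form a nested chain, hence are linearly independent and uniquely determine $y^\pi$), but it must be invoked carefully for the submodular function $f$ at hand. A more self-contained alternative, in the spirit of \cite{costello2012matching}, would be to argue by induction on $|R|$: select a pivot element $u^* \in R$, probe it first with an appropriate probability, and pass to conditional (rescaled) marginals on $R \setminus \{u^*\}$. The delicate point in that approach is verifying that the residual marginals still satisfy the polymatroid constraints with respect to $f$ restricted to $R \setminus \{u^*\}$, which requires a careful choice of pivot and a direct use of submodularity; this is why I would prefer the cleaner polymatroid/Carath\'eodory route outlined above.
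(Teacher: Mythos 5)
The paper does not actually prove this theorem: it is stated with the attribution \cite{Gamlath2019} and invoked as a black box, with a footnote remarking that Costello et al.\ \cite{costello2012matching} prove a near-identical guarantee. So your proposal is a reconstruction of a cited result rather than an alternative to a proof appearing in the paper.

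Your reconstruction is correct, and it is essentially the argument that Gamlath et al.\ give. The observation that $f(S) := 1 - \prod_{u\in S}(1-p_{u,v})$ is normalized, monotone, and submodular is exactly right, and it puts the hypothesis \eqref{eqn:svensson_requirement} in the form ``$(y_{u,v})_{u\in R}$ lies in the polymatroid $P_f$.'' The identification of the greedy marginals $y^{\pi}_{u_i} = p_{u_i,v}\prod_{j<i}(1-p_{u_j,v})$ with the matching probabilities of a fixed-order committal probe sequence (these are precisely the quantities the paper denotes $g^{i}_{v}(\pi)$) is the crux, and the vertex characterization of polymatroids plus Carath\'eodory then delivers the required mixture over probe orders with exact marginals. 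You are right to flag the vertex characterization as the place that must be invoked with care; the chain-of-tight-sets argument you sketch is the standard and correct way to discharge it, and the feasibility of each greedy vector in $P_f$ follows from submodularity by the usual telescoping comparison. Two small caveats: you invoke $|R| \le \ell_v$ to argue the probe sequence respects patience, but the theorem statement never imposes this; it holds in every place the paper uses the theorem (where $R \in \binom{U}{\le \ell_v}$), and Gamlath et al.\ themselves work in full patience where the issue is vacuous, so this is a matter of scope rather than correctness. Also, the Carath\'eodory decomposition need not be efficiently computable over exponentially many greedy vertices, but since the theorem as stated is purely existential this is immaterial here (Gamlath et al.\ do address efficiency separately via a different mechanism). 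Overall this is a faithful and correct reconstruction.
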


\begin{proof}[Proof of Theorem \ref{thm:relaxed_benchmark_characterization}]

We first argue that the relaxed benchmark corresponds to a
solution of \ref{LP:relaxed_benchmark}. Observe that since the relaxed benchmark
respects commitment, we may assume that for each $v \in V$, if a probe involving $v$
yields an active edge, then $v$ is never probed again.

For each $v \in V$ and $R \subseteq U$, where $k:=|R|$ satisfies $k \le \ell_v$,
define $\alpha_{v}(R)$ as the probability that the relaxed benchmark
probes the $k$ edges between $R$ and $v$. 

If we now take $u \in U$, then define $z_{u,v}(R)$ as the probability that the relaxed benchmark
probes the edges $R \times \{v\}$, \textit{and} matches $u$ to $v$.
For convenience, we also define $z_{u,v}$ as the probability that the relaxed benchmark matches
$u$ to $v$. We then get that
\[
	z_{u,v} = \sum_{R \in \binom{U}{\le \ell_v}} z_{u,v}(R)
\]
for each $u \in U$ and $v \in V$. If $\scr{M}$ corresponds to the one-sided matching
returned by the relaxed benchmark, then observe
that
\begin{equation}\label{eqn:relaxed_benchmark_expected_value}
	\mb{E}[ \val(\scr{M})] = \sum_{u \in U, v \in V} w_{u,v} \cdot z_{u,v}.
\end{equation}

We now claim that $(z_{u,v}(R))_{u \in U, v \in V, R \in \binom{U}{\le \ell_v}}$ together with
$(\alpha_{v}(R))_{v \in V, R \in \binom{U}{\le \ell_v}}$ corresponds to a feasible solution
to \ref{LP:relaxed_benchmark}. Given $v \in V$, $R \subseteq U$, and $S \subseteq R$,
we focus on proving that \eqref{eqn:costello_general_constraint} holds,
as the other constraints are easily seen to hold. For notational simplicity, we focus on the case of this constraint
when $S=R$, however the general case follows identically.

Let us define $E_{v}(R)$ as the event in which
the relaxed benchmark probes the edges $\{v\} \times R$.
Observe now that if the relaxed benchmark matches $v$ to
some vertex of $R$, then
one of the edges of $\{v\} \times R$
must have been active. As a result,
\begin{equation} \label{eqn:relaxed_benchmark_negative_correlation}
	\sum_{u \in R}  z_{u,v}(R) \le \mb{P}[E_{v}(R) \cap \bigcup_{u \in R} \{\st(u,v) =1\}]. 
\end{equation}
On the other hand, we claim
that the events $E_{v}(R)$ and $\cup_{u \in R} \{\st(u,v) =1\}$ are negatively
correlated. That is,
\begin{equation}\label{eqn:negative_correlation}
	\mb{P}[E_{v}(R) \cap \bigcup_{u \in R} \{\st(u,v) =1\}] \le \mb{P}[E_{v}(R)] \cdot \mb{P}[\cup_{u \in R} \{\st(u,v) =1\}] =
		\alpha_{v}(R) \cdot p(v,R),
\end{equation}
where $p(v,R)=  1 - \prod_{u \in R}(1 -p_{u,v})$.
To see this, for each $i \in [\ell_v]$ define $X_{v}^{i}$ as the offline vertex of
the $i^{th}$ edge involving $v$ which is probed by the relaxed benchmark. By convention, if no such vertex exists,
then $X_{v}^{i} := \emptyset$. Set $k= |R|$,
and define $R^{(k)}$ as the set of $k$-length tuples of $R$ whose
coordinates are all distinct. Observe that since the relaxed benchmark stops probing edges
adjacent to $v$ as soon as it witnesses an active edge involving $v$, $E_{v}(R) \cap \bigcup_{u \in R} \{\st(u,v) =1\}$
occurs if and only if there exists some $\bm{u} \in R^{(k)}$ such that
\[
	\{\text{$\st(u_{k},v) =1$ and $X_{v}^{k}  = u_k$}\} \cap \bigcap_{i=1}^{k-1} \{\text{$X_{v}^{i} = u_i$ and $\st(u_i,v) = 0$}\}.
\]
Moreover, for any $\bm{u} \in R^{(k)}$, 
the relaxed benchmark must decide whether to probe $(u_{k},v)$
before observing $\st(u_{k},v)$. Thus,
\[
	\mb{P}[\{\text{$\st(u_{k},v) =1$ and $X_{v}^{k}  = u_k$}\} \cap \bigcap_{i=1}^{k-1} \{ \text{$X_{v}^{i} = u_i$ and $\st(u_i,v) = 0$}\}]
\]
is equal to
\[
p_{u_{k},v} \cdot \mb{P}[\{X_{v}^{k}  = u_k \cap \bigcap_{i=1}^{k-1} \{\text{$X_{v}^{i} = u_i$ and $\st(u_i,v) = 0$}\}],
\]
which itself is upper bounded by
\[
p(v,R) \cdot \mb{P}[\{X_{v}^{k}  = u_k \} \cap \bigcap_{i=1}^{k-1} \{\text{$X_{v}^{i} = u_i$ and $\st(u_i,v) = 0$} \}].
\]
Thus, \eqref{eqn:negative_correlation} holds after summing over all $\bm{u} \in R^{(k)}$. 
As such, combined with \eqref{eqn:relaxed_benchmark_negative_correlation},
it follows that constraint \eqref{eqn:costello_general_constraint} must be satisfied. 
Since the remaining constraints of \ref{LP:relaxed_benchmark},
are easily seen to hold, we may apply \eqref{eqn:relaxed_benchmark_expected_value},
to conclude that $\OPT_{rel}(G) \le \LPOPT_{rel}(G)$.

Let us now suppose that we are presented a solution to \ref{LP:relaxed_benchmark}, which we
denote by $(z_{u,v}(R))_{u \in U, v \in V, R \in \binom{U}{\le \ell_v}}$ and
$(\alpha_{v}(R))_{v \in V, R \in \binom{U}{ \le \ell_v}}$. Using this
solution, we can derive relaxed probing algorithm which returns a
one-sided matching whose expected value is equal to the LP solution's value.

Let us first fix a vertex  $v \in V$
and consider the values $(\alpha_{v}(R))_{R \in \binom{U}{\le \ell_v}}$
and $(z_{u,v}(R))_{u \in U, R \in\binom{U}{\le \ell_v}}$, where $\alpha_{v}(R) \neq 0$.
Observe that if we fix $R \subseteq U$, $1 \le |R| \le \ell_v$,
then the values $(z_{u,v}(R)/\alpha_{v}(R))_{u \in R}$ satisfy the relevant
inequalities of \eqref{eqn:svensson_requirement}
thanks to constraint \eqref{eqn:costello_general_constraint} of \ref{LP:relaxed_benchmark}. 
Theorem \ref{thm:costello_svensson_guarantee} thus guarantees that there exists a committal probing algorithm for $G[ R \cup \{v\}]$,
say $\scr{B}_{v}(R)$, such that
\[
	\mb{P}[\text{$\scr{B}_{v}(R)$ matches $v$ to $u$}] = \frac{z_{u,v}(R)}{\alpha_{v}(R)}
\]
for each $u \in U$. Moreover, $\scr{B}_{v}(R)$ is non-adaptive;
that is, the probes of $\scr{B}_{v}(R)$ are statistically independent from the edge states, $(\st(u,v))_{u \in R}$.

This suggests the following relaxed probing algorithm, which we denote by $\scr{B}$:

\begin{enumerate}
\item Set $\scr{M} = \emptyset$.
\item For each $v \in V$, pass on $v$ with probability $1 - \sum_{R \in \binom{U}{\le \ell_v}} \alpha_{v}(R)$
\item Otherwise, draw $P \subseteq U$ with probability  $\alpha_{v}(P)$.
\item Execute $\scr{B}_{v}(P)$, and match $v$ to whichever vertex of $U$ (if any) $v$ is matched to by $\scr{B}_{v}(P)$. 
\end{enumerate}

Observe now that
\[
	\mb{P}[\text{$\scr{B}$ matches $v$ to $u$}] = \sum_{R \in \binom{U}{\le \ell_v}} \alpha_{v}(R) \cdot \frac{z_{u,v}(R)}{\alpha_{v}(R)} = z_{u,v},
\]
Thus,
\[
	\mb{E}[ \val(\scr{M})] = \sum_{u \in U, v \in V} w_{u,v} \cdot z_{u,v}.
\]
Moreover, if $N_{u}$ counts the number of vertices of $V$ which match to $u$,
then
\[
	\mb{E}[N_{u}] = \sum_{v \in V} z_{u,v} \le 1,
\]
as the values $(z_{u,v})_{v \in V}$ satisfy \eqref{eqn:matching_constraint} by assumption.

Finally, we observe that $\scr{B}$ respects commitment and is non-adaptive.

\end{proof}

We remark that \ref{LP:relaxed_benchmark}, coupled with the probing procedure
of Theorem \ref{thm:costello_svensson_guarantee}, can
be used to construct online probing algorithms with the same competitive guarantees as derived using \ref{LP:new}
and \textsc{VertexProbe}. That being said, \ref{LP:relaxed_benchmark} does not seem to be
poly-time solvable, at least for arbitrary patience\footnote{If $\max_{v \in V} \ell_v$
is upper bounded by a constant, independent of the size of $U$, then the LP is solvable using the separation
oracle presented by Gamlath et al. \cite{Gamlath2019}. A similar statement is true if all the
patience values are close to $|U|$.}. Introducing \ref{LP:new} allows us to derive online probing
algorithms which are poly-time solvable and which do not require appealing to the subroutine
involved in Theorem \ref{thm:costello_svensson_guarantee}

\section{Solving \ref{LP:new} Efficiently} \label{appendix:efficient_probing_algorithms}
In this section, we prove the following result:

\begin{theorem} \label{thm:new_LP_efficient}

An optimum solution to \ref{LP:new} can be found in polynomial time
in the size of the stochastic graph, $G=(U,V,E)$.

\end{theorem}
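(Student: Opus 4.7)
The plan is to reduce to the equivalence of separation and optimization for LPs \cite{Groetschel}. Although \ref{LP:new} has exponentially many variables, its dual \ref{LP:new_dual} has only $|U| + |V|$ variables and (exponentially many) constraints. Thus, if we exhibit a polynomial-time separation oracle for \ref{LP:new_dual}, the ellipsoid method solves \ref{LP:new_dual} in polynomial time, and an optimal solution to \ref{LP:new} can be recovered by solving the restricted primal whose variables correspond to the polynomially many dual constraints queried during the ellipsoid run; LP duality ensures this restricted primal attains the same value as the full LP. The task therefore reduces to constructing the separation oracle.

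A candidate solution $(\alpha_u)_{u \in U}, (\beta_v)_{v \in V}$ is feasible for \ref{LP:new_dual} iff, for every $v \in V$,
\[
    \beta_v \;\ge\; \max_{\bm{u}^* \in U^{(\le \ell_v)}} \sum_{j=1}^{|\bm{u}^*|} g_v^j(\bm{u}^*) \cdot \bigl(w_{u_j^*, v} - \alpha_{u_j^*}\bigr);
\]
if the inequality fails, the maximizing tuple $\bm{u}^*$ is a witness of a violated constraint. Writing $r_u := w_{u,v} - \alpha_u$ and $p_u := p_{u,v}$, the inner maximization is exactly the single-online-node stochastic probing problem that Brubach et al.\ \cite{Brubach2019} consider: order a subset of $U$ of size at most $\ell_v$ so as to maximize the expected reward of the first active edge when probed in that order, with the ``commit at first success'' behaviour built into the expression.

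To solve this subproblem in polynomial time, I would first establish two structural reductions via standard adjacent-swap exchange arguments. First, removing from an optimal tuple any $u$ with $r_u < 0$ weakly increases its value, since the removed term is non-positive and all subsequent survival factors improve; hence optimal tuples contain only $u$ with $r_u \ge 0$. Second, swapping any adjacent pair $(u_j^*, u_{j+1}^*)$ with $r_{u_j^*} < r_{u_{j+1}^*}$ changes the value by a factor of the form $A \cdot p_{u_j^*} p_{u_{j+1}^*}(r_{u_{j+1}^*} - r_{u_j^*}) \ge 0$, so among the retained elements it is without loss to probe in decreasing order of $r_u$. Sorting the items with $r_u \ge 0$ as $u_1, \ldots , u_m$ in decreasing $r_u$, the problem is then solved by the dynamic program
\[
    f(i, k) \;=\; \max\bigl(f(i+1, k),\; p_{u_i} r_{u_i} + (1 - p_{u_i}) \cdot f(i+1, k-1)\bigr),
\]
with $f(m+1, \cdot) = f(\cdot, 0) = 0$, where $f(i, k)$ is the maximum expected reward achievable from items $\{u_i, \ldots , u_m\}$ using at most $k$ further probes. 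Computing $f(1, \ell_v)$ together with a witness tuple takes $O(|U|\,\ell_v)$ time; summed over $v \in V$ this yields the polynomial-time oracle. The main obstacle is verifying the two exchange arguments and the correctness of this recurrence; once these are in hand, the ellipsoid method and the standard restricted-primal recovery complete the proof.
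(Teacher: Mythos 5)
Your proposal follows exactly the paper's route: pass to the dual \ref{LP:new_dual} (polynomially many variables, exponentially many constraints), exhibit a polynomial-time separation oracle, and invoke the ellipsoid method with standard restricted-primal recovery. The paper outsources the inner maximization of $\phi(\bm{u}^*) = \sum_j (w_{u_j^*,v} - \alpha_{u_j^*}) \, g_v^j(\bm{u}^*)$ to Theorem \ref{thm:brubach_optimization}, citing Brubach et al.\ \cite{Brubach2019} for a dynamic program over nonnegative weights and handling possible negativity of $w_{u,v} - \alpha_u$ by restricting to the set $P$ of offline vertices with nonnegative adjusted weight; your sorting-then-DP recurrence $f(i,k)$ makes that cited subroutine self-contained and is correct. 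One small imprecision: your justification for discarding items with $r_u < 0$ --- that ``all subsequent survival factors improve'' --- is not quite right if a \emph{downstream} item also has $r_u < 0$, since increasing its survival probability makes that term worse. The fix is easy: either apply your adjacent-swap exchange argument first so all negative-$r_u$ items sit at the tail, where truncating them is a pure gain, or remove the last negative item and iterate.
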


In order to prove this claim, it suffices to show that \ref{LP:new_dual} has a (deterministic) polynomial time separation
oracle, as a consequence of how the ellipsoid algorithm \cite{Groetschel,GartnerM} executes (see \cite{Williamson,Adamczyk2017,Lee2018} for applications). As such, we restate the dual of \ref{LP:new} for convenience:

\begin{align}\label{LP:new_dual_restatement}
\tag{LP-new-dual}
&\text{minimize} &  \sum_{u \in U} \alpha_{u} + \sum_{v \in V} \beta_{v}  \\
&\text{subject to} & \beta_{v} + \sum_{j=1}^{|\bm{u}^*|} g_{v}^{j}(\bm{u}^*) \cdot \alpha_{u_j^*} \ge \sum_{j=1}^{|\bm{u}^*|} w_{u_j^*,v} \cdot g_{v}^{j}( \bm{u}^*) &&
\forall v \in V, \bm{u}^* \in U^{ ( \le \ell_v)} \\
&&  \alpha_{u} \ge 0 && \forall u \in U\\
&& \beta_{v} \ge 0 && \forall v \in V
\end{align}
Suppose now that we are presented a particular selection dual variables,   
say $( (\alpha_{u})_{u \in U}, (\beta_{v})_{v \in V})$, which may or may not
be a feasible solution to \ref{LP:new_dual_restatement}. Our
separation oracle must determine efficiently whether these variables
satisfy all the constraints of \ref{LP:new_dual_restatement}. In the case
in which the solution is \textit{infeasible}, the oracle must additionally
return a constraint which is violated.

It is clear that we can accomplish this for the non-negativity constraints,
so we hereby assume that $\alpha_{u} \ge 0$ and $\beta_{v} \ge 0$ for all $u \in U$
and $v \in V$.

Let us now fix a particular $v \in V$ in what follows. 
We wish to determine whether there exists some
$\bm{u}^* \in U^{( \le \ell_v)}$  such that
\[
\beta_{v} + \sum_{j=1}^{|\bm{u}^*|} g_{v}^{j}(\bm{u}^*) \cdot \alpha_{u_j^*} < \sum_{j=1}^{|\bm{u}^*|} w_{u_j^*,v} \cdot g_{v}^{j}( \bm{u}^*).
\]
To make such a determination, we consider the function $\phi$, where
\begin{equation}\label{eqn:oracle_optimization}
	\phi(\bm{u}^{*}) :=\sum_{j=1}^{|\bm{u}^*|} ( w_{u_j^*,v} - \alpha_{u_{j}^*} )  \cdot g_{v}^{j}(\bm{u}^*),
\end{equation}

for $\bm{u}^* \in U^{( \le \ell_v)}$.
Our goal is to verify whether
there exists some $\bm{u}^* \in U^{( \le \ell_v)}$ such that $\phi(\bm{u}^*) > \beta_{v}$.  
If we can efficiently check this for a fixed $v \in V$,
then we can iterate the same procedure for all $v \in V$, thus yielding a polynomial
time separation oracle for \ref{LP:new_dual_restatement}. Thus, in order
to prove Theorem \ref{thm:new_LP_efficient}, we only need to prove the following
statement:

\begin{proposition}\label{prop:oracle_optimization}
There exists an efficient deterministic algorithm which checks whether there exists some $\bm{u}^* \in U^{( \le \ell_v)}$ such that $\phi(\bm{u}^{*}) > \beta_{v}$. Moreover, if a tuple with this property exists, then this algorithm will return such a tuple in polynomial time.

\end{proposition}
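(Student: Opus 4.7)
The plan is to solve the maximization of $\phi$ via a combinatorial reduction followed by a one-dimensional dynamic program. First, define the adjusted weight $a_u := w_{u,v} - \alpha_{u}$ for each $u \in U$, so that $\phi(\bm{u}^*) = \sum_{j=1}^{|\bm{u}^*|} a_{u_j^*}\, p_{u_j^*,v} \prod_{i<j}(1-p_{u_i^*,v})$. Operationally, $\phi(\bm{u}^*)$ is the expected adjusted weight obtained by probing the edges $(u_1^*,v),\dots,(u_{|\bm{u}^*|}^*,v)$ sequentially while committing to the first active one; maximizing $\phi$ therefore reduces to choosing a subset $T \subseteq U$ with $|T| \le \ell_v$ together with an order in which to probe the corresponding edges at $v$. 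A standard adjacent-swap exchange argument shows that for any fixed $T$ the optimum order is by decreasing $a_u$: swapping consecutive positions $j, j+1$ alters $\phi$ by exactly $\pi \cdot p_{u_j,v}\cdot p_{u_{j+1},v}\cdot (a_{u_j}-a_{u_{j+1}})$ where $\pi = \prod_{i<j}(1-p_{u_i,v})$, while contributions at positions $j'>j+1$ are unaffected because the product $(1-p_{u_j,v})(1-p_{u_{j+1},v})$ is symmetric in $j$ and $j+1$.

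Consequently, relabel the offline vertices so that $a_{u_1} \ge a_{u_2} \ge \cdots \ge a_{u_m}$ (with $m = |U|$) and restrict attention to probe orders agreeing with this sort. Let $F(i,k)$ denote the optimum value of $\phi$ achievable using some (possibly empty) subset of $\{u_i,\dots,u_m\}$ of cardinality at most $k$, under its intrinsic decreasing-$a$ ordering. The decomposition $\phi_i = x_i\bigl[a_{u_i} p_{u_i,v} + (1-p_{u_i,v})\phi_{i+1}\bigr] + (1 - x_i)\phi_{i+1}$, where $x_i \in \{0,1\}$ indicates whether $u_i$ is included, shows that the optimal choice at index $i$ depends on the suffix only through the scalar $F(i+1,\cdot)$. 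This yields the recurrence
\[
F(i,k) \,=\, \max\bigl\{\, F(i+1,k),\ a_{u_i}\, p_{u_i,v} + (1-p_{u_i,v})\, F(i+1, k-1)\,\bigr\},
\]
with base cases $F(m+1,k) = 0$ and $F(i,0) = 0$, which is evaluable bottom-up in $O(m \ell_v)$ arithmetic operations. The separation oracle then compares $F(1,\ell_v)$ to $\beta_v$ and, if a violation is detected, reconstructs a witness tuple $\bm{u}^*$ by standard DP backtracking through the choices that realized the maximum.

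The main subtlety is justifying the DP recurrence despite the apparent coupling between an included vertex's contribution and the survival probability accumulated before it. This coupling collapses precisely because the exchange argument pins down the order of any chosen subset, so the optimization over $\{u_i,\dots,u_m\}$ is independent of the earlier prefix beyond a multiplicative survival factor that the recurrence extracts explicitly. As a minor sanity check, negative adjusted weights cause no trouble: since $F(i+1,k) \ge F(i+1,k-1) \ge (1-p_{u_i,v})\, F(i+1,k-1)$, whenever $a_{u_i} \le 0$ the DP prefers the exclusion branch, so the returned tuple consists only of vertices with strictly positive adjusted weight. Iterating this oracle over all $v \in V$ and invoking the ellipsoid method as described in the paper then yields a polynomial-time algorithm for \ref{LP:new} itself.
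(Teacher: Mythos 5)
Your proof is correct and follows essentially the same route as the paper: both reduce to maximizing $\phi$ over $U^{(\le \ell_v)}$ with adjusted weights $a_u = w_{u,v} - \alpha_u$ and then solve that maximization by a dynamic program over the offline vertices sorted by decreasing adjusted weight. The difference is one of self-containment: the paper, after pruning $U$ to the set $P$ of vertices with $a_u \ge 0$, simply cites Theorem~\ref{thm:brubach_optimization} (Brubach et al.) as a black box for the nonnegative-weight maximization, whereas you re-derive that DP from scratch via the adjacent-swap exchange lemma and the recurrence $F(i,k) = \max\{F(i+1,k),\, a_{u_i} p_{u_i,v} + (1-p_{u_i,v})F(i+1,k-1)\}$. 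You also fold the handling of negative $a_u$ into the DP itself (the exclusion branch always dominates when $a_{u_i}\le 0$, using $F\ge 0$) rather than pruning to $P$ first; both are valid, and yours makes the justification for the pruning step explicit where the paper leaves it implicit. Either way the oracle runs in $O(|U|\,\ell_v)$ time per vertex $v$, and combined with the ellipsoid method this gives polynomial-time solvability of \ref{LP:new} as claimed.
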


In \cite{Brubach2019}, Brubach et al. consider the setting 
in which one is presented \textit{non-negative} edge weights $(\bar{w}_{u^*,v})_{u^{*} \in U}$,
and the function
\begin{equation}\label{eqn:brubach_optimization}
	\psi(\bm{u}^*) := \sum_{j=1}^{|\bm{u}^*|} \bar{w}_{u_j^*,v} \cdot g_{v}^{j}( \bm{u}^*),
\end{equation}

for $\bm{u}^* \in U^{(\le \ell_v)}$. They show that one can maximize this function in polynomial
time using a deterministic algorithm based on dynamical programming techniques.

\begin{theorem}[\cite{Brubach2019}] \label{thm:brubach_optimization}

For any  $v \in V$ with patience $\ell_v$ and non-negative selection of weights, $(\bar{w}_{u^*,v})_{u^{*} \in U}$,
the function $\psi= \psi(\bm{u}^*)$ in \eqref{eqn:brubach_optimization} can be maximized in polynomial time using
a deterministic procedure.

\end{theorem}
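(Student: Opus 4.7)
The plan is to reduce the signed-coefficient maximization of $\phi$ to the non-negative-weight case already handled by Theorem \ref{thm:brubach_optimization}. For each $u \in U$, define the \emph{residual weight} $\bar{w}_{u} := w_{u,v} - \alpha_{u}$, so that $\phi(\bm{u}^*) = \sum_{j=1}^{|\bm{u}^*|} \bar{w}_{u_j^*} \cdot g_v^{j}(\bm{u}^*)$. These residual weights may be negative, which is precisely what blocks a direct invocation of Theorem \ref{thm:brubach_optimization}. The central claim I would prove is that any maximizer of $\phi$ only uses vertices from $U' := \{u \in U : \bar{w}_u \ge 0\}$; once this is established, Theorem \ref{thm:brubach_optimization} applied with weight vector $(\bar{w}_u)_{u \in U'}$ produces the maximizer in polynomial time.

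The core of the argument is a swap/truncation lemma. Fix an optimal $\bm{u}^* = (u_1, \ldots, u_k) \in U^{(\le \ell_v)}$ and suppose for contradiction that $\bar{w}_{u_j} < 0$ for some $j \in [k]$. Write $B_j := \bar{w}_{u_j} \cdot g_v^{j}(\bm{u}^*) < 0$ for the $j$-th contribution and $C_j := \sum_{i > j} \bar{w}_{u_i} \cdot g_v^{i}(\bm{u}^*)$ for the aggregate suffix contribution. Since the length-$j$ truncation $(u_1, \ldots, u_j)$ is itself a feasible tuple, optimality of $\bm{u}^*$ forces
\begin{equation*}
C_j \;=\; \phi(\bm{u}^*) - \phi\bigl((u_1, \ldots, u_j)\bigr) \;\ge\; 0.
\end{equation*}
Provided $p_{u_j, v} < 1$, let $\bm{u}'^{*}$ be obtained from $\bm{u}^*$ by deleting the $j$-th coordinate; the contributions for indices $i < j$ are unchanged, the $j$-th contribution disappears, and each contribution for $i > j$ is scaled by $1/(1 - p_{u_j,v}) \ge 1$. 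Consequently
\begin{equation*}
\phi(\bm{u}'^{*}) - \phi(\bm{u}^*) \;=\; -B_j \;+\; C_j\cdot \frac{p_{u_j,v}}{1-p_{u_j,v}} \;>\;0,
\end{equation*}
contradicting optimality. The edge case $p_{u_j,v} = 1$ is even easier: every contribution from indices $i > j$ carries a factor of $(1 - p_{u_j,v}) = 0$, so the truncation $(u_1, \ldots, u_{j-1})$ already improves on $\bm{u}^*$ by $|B_j|$, again contradicting optimality. (Vertices with $p_{u,v} = 0$ contribute zero and can be ignored.)

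With the claim in hand, the oracle is immediate: compute $U'$ in $O(|U|)$ time, invoke the deterministic polynomial-time procedure of Theorem \ref{thm:brubach_optimization} on input $(U', v, \ell_v)$ with non-negative weights $(\bar{w}_u)_{u \in U'}$ to obtain a tuple $\bm{u}^*_{\max}$ maximizing $\phi$ over $(U')^{(\le \ell_v)}$, which by the claim is a maximizer of $\phi$ over all of $U^{(\le \ell_v)}$. If $\phi(\bm{u}^*_{\max}) > \beta_v$, return $\bm{u}^*_{\max}$ as a violated constraint; otherwise declare the dual solution feasible at $v$. Iterating this subroutine over all $v \in V$ yields the separation oracle needed for Theorem \ref{thm:new_LP_efficient}.

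The principal obstacle is the swap lemma itself: because the factors $g_v^{j}(\bm{u}^*)$ couple every coordinate of $\bm{u}^*$ through the telescoping products $\prod_{i<j}(1 - p_{u_i,v})$, deleting a single coordinate simultaneously scales \emph{all} subsequent contributions, so a purely local argument cannot succeed. The resolution is to use the global optimality of $\bm{u}^*$ to force the trailing aggregate $C_j$ to be non-negative, which then lets the local deletion be unambiguously improving. Once this structural reduction is in place, the rest of the proof is routine bookkeeping around Theorem \ref{thm:brubach_optimization}.
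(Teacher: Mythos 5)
You have proven the wrong statement. The theorem you were asked to establish is Theorem~\ref{thm:brubach_optimization} itself --- namely, that for \emph{non-negative} weights $(\bar{w}_{u,v})_{u \in U}$, the function $\psi(\bm{u}^*) = \sum_j \bar{w}_{u_j^*,v}\, g_v^j(\bm{u}^*)$ can be maximized over $\bm{u}^* \in U^{(\le \ell_v)}$ in polynomial time by a deterministic procedure. The paper cites this result from Brubach et al.\ as a dynamic-programming algorithm, and supplies no proof of it; a genuine blind proof would have to describe the DP (for instance, establishing a structural lemma about the order in which vertices appear in an optimal tuple under non-negative weights, then showing the residual selection problem admits a polynomial-size DP table) and argue its correctness and running time. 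Your proposal does none of this.

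Instead, your argument opens by explicitly invoking Theorem~\ref{thm:brubach_optimization} as a black box and then proceeds to reduce the signed-weight maximization of $\phi$ to it. That is the content of Proposition~\ref{prop:oracle_optimization}, which in the paper appears \emph{after} the cited theorem and \emph{depends on} it. Treating the target theorem as a given and proving the downstream corollary is circular with respect to the task. To your credit, the swap/truncation lemma you give --- showing that no optimal tuple for $\phi$ can contain a vertex with negative residual weight, because truncation-optimality forces the trailing aggregate $C_j \ge 0$ and deletion then strictly improves $\phi$ --- is correct modulo small edge cases (it silently assumes $g_v^j(\bm{u}^*) > 0$ so that $B_j$ is strictly negative), and it fills in a claim the paper states without justification (``we can restrict our attention to those $\bm{u}^* \in U^{(\le \ell_v)}$ whose entries all lie in $P$''). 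But a careful proof of the wrong proposition is not a proof of Theorem~\ref{thm:brubach_optimization}, and the actual content of that theorem --- the existence and correctness of the dynamic program --- is entirely missing from your submission.
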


We can apply Theorem \ref{thm:brubach_optimization} to prove Proposition \ref{prop:oracle_optimization}.

\begin{proof}[Proof of Proposition \ref{prop:oracle_optimization}]

Let us first define $\bar{w}_{u,v} := w_{u,v} - \alpha_{u}$ for all $u \in U$.
Denote $P$ as those $u \in U$ such that $\bar{w}_{u,v} \ge 0$.
First note that if $P$ is empty, then clearly $\phi(\bm{u}^*) \le 0$
for all $\bm{u} \in  U^{( \le \ell_v)}$, so since $\beta_{v} \ge 0$ by assumption,
there is nothing to prove.

Let us therefore assume that $P \neq \emptyset$. Observe then that
we can restrict our attention to those $\bm{u}^* \in U^{ ( \le \ell_v)}$
whose entries all lie in $P$; namely, $P^{(\le \ell_v)}$. By applying Theorem \ref{thm:brubach_optimization},
we are guaranteed a deterministic procedure
for maximizing $\phi$ on $P^{(\le \ell_v)}$ in polynomial time. 
Let us denote the outcome of this procedure by $\bm{u}_{max}$. Observe then that 
either
\[
	\beta_{v} \ge \phi(\bm{u}_{max}) \ge  \phi(\bm{u}^*)
\]
for all $\bm{u}^* \in U^{ ( \le \ell_v)}$, or
$\bm{u}_{max}$ satisfies 
\[
	\phi(\bm{u}_{max}) > \beta_{v}.
\]
In either case, the procedure satisfies the requirements of Proposition \ref{prop:oracle_optimization},
and so the proof is complete.

\end{proof}

We conclude the section by considering the stochastic known i.i.d. setting.
Specifically, consider a stochastic type 
graph, say $G=(U,V,E)$, with $n$ arrivals draw from the fractional rates, $\bm{r}=(r_{v})_{v \in V}$.
In this case, we can solve \ref{LP:new_iid} efficiently
by presenting a separation oracle for its dual. The reduces to the same
maximization problem just considered, with the caveat that for a fixed
type node $v \in V$, we compare the maximized value to $\beta_{v} \cdot r_{v}$
(where $\beta_v$ is the dual variable associated to $v \in V$).

\section{The Non-committal Benchmark} \label{appendix:non_committal_benchmark}

In this section, we extend our definition of a probing algorithm to general (i.e., not necessarily
bipartite) stochastic graphs, as well as our definitions
of the committal and non-committal benchmarks. We then review \ref{LP:standard_definition}, and show that 
it is in fact a relaxation of the non-committal benchmark. This allows us to also prove 
that \ref{LP:standard_definition_iid} is a relaxation of the non-committal benchmark,
as defined for the stochastic known i.i.d. matching problem. As a corollary, we argue
that many of the results in the stochastic matching literature hold against this stronger benchmark.

We then discuss \ref{LP:new}, and discuss the restricted settings in which is it also
a relaxation of the non-committal benchmark (when once again restricted to online
stochastic matching problems). In the case of arbitrary patience values, we show
that \ref{LP:new} is \textit{not} a relaxation of the non-committal benchmark,
and discuss the limitations of our techniques. In particular, we explain the difficulty
in designing probing algorithms which respect commitment and also
attain large competitive ratios against the non-committal benchmark.

\subsection{Relaxing the Non-committal Benchmark via \ref{LP:standard_definition}}

Let us suppose that $G=(V,E)$ is a general stochastic graph
with edge weights $(w_{e})_{e \in E}$, edge probabilities
$(p_{e})_{e \in E}$ and patience values $(\ell_{v})_{v \in V}$.
We emphasize that $G$ need not be bipartite.

An \textbf{(offline) probing algorithm} must satisfy the requirement
that most $\ell_v$ probes are made to the neighbouring
edges of $v$ for each $v \in V$. We say that a probing algorithm is
\textbf{committal}, provided it satisfies the following
property when constructing its matching: if an edge $e \in E$ is probed, then $e=(u,v)$
must be added to the current matching, provided $u$ and $v$ are currently unmatched.
Alternatively, a probing algorithm is said to be \textbf{non-committal}, provided
the matching it constructs is done in the following manner: if the algorithm
(adaptively) reveals the edges $B \subseteq E$ to be active, then an
optimum matching constructed from $B$ is returned. 

We can also extend our definitions of the committal and non-committal benchmarks to $G$.
Specifically, the committal benchmark corresponds to the optimum committal probing algorithm, 
whose expected value we denote by $\OPT(G)$. Similarly,
the non-committal benchmark corresponds to the optimum non-committal probing algorithm,
whose expected value we denote by $\OPT_{non}(G)$.

We now generalize \ref{LP:standard_definition} to the non-bipartite
case, as originally presented in \cite{BansalGLMNR12} by Bansal et al.:

\begin{align}\label{LP:standard_general}
\tag{LP-std}
& \text{maximize}  & \sum_{e \in E} w_{e} \cdot p_{e} \cdot x_e \\
&\text{subject to} &\, \sum_{\substack{ e \in E: \\ v \in e}}  x_{e} &\leq \ell_v &&\forall v \in V\\
&  &\sum_{\substack{ e \in E: \\ v \in e}} p_{e} \cdot x_{e}  &\leq 1 && \forall v \in V\\	\\
&  &0 \leq x_{e} &\leq 1 && \forall e \in E
\end{align}

Bansal et al. showed that \ref{LP:standard_general} is a relaxation of the committal
benchmark; that is, $\OPT(G) \le \LPOPT_{std}(G)$, where $\LPOPT_{std}(G)$ denotes the
optimum value of \ref{LP:standard_general}.
We claim that this \ref{LP:standard_general} is also a relaxation
of the non-committal benchmark:

\begin{theorem}\label{thm:standard_LP_non_committal_relaxation}

For any (general) stochastic graph $G=(V,E)$, it holds that
\[
	\OPT_{non}(G) \le \LPOPT_{std}(G).
\]
\end{theorem}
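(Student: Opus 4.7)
The plan is to show that for an arbitrary non-committal probing algorithm $\scr{A}$ on $G=(V,E)$, one can construct a feasible solution to \ref{LP:standard_general} whose LP objective equals $\mb{E}[\val(\scr{A}(G))]$. Since $\OPT_{non}(G)$ is by definition the supremum of $\mb{E}[\val(\scr{A}(G))]$ over all non-committal probing algorithms, this will immediately yield $\OPT_{non}(G) \le \LPOPT_{std}(G)$. Concretely, letting $\scr{M}$ denote the (random) matching returned by $\scr{A}$, I will define
\[
x_e := \mb{P}[e \in \scr{M}]/p_e
\]
for each $e \in E$ with $p_e > 0$, and $x_e := 0$ otherwise. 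With this choice, $\sum_e w_e p_e x_e = \mb{E}[\val(\scr{A}(G))]$ by linearity of expectation, so it only remains to verify feasibility.

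Non-negativity is immediate, and the constraint $\sum_{e \ni v} p_e x_e \le 1$ is nothing but the statement that the expected number of edges of $\scr{M}$ incident to $v$ is at most one, which holds since $\scr{M}$ is a matching. For the two remaining inequalities $x_e \le 1$ and $\sum_{e \ni v} x_e \le \ell_v$, I would use that any $e \in \scr{M}$ must first have been probed by $\scr{A}$ and must satisfy $\st(e)=1$, which gives
\[
\mb{P}[e \in \scr{M}] \le \mb{P}[e \text{ is probed and } \st(e) = 1].
\]

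The technical heart of the argument, and in my view the main obstacle, is establishing the identity $\mb{P}[e \text{ is probed and } \st(e)=1] = p_e \cdot \mb{P}[e \text{ is probed}]$ even for adaptive algorithms; this would yield $x_e \le \mb{P}[e \text{ is probed}]$. I would prove this by arguing that the indicator $\mathbf{1}[e \text{ is probed}]$ is measurable with respect to the $\sigma$-algebra generated by $\scr{A}$'s internal randomness together with the states $\st(e')$ of edges probed strictly before $e$; since the family $(\st(e'))_{e' \in E}$ is mutually independent and this $\sigma$-algebra does not involve $\st(e)$, we obtain independence of $\mathbf{1}[e \text{ is probed}]$ from $\st(e)$, from which the identity follows. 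Once this independence is in hand, summing $x_e \le \mb{P}[e \text{ is probed}]$ over $e \ni v$ yields $\sum_{e \ni v} x_e \le \sum_{e \ni v}\mb{P}[e \text{ is probed}] \le \ell_v$, since every realization of $\scr{A}$ probes at most $\ell_v$ edges incident to $v$, and the bound $x_e \le 1$ follows immediately, completing the feasibility check.
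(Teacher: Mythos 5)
Your proof is correct and uses essentially the same approach as the paper: the paper defines two families of variables ($x_e$ for the probing probability and $z_e$ for the matching probability), verifies feasibility for an auxiliary LP (\ref{LP:standard_general_non}), and then converts to \ref{LP:standard_general} via Lemma \ref{lem:LP_equivalence} using the substitution $\bar{x}_e = z_e/p_e$ — which is exactly your definition of $x_e$. You have simply unrolled the two-step argument into a single direct construction. One small point worth noting: the paper's proof of the key inequality $z_e \le p_e\, x_e$ simply asserts that the events ``$e$ is probed'' and ``$e$ is active'' are independent, whereas you spell out the measurability argument (that $\mathbf{1}[e\text{ is probed}]$ is determined by the internal randomness together with the states of edges probed before $e$, a collection independent of $\st(e)$). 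Your more careful version of this step is correct and in fact fills in the one piece the paper leaves implicit.
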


As a result, if an (offline) probing algorithm attains an approximation ratio of $0 \le c \le 1$ against \ref{LP:standard_general}, then it attains this approximation ratio against \textit{non-committal}
benchmark as well. Notably, this implies that the competitive ratios
of the probing algorithms considered in \cite{BansalGLMNR12, Adamczyk15, BavejaBCNSX18} all in fact hold against the 
non-committal benchmark.

We can also consider the known i.i.d. setting, in which we are presented
a known i.i.d. instance $(G, \bm{r},n)$. In this case, an analogous statement of
Theorem \ref{thm:standard_LP_non_committal_relaxation} holds regarding \ref{LP:standard_definition_iid}:

\begin{theorem}\label{thm:standard_LP_non_committal_relaxation_iid}

For any input $(G, \bm{r}, n)$ of the known i.i.d. stochastic matching problem,
\[
	\OPT_{non}(G,\bm{r},n) \le \LPOPT_{std-iid}(G, \bm{r},n).
\]
\end{theorem}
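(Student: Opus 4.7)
The plan is to leverage Theorem \ref{thm:standard_LP_non_committal_relaxation} on each realization of the instantiated graph and then average out to produce a feasible solution to \ref{LP:standard_definition_iid}. Specifically, by the analogue of the identity $\OPT(G,\bm{r},n) = \mathbb{E}[\OPT(\hat{G})]$ used in Lemma \ref{lem:LP_validity_known_iid}, we have $\OPT_{non}(G,\bm{r},n) = \mathbb{E}[\OPT_{non}(\hat{G})]$, where $\hat{G} \sim (G,\bm{r},n)$. Applying Theorem \ref{thm:standard_LP_non_committal_relaxation} pointwise on each realization of $\hat{G}$ gives $\OPT_{non}(\hat{G}) \le \LPOPT_{std}(\hat{G})$, and so it suffices to exhibit a feasible solution to \ref{LP:standard_definition_iid} whose objective value is at least $\mathbb{E}[\LPOPT_{std}(\hat{G})]$.

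To this end, for each realization of $\hat{G}=(U,\hat{V},\hat{E})$, fix (by some deterministic tie-breaking rule) an optimum solution $(x^*_{e})_{e \in \hat{E}}$ to \ref{LP:standard_general} for $\hat{G}$. For each pair $(u,v) \in U \times V$, define
\[
    y_{u,v} := \mathbb{E}\left[\sum_{\hat{v} \in \hat{V}:\, \text{type}(\hat{v})=v} x^*_{u,\hat{v}}\right],
\]
where the expectation is over the randomness in drawing $\hat{G}$. I would then verify the four constraints of \ref{LP:standard_definition_iid}. For the offline matching constraint, fix $u \in U$, swap the sum and expectation, and apply the corresponding constraint of \ref{LP:standard_general} on $\hat{G}$ to obtain
\[
    \sum_{v \in V} p_{u,v}\,y_{u,v} = \mathbb{E}\!\left[\sum_{\hat{v} \in \hat{V}} p_{u,\hat{v}}\,x^*_{u,\hat{v}}\right] \le 1.
\]
For the two constraints indexed by type nodes $v \in V$, after interchanging the outer sum with the expectation, the inner sum $\sum_{u \in U} p_{u,v}\,x^*_{u,\hat{v}}$ (respectively $\sum_{u \in U} x^*_{u,\hat{v}}$) is bounded by $1$ (respectively $\ell_v$) for every $\hat{v}$ of type $v$, yielding a bound of $\mathbb{E}[|\{\hat{v}\in\hat{V}: \text{type}(\hat{v})=v\}|] = r_v$ (respectively $r_v \cdot \ell_v$). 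The box constraint $y_{u,v} \le r_v$ follows similarly from $x^*_{u,\hat{v}} \le 1$, and nonnegativity is immediate.

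Finally, the objective value of $(y_{u,v})_{u \in U, v \in V}$ evaluates to
\[
    \sum_{u \in U, v \in V} w_{u,v}\, p_{u,v}\, y_{u,v} = \mathbb{E}\!\left[\sum_{u \in U,\, \hat{v} \in \hat{V}} w_{u,\hat{v}}\,p_{u,\hat{v}}\,x^*_{u,\hat{v}}\right] = \mathbb{E}[\LPOPT_{std}(\hat{G})],
\]
again by linearity of expectation. Chaining these inequalities yields $\OPT_{non}(G,\bm{r},n) \le \mathbb{E}[\LPOPT_{std}(\hat{G})] \le \LPOPT_{std-iid}(G,\bm{r},n)$, as required.

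There is no substantial obstacle here: the argument is essentially a bookkeeping one in which the random LP solution on the instantiated graph is averaged to produce a deterministic solution on the type graph. The only minor point to attend to is ensuring that $(x^*_e)_{e \in \hat{E}}$ is a well-defined function of $\hat{G}$ so that the expectation is meaningful, which is handled by fixing any deterministic tie-breaking among optimum solutions. The same template (apply a ``single-graph'' LP relaxation realization-by-realization and then average) is what drives the proof of Lemma \ref{lem:LP_validity_known_iid}, so the reasoning here is directly parallel.
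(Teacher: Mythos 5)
Your proof is correct and follows essentially the same route the paper sketches: apply Theorem \ref{thm:standard_LP_non_committal_relaxation} realization-by-realization to $\hat{G} \sim (G,\bm{r},n)$, average the optimum \ref{LP:standard_definition} solutions over the draw of $\hat{G}$ to produce a feasible solution to \ref{LP:standard_definition_iid}, and chain $\OPT_{non}(G,\bm{r},n)=\mathbb{E}[\OPT_{non}(\hat{G})]\le\mathbb{E}[\LPOPT_{std}(\hat{G})]\le\LPOPT_{std-iid}(G,\bm{r},n)$. This is precisely the ``standard conditioning argument'' the paper refers to, mirroring the detailed proof of Lemma \ref{lem:LP_validity_known_iid}.
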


Since the works of \cite{BansalGLMNR12, Adamczyk15, BrubachSSX16,BrubachSSX20} all prove competitive ratios
against \ref{LP:standard_definition_iid}, Theorem \ref{thm:standard_LP_non_committal_relaxation_iid} implies
that these competitive ratios all hold against the non-committal benchmark.

In the remainder of the section, we prove Theorem \ref{thm:standard_LP_non_committal_relaxation},
which we then argue can be used to imply Theorem \ref{thm:standard_LP_non_committal_relaxation_iid}.
It will be convenient to instead work
with a modified version of \ref{LP:standard_general},
where each edge $e \in E$ is instead associated with two variables, namely
$x_{e}$ and $z_{e}$. We interpret the former variable as the probability
that the non-committal benchmark probes the edge $e$, whereas the latter
variable corresponds to the probability that $e$ is included
in the matching constructed by the non-committal benchmark.

\begin{align}\label{LP:standard_general_non}
\tag{LP-std-non}
& \text{maximize}  & \sum_{e \in E} w_{e} \cdot z_{e} \\
&\text{subject to} &\, \sum_{\substack{ e \in E \\ v \in e}}  z_{e} &\leq 1 &&\forall v \in V\\
&  &\sum_{\substack{ e \in E \\ v \in e}} x_{u,v}  &\leq \ell_v && \forall v \in V\\
&	&z_{e} &\le p_{e} \cdot x_{e} && \forall e \in E	\\
&  &x_{e} & \le 1	&& \forall e \in E	\\
&  &x_{e}, z_{e} &\ge 0 && \forall e \in E
\end{align}
We denote $\LPOPT_{std-non}(G)$ as the value of an optimum solution to \ref{LP:standard_general_non}.
It turns out that \ref{LP:standard_general} and \ref{LP:standard_general_non} take the same optimum
value, no matter the stochastic graph $G$:

\begin{lemma}\label{lem:LP_equivalence}

For any stochastic graph $G=(V,E)$,
\[
	\LPOPT_{std}(G) = \LPOPT_{std-non}(G).
\]
\end{lemma}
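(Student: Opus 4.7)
The plan is to prove equality by establishing both inequalities through explicit transformations between feasible solutions of the two LPs.

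First I would show $\LPOPT_{std}(G) \le \LPOPT_{std-non}(G)$. Given any feasible $(x_e)_{e \in E}$ for \ref{LP:standard_general}, I would construct a candidate $(x_e, z_e)_{e \in E}$ for \ref{LP:standard_general_non} by keeping $x_e$ and setting $z_e := p_e \cdot x_e$. Feasibility is immediate: the matching-style constraint $\sum_{e \ni v} z_e \le 1$ in \ref{LP:standard_general_non} is exactly the constraint $\sum_{e \ni v} p_e x_e \le 1$ from \ref{LP:standard_general}; the patience constraint and the box constraint on $x_e$ carry over unchanged; and $z_e \le p_e \cdot x_e$ holds with equality by construction. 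The two objective values coincide since $\sum_e w_e z_e = \sum_e w_e p_e x_e$.

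For the reverse direction $\LPOPT_{std-non}(G) \le \LPOPT_{std}(G)$, I would start with a feasible $(x_e, z_e)_{e \in E}$ for \ref{LP:standard_general_non} and define a candidate $x'_e$ for \ref{LP:standard_general} by $x'_e := z_e/p_e$ when $p_e > 0$ and $x'_e := 0$ otherwise. The key observation is that the coupling constraint $z_e \le p_e x_e$ implies $x'_e \le x_e$ whenever $p_e > 0$, so the patience constraint transfers: $\sum_{e \ni v} x'_e \le \sum_{e \ni v} x_e \le \ell_v$, and likewise $x'_e \le x_e \le 1$. The probabilistic matching constraint becomes $\sum_{e \ni v} p_e x'_e = \sum_{e \ni v} z_e \le 1$. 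Finally, the objectives match: $\sum_e w_e p_e x'_e = \sum_e w_e z_e$.

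The only subtlety, and the one technical wrinkle rather than a real obstacle, is the treatment of edges with $p_e = 0$. In this case the constraint $z_e \le p_e x_e$ forces $z_e = 0$, so that such edges contribute nothing to the objective of \ref{LP:standard_general_non}, and we are free to set $x'_e := 0$ without any loss. Combining the two inequalities yields $\LPOPT_{std}(G) = \LPOPT_{std-non}(G)$, completing the plan.
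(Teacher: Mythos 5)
Your proposal is correct and follows essentially the same approach as the paper: both directions map between the two LPs by the substitution $z_e = p_e x_e$ (resp. $\bar x_e = z_e/p_e$) and check feasibility constraint-by-constraint. Your explicit handling of edges with $p_e = 0$ is a minor but reasonable addition that the paper's proof glosses over.
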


\begin{proof}

Suppose we are presented a solution $(x_{e})_{e \in E}$ to \ref{LP:standard_general}.
In this case, if $z_{e}:= p_{e} \cdot x_{e}$ for $e \in E$, then
$(x_{e},z_{e})_{e \in E}$ is clearly a feasible solution to \ref{LP:standard_general_non}.
As such,
\[
	\LPOPT_{std}(G) \le \LPOPT_{std-non}(G).
\]
On the other hand, suppose that $(x_{e},z_{e})_{e \in E}$ is now an arbitrary
solution to \ref{LP:standard_general_non}. In this case, define
$\bar{x}_{e} := z_{e}/p_{e}$ for each $e \in E$. We claim that
$(\bar{x}_{e})_{e \in E}$ is a feasible solution to \ref{LP:standard_general}.

To see this, first observe that since $z_{e} \le p_{e} \cdot x_{e}$,
we know that $\bar{x}_{e} \le x_{e} \le 1$ for all $e \in E$.

Moreover, for each $v \in V$,
\[
	\sum_{\substack{e \in E: \\ v \in e}} p_{e} \cdot \bar{x}_{e} = \sum_{\substack{e \in E: \\ v \in e}} z_{e} \le 1,
\]
and
\[
	\sum_{\substack{e \in E: \\ v \in e}} \bar{x}_{e} \le \sum_{\substack{e \in E: \\ v \in e}} x_{e} \le \ell_v.
\]
Thus, $(\bar{x}_e)_{e \in E}$ is a feasible solution to \ref{LP:standard_general}.

Finally, observe that
\[
	\sum_{e \in E} w_{e} \cdot z_{e} = \sum_{e \in E} p_{e} \cdot w_{e} \cdot \bar{x}_e,
\]
so $\LPOPT_{std-non}(G) \le \LPOPT_{std}(G)$.

\end{proof}

\begin{proof}[Proof of Theorem \ref{thm:standard_LP_non_committal_relaxation}]

Let us suppose that $\scr{M}$
is the matching returned when the non-committal benchmark executes on $G=(V,E)$.
If we fix $e \in E$,
then we can define $x_{e}$ as the probability $\scr{A}$ probes the edge $e$, and $z_{e}$ as the probability that it includes $e$ in $\scr{M}$. Observe then that
\[
	\OPT_{non}(G) = \mb{E}[ \val(\scr{M}) ] = \sum_{e \in E} w_{e} \cdot z_{e}.
\]
Now, if we can show that $(x_{e},z_{e})_{e \in E}$ is a feasible solution to
\ref{LP:standard_general_non}, then this will imply that
\[
	\OPT_{non}(G) \le \LPOPT_{std-non}(G) = \LPOPT_{std}(G),
\]
where the final line follows via an application of Lemma \ref{lem:LP_equivalence}.
Thus, in order to complete the proof it suffices to show that $(x_{e},z_{e})_{e \in E}$
is a feasible solution to \ref{LP:standard_general_non}.

Suppose now that we fix vertex $v \in V$.
Observe that since it is matched to at most one edge of $G$, 
we have that
\[
	\sum_{\substack{e \in E \\ v \in e}} z_{e} \le 1.
\]
Similarly, 
\[
	\sum_{\substack{e \in E \\ v \in e}} x_{e} \le \ell_v,
\]
as at most $\ell_v$ edges including $v$ are probed by the non-committal benchmark.

If we now fix an edge $e \in E$, then observe that
in order for $e$ to be included in $\scr{M}$ , $e$ must be probed \textit{and} $e$ must be active. On the other
hand, these two events occur independently of each other. As such,
\[
z_{e} \le p_{e} \cdot z_{e}.
\]
This shows that all the constraints of \ref{LP:standard_general_non} hold
for $(x_{e},z_{e})_{e \in E}$, and so the proof is complete.

\end{proof}

We conclude the section by observing that Theorem \ref{thm:standard_LP_non_committal_relaxation_iid}
can be seen to hold by applying Theorem \ref{thm:standard_LP_non_committal_relaxation} 
to the instantiated graph $\hat{G} \sim (G, \bm{r},n)$,
and using a standard conditioning argument (see the proof of Lemma \ref{lem:LP_validity_known_iid} from
Section \ref{sec:stochastic_known_iid} for details).

\subsection{\ref{LP:new} and the Non-committal Benchmark}

We begin by observing that since \ref{LP:new} and \ref{LP:standard_definition} are the same LP when $G=(U,V,E)$ is
bipartite and has unit patience values on $V$, Theorem \ref{thm:standard_LP_non_committal_relaxation}
implies that the competitive ratios in Theorems \ref{thm:known_vertex_weights}, \ref{thm:known_ROM_edge_weights}
and \ref{thm:ROM_edge_weights} all hold against the \textit{non-committal} benchmark, for the special
case of unit patience. Moreover, Theorem \ref{thm:standard_LP_non_committal_relaxation_iid} implies
that the competitive ratio of Theorem \ref{thm:iid} holds against the non-committal benchmark 
when the type graph has unit patience.

In the case when $G$ has full patience, observe that \ref{LP:new} takes the same value
as the LP in \cite{Gamlath2019} considered by Gamlath et al. (see \ref{LP:relaxed_benchmark}
in Appendix \ref{appendix:committal_benchmark} for details). On the other hand,
Gamlath et al. argue that an optimum solution to their LP upper bounds the expected
weight of an optimum matching of $G$. As $G$ has full patience,
this corresponds to the performance of the non-committal benchmark on $G$, and so
\ref{LP:new} is a relaxation of the non-committal benchmark for this special
case. The competitive ratios
of Theorems \ref{thm:known_vertex_weights}, \ref{thm:known_ROM_edge_weights} and \ref{thm:ROM_edge_weights}
thus all hold against the non-committal benchmark for full patience.
When $G$ is a type graph with full patience, the same techniques used in Lemma \ref{lem:LP_validity_known_iid}
from Section \ref{sec:stochastic_known_iid} can be used to show that
\ref{LP:new_iid} upper bounds the non-committal benchmark. 
This implies that Theorem \ref{thm:iid} also holds against the non-committal benchmark
in this setting.

While our competitive ratios carry over in these specific scenarios, consider now the
setting when $G$ has a single online node $v$. In this case, it is clear that \ref{LP:new}
exactly encodes the committal benchmark; that is, $\LPOPT_{new}(G) = \OPT(G)$. Moreover,
$\OPT(G)$ can be attained via a non-adaptive strategy. 
On the other hand, the edges probabilities and edge weights of $G$ can be chosen
in such a way that the non-committal benchmarks gains strictly more
power than the committal benchmark; that is, $\OPT(G) < \OPT_{non}(G)$. We remark that these
problems are a special case of \textsc{ProblemMax}, a stochastic probing problem
which is studied in \cite{Asadpour2016, Fu2018, Segev2020}.

Let us suppose that $v$ has patience $\ell_{v} =2$, and that there are $3$ offline nodes $U=\{u_{1}, u_{2}, u_{3}\}$.
For each $i \in \{1,2,3\}$, we denote the weight of $(u_{i},v)$ by $w_{i}$ and assume that the edge $(u_i,v)$
is active with probability $p_{i}$. We make the following assumptions on these weights
and probabilities:

\begin{enumerate}
\item $w_{1} < w_{2} < w_{3}$.
\item $p_{1} > p_{2} > p_{3}$.
\item $w_{1} \cdot p_{1} \ge w_{2} \cdot p_{2} > w_{3} \cdot p_{3}$. \label{eqn:expected_reward}
\item $p_{2} \cdot w_{2} - p_{3} \cdot w_{3} \ge p_{1} \cdot w_{1} \cdot (p_{2} - p_{3})$ \label{eqn:OPT_strategy}
\end{enumerate}
Clearly, there exists a choice of weights and probabilities which satisfy these constraints. For instance,
take $w_{1} =3$, $w_{2} =4$, $w_{3}= 98$, $p_{1}=0.8$, $p_{2}=0.6$, and $p_{3} =0.01$.

Based on these assumptions let us now consider the value of $\OPT(G)$. Observe that
\[
	p_{2} \cdot w_{2} + (1 - p_{2}) \cdot p_{1} \cdot w_{1} \ge p_{3} \cdot w_{3} + (1- p_{3}) \cdot p_{1} \cdot w_{1}
															\ge p_{3} \cdot w_{3} + (1 - p_{3}) \cdot p_{2} \cdot w_{2},
\]
where the first inequality follows from \eqref{eqn:OPT_strategy}, and the second follows from \eqref{eqn:expected_reward}.
As a result, it is clear to see that the committal benchmark corresponds to probing $u_{2}$ and then $u_{1}$ (if necessary);
thus, $\OPT(G) = p_{2} \cdot w_{2} + (1 - p_{2}) \cdot p_{1} \cdot w_{1}$.

On the other hand, let us consider $\OPT_{non}(G)$, the value of the non-committal benchmark on $G$.
Consider the following \textit{non-committal} probing algorithm:
\begin{itemize}
\item Probe $(u_{2},v)$, and if $\st(u_{2},v)=1$, probe $(u_{3},v)$. \label{eqn:adaptive_step}
\item Else if $\st(u_{2},v)=0$, probe $(u_{1},v)$.
\item Return the edge of highest weight which is active (if any).
\end{itemize}
Clearly, this probing algorithm uses adaptivity to decide whether to reveal $(u_{3},v)$ or
$(u_{1},v)$ in its second probe. Specifically, if it discovers that $(u_{2},v)$ is active,
then it knows that it will return an edge with weight at least $w_{2}$. As such, it only
makes sense for $(u_{3},v)$ to be its next probe, as $w_{3} > w_{2} > w_{1}$.
On the other hand, if $(u_{2},v)$ is discovered to be inactive, it makes sense to
prioritize probing the edge $(u_{1},v)$ over $(u_{3},v)$, as the \textit{expected}
reward is higher; namely, $w_{1} \cdot p_{1} > w_{3} \cdot p_{3}$.

The expected value of the edge returned is 
\[
p_{2} \cdot p_{3} \cdot w_{3} + p_{2} \cdot (1 - p_{3}) \cdot w_{2} + (1 - p_{2}) \cdot p_{1} \cdot w_{1}.
\] 
Observe however that
\begin{align*}
	p_{2} \cdot p_{3} \cdot w_{3} + p_{2} \cdot (1 - p_{3}) \cdot w_{2} + (1 - p_{2}) \cdot p_{1} \cdot w_{1} &= p_{2} \cdot w_{2} \cdot ( (1- p_{3}) + p_{3} \cdot w_{3}/ w_{2}) + (1- p_{2}) \cdot p_{1} \cdot w_{1}	\\
				  &> p_{2} \cdot w_{2} + (1 - p_{2}) \cdot p_{1} \cdot w_{1}	\\
				  &= \OPT(G),
\end{align*}

where the final inequality follows since $w_{3} > w_{2}$.
As a result, it is clear that this strategy corresponds to the non-committal benchmark, and so 
$\OPT_{non}(G) > \OPT(G)$. In fact,
for the specific choice when $w_{1} =3$, $w_{2} =4$, $w_{3}= 98$, $p_{1}=0.8$, $p_{2}=0.6$, and $p_{3} =0.01$,
it holds that
\[
 	\frac{\LPOPT_{new}(G)}{\OPT_{non}(G)} =  \frac{\OPT(G)}{\OPT_{non}(G)} = 0.856269. 
\]
This example illustrates that since our competitive guarantees were proven against
$\LPOPT_{new}(G)$, they do not immediately extend to the non-committal benchmark. 
Moreover, this ratio improves upon the negative
result of \cite{costello2012matching}, in which the authors present
an example where the ratio between $\OPT(G)$ and $\OPT_{non}(G)$ is
at most $0.898$.

\section{LP Relations} \label{appendix:LP_relations}
In this section, we show how a number of the LPs present in the literature are related
to each other\footnote{We do not consider the full patience LP of Gamlath et al. in this section,
as we discuss a generalization of their LP in Appendix \ref{appendix:non_committal_benchmark}.}. In particular, we consider an LP introduced in by Brubach et al. \cite{Brubach2019},
which assumes a number of extra constraints in addition to those of \ref{LP:standard_definition}. We review the
motivation behind this LP, as well as how it is derived.

For each subset $R \subseteq U$ and $v \in V$, consider the induced stochastic subgraph,
denoted $G[\{v\} \cup R]$, formed by restricting the vertices of $G$ to $\{v\} \cup R$,
and the edges of $G$ to those between $v$ and $R$. 
We hereby denote $\OPT(v,R)$ as the value of the committal benchmark on the induced stochastic graph $G[\{v\} \cup R]$.

We can now formulate the LP of \cite{Brubach2019}, whose constraints ensure 
that for each $v \in V$, the expected stochastic reward
of $v$, suggested by an LP solution, is actually attainable by the committal benchmark. 

\begin{align}\label{LP:brubach_benchmark}
\tag{LP-DP}
& \text{maximize}  & \sum_{u \in U} \sum_{v \in V} w_{u,v} \, p_{u, v}  \, x_{u, v}\\
&\text{subject to} &\, \sum_{v \in V}  p_{u, v} \, x_{u, v} &\leq 1 &&\forall u \in U\\
&  &\sum_{u \in U} x_{u, v} &\leq \ell_{v} && \forall v \in V\\
&  &\sum_{u \in U}  p_{u, v} \, x_{u, v}  &\leq 1 && \forall v \in V\\
&  &\sum_{u \in R} w_{u,v} \, p_{u,v} \, x_{u,v} & \le  \OPT(v,R)  && \forall v \in V, \, R \subseteq U	\label{eqn:brubach_constraint}\\ 
&  &0 \leq x_{u, v} &\leq 1 && \forall u \in U, v \in V
\end{align}
Observe the following relations between the LPs considered in the paper.

\begin{theorem} \label{thm:relaxations_relations}

For any stochastic graph $G$, we have that
\begin{equation}
\LPOPT_{new}(G) \le LPOPT_{DP}(G) \le LPOPT_{std}(G),
\end{equation}

and the LPs are all equivalent, provided $G$ has unit patience.

\end{theorem}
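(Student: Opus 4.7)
The inequality $\LPOPT_{DP}(G) \le \LPOPT_{std}(G)$ is immediate: \ref{LP:brubach_benchmark} is \ref{LP:standard_definition} augmented with the additional family of constraints \eqref{eqn:brubach_constraint}, so every feasible solution of the former is a feasible solution of the latter with the same objective.

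For $\LPOPT_{new}(G) \le \LPOPT_{DP}(G)$, the plan is to take an optimal solution $(x_v(\bm{u}))_{v \in V, \bm{u} \in U^{(\le \ell_v)}}$ to \ref{LP:new}, form its induced edge variables $(\til{x}_{u,v})$, and show these form a feasible solution to \ref{LP:brubach_benchmark} with the same objective value. Matching the objective is immediate from the definition of $\til{x}_{u,v}$, which rewrites the \ref{LP:new} objective exactly as $\sum_{u,v} w_{u,v} \, p_{u,v} \, \til{x}_{u,v}$. The three standard constraints of \ref{LP:brubach_benchmark} are routine algebraic consequences of \eqref{eqn:relaxation_efficiency_matching} and \eqref{eqn:relaxation_efficiency_distribution}; for instance, $\sum_u \til{x}_{u,v}$ collapses to $\sum_{\bm{u}^*} x_v(\bm{u}^*) \sum_{i=1}^{|\bm{u}^*|} \prod_{j<i}(1 - p_{u_j^*,v}) \le \ell_v \sum_{\bm{u}^*} x_v(\bm{u}^*) \le \ell_v$, and $\sum_u p_{u,v} \til{x}_{u,v}$ telescopes to $\sum_{\bm{u}^*} x_v(\bm{u}^*) (1 - \prod_j(1-p_{u_j^*,v})) \le 1$.

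The main obstacle is verifying constraint \eqref{eqn:brubach_constraint}, i.e., $\sum_{u \in R} w_{u,v} \, p_{u,v} \, \til{x}_{u,v} \le \OPT(v, R)$ for every $v \in V$ and $R \subseteq U$. The plan is to exhibit an explicit randomized committal probing strategy on the induced stochastic graph $G[\{v\} \cup R]$ whose expected reward dominates the left-hand side. Given $\bm{u}^* = (u_1^*, \ldots, u_k^*) \in U^{(\le \ell_v)}$, let $\bm{u}^*|_R$ denote the order-preserving subsequence of entries lying in $R$; since $|\bm{u}^*|_R| \le |\bm{u}^*| \le \ell_v$, probing $\bm{u}^*|_R$ in order is a legal committal strategy on $G[\{v\} \cup R]$. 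Sampling $\bm{u}^*$ with probability $x_v(\bm{u}^*)$ (and leaving the residual mass $1 - \sum_{\bm{u}^*} x_v(\bm{u}^*)$ on ``do nothing''), and then executing this subsequence strategy, defines a valid randomized committal strategy. Expanding the left-hand side as $\sum_{\bm{u}^*} x_v(\bm{u}^*) \sum_{\ell: u_{i_\ell}^* \in R} w_{u_{i_\ell}^*,v} \, p_{u_{i_\ell}^*,v} \prod_{j < i_\ell}(1 - p_{u_j^*,v})$ and using $\prod_{j < i_\ell}(1 - p_{u_j^*,v}) \le \prod_{j < \ell}(1 - p_{u_{i_j}^*,v})$ (the left product contains strictly more factors, each in $[0,1]$) shows a term-by-term domination by the expected reward of the above strategy. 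Since any randomized committal strategy on $G[\{v\} \cup R]$ achieves expected value at most $\OPT(v,R)$, the desired inequality follows.

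For the unit-patience equivalence, the plan is to observe that when $\ell_v = 1$ every tuple in $U^{(\le \ell_v)}$ has length one, so setting $y_{u,v} := x_v((u))$ makes \ref{LP:new} literally identical to \ref{LP:standard_definition} (and produces $\til{x}_{u,v} = y_{u,v}$). It remains to check that \eqref{eqn:brubach_constraint} is redundant: with unit patience $\OPT(v, R) = \max_{u \in R} w_{u,v} \, p_{u,v}$, and the \ref{LP:standard_definition} constraint $\sum_u y_{u,v} \le 1$ gives
\[
\sum_{u \in R} w_{u,v} \, p_{u,v} \, y_{u,v} \le \Bigl(\max_{u \in R} w_{u,v} \, p_{u,v}\Bigr) \sum_{u \in R} y_{u,v} \le \OPT(v,R),
\]
so all three LPs coincide in this regime.
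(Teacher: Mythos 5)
Your proposal is correct, and the second inequality plus the objective-matching step are identical to the paper's. The interesting difference is in how you verify the nonstandard constraint \eqref{eqn:brubach_constraint}. The paper factors the argument through an abstract auxiliary lemma (Lemma \ref{lem:LP_relations_app}): it postulates, for each $v$, a committal probing algorithm $\scr{A}_v$ that implements the edge marginals losslessly (supplied by \textsc{VertexProbe}), then argues that modifying $\scr{A}_v$ to discard any match made outside $R$ yields a valid committal strategy on $G[\{v\}\cup R]$ with expected reward exactly $\mb{E}[\val(\scr{M})\,\bm{1}_{[\scr{M}(v)\in R]}] = \sum_{u\in R} w_{u,v}p_{u,v}\til{x}_{u,v} \le \OPT(v,R)$. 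You instead construct a strategy on $G[\{v\}\cup R]$ directly from the \ref{LP:new} variables --- sample $\bm{u}^*$ and probe the order-preserving $R$-subsequence --- and prove domination via the pointwise inequality $\prod_{j<i_\ell}(1-p_{u_j^*,v}) \le \prod_{j<\ell}(1-p_{u_{i_j}^*,v})$, which holds because the left product ranges over a superset of indices with each factor in $[0,1]$. This is a genuinely different (and arguably cleaner) route: it sidesteps the slightly delicate point that the paper's modified $\scr{A}_v$ still nominally ``probes'' edges of $\overline{R}\times\{v\}$, which have to be reinterpreted as cost-free simulations to make it a legitimate algorithm on $G[\{v\}\cup R]$; your subsequence strategy never touches those edges and yields a strict lower bound rather than an equality, which is all that is needed. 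You also spell out the unit-patience equivalence (identity of \ref{LP:new} and \ref{LP:standard_definition} under $\ell_v\equiv 1$, plus a direct redundancy check of \eqref{eqn:brubach_constraint} via $\OPT(v,R)=\max_{u\in R} w_{u,v}p_{u,v}$), which the paper asserts but does not display in its proof. Everything checks out.
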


The second inequality is immediate since \ref{LP:brubach_benchmark} is a tightening of \ref{LP:standard_definition}. To prove the first inequality, we will first proceed to state and prove Lemma \ref{lem:LP_relations_app}.

Assume that for each $u \in U$ and $v \in V$,
we are presented a fractional value, $0 \le x_{u,v} \le 1$.
Moreover, let us assume that the values $(x_{u,v})_{u \in U,v \in V}$ satisfy the following
properties:

\begin{enumerate}

\item For each $u \in U$,
\begin{equation} \label{eqn:offline_matching_constraint}
	\sum_{v \in V}p_{u,v} \, x_{u,v} \le 1.
\end{equation}
\item For each $v \in V$, there exists a probing algorithm $\scr{A}_{v}$
for the instance $G[ \{v\} \cup U]$ which respects commitment
and for which
\begin{equation} \label{eqn:existence_of_probing_algorithm}
\mb{P}[\text{$\scr{A}_{v}$ probes $(u,v)$}] = x_{u,v},	
\end{equation}
for each $u \in U$ \footnote{In the terminology of Section \ref{sec:prelim},
we say that the values $(x_{u,v})_{u \in U, v \in V}$ can be implemented \textbf{losslessly}.}.

\end{enumerate}

In this case, we get the following lemma:

\begin{lemma}\label{lem:LP_relations_app}

If the values $(x_{u,v})_{u \in U, v \in V}$ satisfy properties \eqref{eqn:offline_matching_constraint}
and \eqref{eqn:existence_of_probing_algorithm}, then $(x_{u,v})_{u \in U, v \in V}$ is a feasible
solution to \ref{LP:brubach_benchmark}.

\end{lemma}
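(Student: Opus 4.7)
My plan is to verify each of the constraints of \ref{LP:brubach_benchmark} in turn using the two given properties of $(x_{u,v})_{u \in U, v \in V}$. The box constraint $0 \le x_{u,v} \le 1$ and the offline matching constraint $\sum_{v} p_{u,v}\, x_{u,v} \le 1$ are essentially handed to us by hypothesis (the latter is exactly \eqref{eqn:offline_matching_constraint}; for the former, the upper bound $x_{u,v} \le 1$ is immediate from the fact that $x_{u,v}$ is a probability per \eqref{eqn:existence_of_probing_algorithm}).

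Next I would dispatch the two per-$v$ constraints by reading off facts about $\scr{A}_v$. For the patience constraint $\sum_{u} x_{u,v} \le \ell_v$, note that by linearity of expectation, $\sum_{u \in U} x_{u,v}$ equals the expected number of probes $\scr{A}_v$ makes; since $\scr{A}_v$ respects patience $\ell_v$ deterministically, this quantity is at most $\ell_v$. For the $v$-side matching constraint $\sum_u p_{u,v}\, x_{u,v} \le 1$, observe that the probe made to $(u,v)$ by $\scr{A}_v$ is decided from past probe outcomes, which are independent of $\st(u,v)$, so $\mb{P}[\text{$\scr{A}_v$ probes $(u,v)$ and $\st(u,v)=1$}] = x_{u,v}\, p_{u,v}$. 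Since $\scr{A}_v$ respects commitment, this is precisely the probability $\scr{A}_v$ commits $(u,v)$; summing over $u \in U$ gives the probability $v$ is matched at all, which is at most $1$.

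The main step is constraint \eqref{eqn:brubach_constraint}, and it is where I need to actually do something. Fix $v \in V$ and $R \subseteq U$. I will construct from $\scr{A}_v$ a new probing algorithm $\scr{A}_v^R$ for the induced stochastic graph $G[\{v\} \cup R]$ whose expected matching weight is exactly $\sum_{u \in R} w_{u,v}\, p_{u,v}\, x_{u,v}$. The construction is to simulate $\scr{A}_v$ step-by-step: whenever $\scr{A}_v$ elects to probe $(u,v)$ with $u \in R$, actually probe it and use the true state $\st(u,v)$; whenever $\scr{A}_v$ elects to probe $(u,v)$ with $u \in U \setminus R$, draw an independent $\Ber(p_{u,v})$ to supply a simulated state, without consuming a real probe. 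Commit only on edges with $u \in R$, and otherwise pass (while continuing to track $\scr{A}_v$'s internal execution).

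Because $\scr{A}_v^R$ only probes edges into $R$, it respects the patience bound $\ell_v$ of $v$ in $G[\{v\} \cup R]$. Because the simulated states for $U \setminus R$ are drawn from the correct marginal distributions, the joint distribution of $\scr{A}_v$'s execution is preserved, so for each $u \in R$ the probability $\scr{A}_v^R$ commits $(u,v)$ equals the probability $\scr{A}_v$ commits $(u,v)$, which is $p_{u,v}\, x_{u,v}$ by the same argument used for the $v$-matching constraint. The expected matching value of $\scr{A}_v^R$ on $G[\{v\} \cup R]$ is therefore $\sum_{u \in R} w_{u,v}\, p_{u,v}\, x_{u,v}$, and since $\scr{A}_v^R$ is a commital probing algorithm on $G[\{v\} \cup R]$, this quantity is at most $\OPT(v,R)$, yielding \eqref{eqn:brubach_constraint}. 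The only delicate point — and the place I would double-check in a careful write-up — is that drawing the simulated $\Ber(p_{u,v})$ coins genuinely reproduces the law of $\scr{A}_v$'s transcript, so that the ``probability of probing $(u,v)$ with $u \in R$'' inside $\scr{A}_v^R$ matches that inside $\scr{A}_v$; this is immediate once one observes that $\scr{A}_v$ only uses $\st(u,v)$ through the outcome of its probes and never inspects unprobed edges.
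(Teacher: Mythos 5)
Your proof is correct and follows the same overall strategy as the paper: verify the box, patience, and two degree constraints directly from the two hypotheses, then handle constraint \eqref{eqn:brubach_constraint} by building from $\scr{A}_v$ a probing algorithm on $G[\{v\}\cup R]$ whose expected value equals $\sum_{u\in R} w_{u,v}\,p_{u,v}\,x_{u,v}$.

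Where you diverge is in how carefully the key constraint is handled, and your version is actually tighter than the paper's. The paper says only that one can ``modify $\scr{A}_v$ so that it returns $\emptyset$ instead of an edge in $\overline{R}\times\{v\}$,'' and then immediately invokes $\OPT(v,R)$. Read literally, that modified algorithm still probes edges into $\overline{R}$ (those probes may drive $\scr{A}_v$'s later decisions), which is not permissible for a probing algorithm on $G[\{v\}\cup R]$ — those edges do not exist there — and worse, if it probes an active edge into $\overline{R}$ it must, by commitment, take it, contradicting ``returns $\emptyset$.'' Your fix — replacing each would-be probe into $\overline{R}$ with an independent $\Ber(p_{u,v})$ coin, continuing the simulation of $\scr{A}_v$ on those coins, and probing only into $R$ — is exactly the missing bookkeeping: the coupled algorithm probes only edges of $G[\{v\}\cup R]$, makes no more probes than $\scr{A}_v$ does, respects commitment, and (by independence of the edge states) reproduces the law of $\scr{A}_v$'s transcript, so each $u\in R$ is matched with probability $p_{u,v}\,x_{u,v}$. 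In short, you supplied the simulation argument the paper leaves implicit; everything else matches.
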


\begin{proof}

Let us fix $v \in V$. We first observe that
\[
	\sum_{u \in U} \mb{P}[\text{$\scr{A}_{v}$ probes $(u,v)$}]
\]
corresponds to the expected number of probes that $\scr{A}_v$ makes when
executing on $G[ \{v\} \cup U]$. Thus, since $\scr{A}_v$ makes at most
$\ell_v$ probes, we know that
\[
	\sum_{u \in U} x_{u,v} = \sum_{u \in U} \mb{P}[\text{$\scr{A}_{v}$ probes $(u,v)$}] \le \ell_v.
\]
Let us now denote $\scr{M}$ as the matching returned once $\scr{A}_v$
finishes executing on $G[ \{v\} \cup U]$. This is either a single edge
including $v$, or the empty-set. As such, we denote $\scr{M}(v)$ to indicate
which vertex $v$ is matched to (where $\scr{M}(v) := \emptyset$ if $v$ remains
unmatched).

Observe then that for each $u \in U$,
we have that
\begin{align} \label{eqn:match_probability_fixed_vertex}
	\mb{P}[ \scr{M}(v) = u] &= \mb{P}[\text{$\scr{A}_v$ probes $(u,v)$ and $\st(u,v)=1$}] \\
							  &=  p_{u,v} \, x_{u,v},
\end{align}
as $\scr{A}_v$ respects commitment by assumption.

As a result, since $v$ is matched to at most one vertex of $U$,
\[
	\sum_{u \in U} p_{u,v} \, x_{u,v} \le 1.
\]

It remains to verify that the additional LP constraints present in \ref{LP:brubach_benchmark}
hold for $v$. Let us define $\val( \scr{M})$ as the weight of the edge matched to $v$ (which
is $0$ if $v$ remains unmatched by $\scr{A}_v$). Observe that
\[
	\mb{E}[ \val( \scr{M})] = \sum_{u \in U} w_{u,v} \, p_{u,v} \, x_{u,v},
\]
after applying \eqref{eqn:match_probability_fixed_vertex} and linearity of expectation. Thus,
since $\scr{A}_v$ is a valid probing algorithm which executes on $G[ \{v\} \cup U]$, this
value can be no larger than what is attained by the committal benchmark on $G[ \{v\} \cup U]$. As such,
\[
	\sum_{u \in U} w_{u,v} \, p_{u,v} \, x_{u,v} = \mb{E}[ \val( \scr{M})] \le \OPT(v,U),
\] 
where $\OPT(v,U)$ corresponds to the value of the committal benchmark on $G[ \{v\} \cup U]$. 
More generally, if we now fix $R \subseteq U$, then observe that
\[
	\sum_{u \in R} w_{u,v} \, p_{u,v} \, x_{u,v} = \mb{E}[ \val( \scr{M}) \, \bm{1}_{[\scr{M}(v)  \in R}]],
\]
where $\scr{M}(v) \in R$ corresponds to the event in which the vertex matched to $v$
lies in $R$.

Of course, we can also modify the probing algorithm $\scr{A}_v$ in such a way that it returns $\emptyset$
instead an edge within $\overline{R} \times \{v\}$. This alternative probing algorithm (which depends on $R$) will then
return an edge of expected value
\[
	\mb{E}[ \val( \scr{M}) \, \bm{1}_{[\scr{M}(v)  \in R]}].
\]
As such, for each $R \subseteq U$, 
\[
	\sum_{u \in R} w_{u,v} \, p_{u,v} \, x_{u,v} = \mb{E}[ \val( \scr{M}) \, \bm{1}_{[ \scr{M}(v)  \in R]}] \le \OPT(v,R),
\]
where $\OPT(v,R)$ corresponds to the committal benchmark on $G[\{v\} \cup R]$.

Now, the vertex $v$ was arbitrary, so we know that all the constraints on the vertices of \ref{LP:brubach_benchmark}
hold. By assumption, we also know that for each $u \in U$,
\[
	\sum_{v \in V} p_{u,v} \, x_{u,v} \le 1.
\]
Thus, $(x_{u,v})_{u \in U, v \in V}$ is a feasible solution to \ref{LP:brubach_benchmark},
thereby completing the proof.

\end{proof}

With this lemma, we now prove Theorem \ref{thm:relaxations_relations}.

\begin{proof}[Proof of Theorem \ref{thm:relaxations_relations}]

Suppose that we are presented an optimum solution
to \ref{LP:new}, denoted $(x_{v}(\bm{u}))_{v \in V, \bm{u} \in U^{( \le \ell_v )}}$.
Recall that for each $u \in U, v \in V$ and we defined the edge variable $x_{u,v}$, where
\[
	\til{x}_{u,v}= \sum_{i=1}^{\ell_v} \sum_{\substack{\bm{u}^* \in U^{( \le \ell_{v})}: \\ u_{i}^{*} = u}} \frac{g_{v}^{i}(\bm{u}^*) \, x_{v}(\bm{u}^*)}{p_{u,v}}.
\]
We first observe that the values 
$(\til{x}_{u,v})_{u \in U, v \in V}$ satisfy property \eqref{eqn:offline_matching_constraint}
by assumption (see \eqref{eqn:relaxation_efficiency_matching} of \ref{LP:new}). Moreover, for each fixed $v \in V$, if we consider
the values $(\til{x}_{u,v})_{u \in U}$,
then the \textsc{VertexProbe} algorithm
applied to the input $(G,(x_{v}( \bm{u}))_{\bm{u} \in U^{( \le \ell_v)}}, v)$, satisfies property \eqref{eqn:existence_of_probing_algorithm}, by Lemma \ref{lem:fixed_vertex_probe}.

We may therefore conclude that $(\til{x}_{u,v})_{u \in U, v \in V}$ is a feasible solution to \ref{LP:brubach_benchmark}.
On the other hand,  $(x_{v}(\bm{u}))_{v \in V, \bm{u} \in U^{( \le \ell_v )}}$ is an optimum solution
to \ref{LP:new}, so
\[
	\LPOPT_{new}(G) = \sum_{u \in U, v \in V} w_{u,v} \, p_{u,v} \, \til{x}_{u,v} \le \LPOPT_{DP}(G),
\]
thus proving Theorem \ref{thm:relaxations_relations}.

\end{proof}

We conclude the section by observing the following relation between the LPs in the known i.i.d.
stochastic matching setting, namely \ref{LP:new_iid} and \ref{LP:standard_definition_iid}:

\begin{proposition}
If $(G, \bm{r},n)$ is a known i.i.d. input,
then
\[
	\LPOPT_{new-iid}(G, \bm{r},n) \le \LPOPT_{std-iid}(G, \bm{r},n).
\]
In fact, \ref{LP:new_iid} and \ref{LP:standard_definition_iid} are identical when $G$ has unit patience.
\end{proposition}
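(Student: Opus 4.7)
The plan is to exhibit, for any feasible solution $(y_{v}(\bm{u}))_{v \in V, \bm{u} \in U^{(\le \ell_v)}}$ of \ref{LP:new_iid}, a feasible solution of \ref{LP:standard_definition_iid} attaining the same objective value. The natural candidate is to set $y_{u,v} := \til{y}_{u,v}$ for each $(u,v) \in U \times V$, where $\til{y}_{u,v}$ are the induced edge variables defined immediately after \ref{LP:new_iid}. Unwinding this definition and swapping the order of summation, the objective of \ref{LP:standard_definition_iid} evaluated at $(\til{y}_{u,v})$ becomes $\sum_{v} \sum_{u} w_{u,v}\, p_{u,v}\, \til{y}_{u,v}$, which rearranges into the objective $\sum_{v} \sum_{\bm{u}} \bigl(\sum_{i} w_{u_i,v}\, g_v^i(\bm{u})\bigr) y_v(\bm{u})$ of \ref{LP:new_iid}. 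So once feasibility is verified, both inequalities $\LPOPT_{new\text{-}iid} \le \LPOPT_{std\text{-}iid}$ follow.

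Each constraint of \ref{LP:standard_definition_iid} can be checked by a short computation. The offline matching constraint $\sum_{v} p_{u,v}\, \til{y}_{u,v} \le 1$ is a direct restatement of \eqref{eqn:relaxation_efficiency_matching_iid}. For the online arrival rate constraint, the identity $\sum_{i=1}^{|\bm{u}|} g_v^i(\bm{u}) = 1 - \prod_{j}(1-p_{u_j,v}) \le 1$ yields $\sum_{u} p_{u,v}\, \til{y}_{u,v} \le \sum_{\bm{u}} y_v(\bm{u}) \le r_v$ via \eqref{eqn:relaxation_efficiency_distribution_iid}. For the patience constraint, using $g_v^i(\bm{u})/p_{u_i,v} = \prod_{j<i}(1-p_{u_j,v}) \le 1$ together with $|\bm{u}| \le \ell_v$, one obtains $\sum_{u} \til{y}_{u,v} \le \ell_v \sum_{\bm{u}} y_v(\bm{u}) \le r_v \ell_v$. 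I expect the slightly subtle check to be the box constraint $\til{y}_{u,v} \le r_v$: the key observation is that, for fixed $v$ and $u$, the sets $\{\bm{u}^* : u_i^* = u\}$ are pairwise disjoint across $i$ because the coordinates of each $\bm{u}^*$ are distinct, hence $\til{y}_{u,v} \le \sum_{\bm{u}} y_v(\bm{u}) \le r_v$. Non-negativity is immediate.

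For the unit patience assertion, the tuples in $U^{(\le 1)}$ are precisely the singletons $(u)$, so under the identification $y_{u,v} \leftrightarrow y_v((u))$ one has $g_v^1((u)) = p_{u,v}$ and the objectives of the two LPs coincide. Constraint \eqref{eqn:relaxation_efficiency_matching_iid} of \ref{LP:new_iid} reduces to the matching constraint $\sum_{v} p_{u,v}\, y_{u,v} \le 1$ of \ref{LP:standard_definition_iid}, while \eqref{eqn:relaxation_efficiency_distribution_iid} becomes $\sum_{u} y_{u,v} \le r_v$, which matches the third constraint of \ref{LP:standard_definition_iid} when $\ell_v = 1$. The remaining \ref{LP:standard_definition_iid} constraints are then redundant: $\sum_{u} p_{u,v}\, y_{u,v} \le \sum_{u} y_{u,v} \le r_v$ and $y_{u,v} \le \sum_{u'} y_{u',v} \le r_v$. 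Thus the feasible regions and objectives of the two LPs coincide in this regime, completing the proof.
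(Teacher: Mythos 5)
Your proof is correct. You give a direct LP-level argument: for any feasible $(y_v(\bm{u}))$ of \ref{LP:new_iid}, the induced edge variables $(\til{y}_{u,v})$ form a feasible solution of \ref{LP:standard_definition_iid} with the same objective value, which immediately yields the desired inequality. The feasibility checks are all sound; in particular, the disjointness of the sets $\{\bm{u}^*:u_i^*=u\}$ across $i$ (because tuples in $U^{(\le \ell_v)}$ have distinct coordinates) is exactly the observation needed for the box constraint $\til{y}_{u,v}\le r_v$, and the telescoping identity $\sum_i g_v^i(\bm{u})=1-\prod_j(1-p_{u_j,v})\le 1$ handles the online-side constraint. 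The unit-patience identification is also correct: constraints \eqref{eqn:relaxation_efficiency_matching_iid} and \eqref{eqn:relaxation_efficiency_distribution_iid} map onto the first and third constraints of \ref{LP:standard_definition_iid}, and the remaining constraints of \ref{LP:standard_definition_iid} become redundant since $p_{u,v}\le 1$.

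Your route differs from the paper's. The paper omits the proof and says it follows by applying Theorem~\ref{thm:relaxations_relations} (which establishes $\LPOPT_{new}(\hat{G}) \le \LPOPT_{std}(\hat{G})$ through the intermediate LP \ref{LP:brubach_benchmark} and a lossless-implementability argument) to the random instantiated graph $\hat{G}\sim(G,\bm{r},n)$ and then lifting the inequality to the i.i.d. LPs via a conditioning/averaging argument in the style of Lemma~\ref{lem:LP_validity_known_iid}. Your approach bypasses both the probabilistic lift and the detour through \ref{LP:brubach_benchmark}: it is purely algebraic, working directly with the i.i.d.\ LPs and the definition of the induced edge variables. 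This makes the argument self-contained and elementary, at the modest cost of not reusing the already-proved general-graph result. Both are valid; yours is arguably the cleaner choice for a standalone proof.
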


This follows via a standard conditioning argument involving the instantiated graph $\hat{G} \sim (G, \bm{r},n)$,
combined with an application of Theorem \ref{thm:relaxations_relations}, so we omit the argument.

\section{Non-adaptive Probing Algorithms and Adaptivity Gaps} \label{appendix:adaptivity_results}
\label{appendix:adaptivity-results}

Suppose that $G=(U,V,E)$ is an arbitrary stochastic graph, 
and we are presented an online probing algorithm $\scr{A}$ which
is non-adaptive and respects commitment (as defined in Section \ref{sec:prelim}). We
may assume that $\scr{A}$ operates in the ROM setting, that is the ordering $\pi$ on
$V$ is chosen uniformly at random, though the definitions we now describe follow
identically when $\pi$ is chosen by an adversary, as well as in the known i.i.d. setting.

We hereby denote $\scr{A}(G)$ as the matching returned by executing $\scr{A}$
on $G$, and $\val(\scr{A}(G))$ as the (random) value of this matching.

With this notation, we define the \textit{adaptivity gap} of a stochastic graph $G$
in the ROM setting as the ratio,
\[
	\frac{\sup_{\scr{B}} \mb{E}[ \val( \scr{B}(G))]}{ \OPT(G)},
\]

where the supremum is over all non-adaptive online probing algorithms.

While all of the algorithms we consider throughout the paper
are implemented non-adaptivity, of particular interest to us are Algorithms \ref{alg:known_stochastic_graph}
and \ref{alg:modified_known_stochastic_graph} in which the stochastic graph $G$ is presented ahead of time.
Observe that Theorems \ref{thm:known_vertex_weights} and \ref{thm:modified_known_stochastic_graph},
imply the following bounds on the relevant
adaptivity gaps:

\begin{corollary}
The known stochastic matching problem with offline vertex weights, arbitrary patience 
and adversarial arrivals has an adaptivity gap no worse than $1-1/e$.
\end{corollary}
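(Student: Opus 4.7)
The plan is to derive the corollary as an essentially immediate consequence of Theorem \ref{thm:known_vertex_weights}, together with the observation that Algorithm \ref{alg:known_stochastic_graph} is non-adaptive in the sense of the paper's earlier footnote. Once this is in hand, for every stochastic graph $G$ with offline vertex weights and every adversarial arrival order, Algorithm \ref{alg:known_stochastic_graph} furnishes a concrete non-adaptive probing algorithm whose expected value is at least $(1-1/e)\,\OPT(G)$, which lower bounds the supremum appearing in the definition of the adaptivity gap and yields the claim.

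First I would verify the non-adaptivity of Algorithm \ref{alg:known_stochastic_graph}. The optimum solution $(x_{v}(\bm{u}))_{v \in V, \bm{u} \in U^{(\le \ell_v)}}$ to \ref{LP:new} is computed once up front from the known stochastic graph $G$, prior to any arrivals. When $v_t$ arrives, \textsc{VertexProbe} draws a single tuple $\bm{u}^* \in U^{(\le \ell_{v_t})}$ from the distribution specified by $(x_{v_t}(\bm{u}))$ and probes along $\bm{u}^*$ in the pre-sampled order, halting as forced by commitment the first time an active edge is witnessed. Neither the LP solution nor the sampling of $\bm{u}^*$ depends on any previously revealed state $\st(u,v_k)$ with $k < t$, which is precisely the non-adaptivity condition singled out in the paper. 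The only within-$v_t$ reactions to edge outcomes are the stopping rules mandated by commitment itself, and these are not adaptive decisions in the sense relevant here.

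Next I would invoke Theorem \ref{thm:known_vertex_weights}, whose statement explicitly covers adversarial arrival orders and yields $\mb{E}[\val(\scr{A}(G))] \ge (1-1/e)\,\OPT(G)$ for Algorithm \ref{alg:known_stochastic_graph} viewed as the non-adaptive online probing algorithm $\scr{A}$. Since the definition of the adaptivity gap carries over verbatim from the ROM setting to the adversarial setting, the supremum over all non-adaptive online probing algorithms $\scr{B}$ is at least $\mb{E}[\val(\scr{A}(G))]$, and therefore
\[
\frac{\sup_{\scr{B}} \mb{E}[\val(\scr{B}(G))]}{\OPT(G)} \ge 1 - \frac{1}{e}
\]
uniformly in $G$, which is the desired lower bound on the adaptivity gap in the adversarial setting with offline vertex weights and arbitrary patience.

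There is no substantive technical obstacle; the only subtlety worth flagging is the distinction between commitment-forced early stopping inside \textsc{VertexProbe} and genuine cross-arrival state-dependent branching, the latter of which Algorithm \ref{alg:known_stochastic_graph} never performs. This justifies labelling it as non-adaptive for the purpose of witnessing the adaptivity gap bound.
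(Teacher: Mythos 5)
Your proof is correct and follows the paper's route exactly: the corollary is stated as an immediate consequence of Theorem \ref{thm:known_vertex_weights} together with the observation that Algorithm \ref{alg:known_stochastic_graph} is non-adaptive. Your explicit verification that the LP solution is computed once up front and that \textsc{VertexProbe}'s sampling of $\bm{u}^*$ never depends on previously revealed edge states $(\st(u,v_k))_{k<t}$ is a useful unpacking of what the paper only asserts in passing (``we will execute Algorithm \ref{alg:general_template} non-adaptively''), and the note that the commitment-forced early stopping inside a single arrival is not adaptivity in the relevant sense is exactly the right distinction to draw.
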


\begin{corollary}
The known stochastic matching problem with arbitrary patience, edge weights
and ROM arrivals has an adaptivity gap which is no worse than $1-1/e$.
\end{corollary}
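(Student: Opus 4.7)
The plan is to observe that this corollary is an essentially immediate consequence of Theorem \ref{thm:modified_known_stochastic_graph}, which already establishes a $1-1/e$ competitive ratio against $\OPT(G)$ for Algorithm \ref{alg:modified_known_stochastic_graph} in the ROM setting with arbitrary edge weights and patience values. The only work that remains is to verify that Algorithm \ref{alg:modified_known_stochastic_graph} falls under the class of non-adaptive online probing algorithms, since the adaptivity gap is defined as a supremum over exactly such algorithms applied to $G$.

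First I would recall the definition of non-adaptivity from the footnote in Section \ref{sec:prelim}: when processing the arrival $v_t$, the probes made on $v_t$ may depend on the identities, weights, probabilities, and patience values of the previous arrivals $v_1, \ldots, v_{t-1}$ (and on $G$ itself, since $G$ is known), but may not depend on the edge states $(\st(u, v_k))_{u \in U, k \in [t-1]}$ revealed by the previous probes. I would then inspect the steps of Algorithm \ref{alg:modified_known_stochastic_graph} to confirm that: (i) the LP solution $(x_v(\bm{u}))$ is computed once from $G$ before any arrivals; (ii) the arrival times $Y_v$ are drawn from independent uniforms on $[0,1]$ prior to any probing; and (iii) the routine \textsc{VertexProbe-S} is invoked on $v_t$ with inputs that depend only on $G$, $v_t$, and $Y_{v_t}$, together with internal randomness (the draw of $\bm{u}^*$ and the Bernoulli simulation variables $Z$). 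None of these depend on previously revealed edge states, so the algorithm is non-adaptive in the sense required.

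Given this verification, the proof concludes in one line: by Theorem \ref{thm:modified_known_stochastic_graph}, Algorithm \ref{alg:modified_known_stochastic_graph}, call it $\scr{B}^{*}$, satisfies $\mb{E}[\val(\scr{B}^{*}(G))] \ge (1 - 1/e)\,\OPT(G)$. Therefore
\[
\frac{\sup_{\scr{B}} \mb{E}[\val(\scr{B}(G))]}{\OPT(G)} \ge \frac{\mb{E}[\val(\scr{B}^{*}(G))]}{\OPT(G)} \ge 1 - \frac{1}{e},
\]
where the supremum is taken over non-adaptive online probing algorithms. Since this holds for every stochastic graph $G$ with arbitrary edge weights and patience, the adaptivity gap of the ROM edge-weighted known stochastic matching problem is at least $1 - 1/e$, which is exactly the statement of the corollary.

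There is essentially no obstacle here beyond the bookkeeping of matching the formal definition of non-adaptivity to the implementation of Algorithm \ref{alg:modified_known_stochastic_graph}; the one place one might pause is the use of \textsc{VertexProbe-S}, which within a single arrival uses the outcome of probing $(u_i^*, s)$ to decide whether to continue probing $(u_{i+1}^*, s)$. However, the paper's notion of non-adaptivity only forbids dependence on edge states revealed while processing \emph{previous} arrivals, not on edge states revealed during the processing of the current arrival, so this intra-arrival reactivity is permitted and the conclusion follows without any further argument.
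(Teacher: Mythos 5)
Your proposal is correct and follows the same route the paper takes: the paper treats the corollary as an immediate consequence of Theorem~\ref{thm:modified_known_stochastic_graph} together with the (unstated but implied) observation that Algorithm~\ref{alg:modified_known_stochastic_graph} is non-adaptive. Your explicit verification of non-adaptivity --- in particular the point that the intra-arrival reactivity inside \textsc{VertexProbe-S} is permitted because the paper's definition only forbids dependence on edge states revealed while processing \emph{previous} arrivals --- is exactly the bookkeeping the paper leaves to the reader, and the final one-line bound on the supremum matches the paper's intended argument.
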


\section{Deferred Proofs} \label{appendix:deferred_proofs}

\begin{proof}[Proof of Theorem \ref{thm:known_ROM_edge_weights}]
In this setting, the order of online vertices $\pi$ is generated uniformly at random. As such, we denote the vertices of $V$ as $v_{1}, \ldots ,v_{n}$,  where $v_{t}$ corresponds to the vertex in position $1 \le t \le n$ of $\pi$ (and $n:=|V|$).

For each $u \in U$ and $v \in V$, we once again make use of the edge variables $(\til{x}_{u,v})_{u \in U, v \in V}$
associated to the solution $(x_{v}(\bm{u}))_{v \in V, \bm{u} \in U^{(\ell_{v})}}$.

Let us now fix a particular vertex $v \in V$, and a vertex $u \in U$. We say that $u$ is \textit{free} for $v$, provided $u$ is unmatched when $v$ is processed by Algorithm \ref{alg:known_stochastic_graph}.

Observe then that if $C(u,v)$ corresponds to the event in which $v$ commits to $u$ during one of its $\ell_v$ probes, then
\begin{align*}
    \mb{P}[ \scr{M}(v)=u ] &= \mb{P}[ C(u,v) \; \text{and $u$ is free for $v$}] \\
    					   &=\mb{P}[ C(u,v)] \cdot \mb{P}[\text{$u$ is free for $v$}] \\
    					   &= p_{u,v} \, \til{x}_{u,v} \, \mb{P}[\text{$u$ is free for $v$}],
\end{align*}

where the final line follows from Lemma \ref{lem:fixed_vertex_probe}.

We know however that,
\begin{align} \label{eqn:conditional_probability_of_freeness}
    \mb{P}[ \text{$u$ is free for $v$}] &= \sum_{t=1}^{n} \mb{P}[ \text{$u$ is free for $v$} \, | \, v_{t}=v] \cdot \mb{P}[ v_{t} = v] \\
    &= \sum_{t=1}^{n} \frac{\mb{P}[ \text{$u$ is free for $v$} \, | \,  v_{t}=v]}{n},
\end{align}

where the last equality follows since $\pi$ is generated uniformly at random.

As such, we may lower bound $\mb{P}[ \text{$u$ is free for $v$} \, |  \, v_{t}=v]$ for each $t =1, \ldots ,n$ in order to derive a lower bound on the competitive ratio of the algorithm.

Let us now fix $1 \le t \le n$ and condition on the event in which $v_{t}=v$. Observe then that
\[
    \mb{P}[\text{$u$ is not free for $v_{t}$} \, | \, v_{t} = v] = \mb{P}[ \cup_{k=1}^{t-1} \scr{M}(v_{k}) = u \, | \, v_{t} = v] \le \sum_{k=1}^{t-1} \mb{P}[ \scr{M}(v_{k}) = u \, | \, v_{t} = v],
\]
as $u$ is not free for $v_t$, if and only if one of $v_{1}, \ldots ,v_{t-1}$ matches to $u$.

On the other hand, using Lemma \ref{lem:fixed_vertex_probe}, we know that for each $k=1, \ldots , t-1$ 

\begin{align*}
    \mb{P}[ \scr{M}(v_{k}) = u \, | \, v_{t} =v] & = \sum_{\substack{s \in V: \\ s \neq v}} \mb{P}[\scr{M}(s) =u\, | \, \{v_{t}=v\} \cap \{v_{k} = s\} ] \cdot \mb{P}[ v_{k}=s \, | \, v_{t}=v]\\
    &\le \sum_{\substack{s \in V: \\ s \neq v}} \mb{P}[C(s,u) \, | \, \{v_{t}=v\} \cap \{v_{k} = s\} ] \cdot \mb{P}[ v_{k}=s \, | \, v_{t}=v] \\
    & =  \sum_{\substack{s \in V: \\ s \neq v}} \frac{ p_{u,s} \, \til{x}_{u,s}}{n-1},
\end{align*}
as once we condition on $\{v_{t}=v\}$, $v_{k}$ is uniformly distributed amongst $V \setminus \{v\}$.

As a result, 
\[
    \mb{P}[\text{$u$ is not free for $v_{t}$} \, | \, v_{t} = v] \le (t-1)\sum_{\substack{s \in V: \\ s \neq v}} \frac{ p_{u,s} \, \til{x}_{u,s}}{n-1} \le \frac{t-1}{n-1},
\]
by the constraints of \ref{LP:new}.

Thus, combined with \eqref{eqn:conditional_probability_of_freeness},
\begin{align*}
    \mb{P}[\text{$u$ is free for $v$}] &\ge \sum_{t=1}^{n} \frac{1}{n} \left(1 - \frac{t-1}{n-1} \right)  \\	
    								   &= 1 - \frac{\sum_{t=1}^{n}(t-1)}{n \, (n-1)}	\\
    								   &=1/2.
\end{align*}
To conclude, for each edge $(u,v) \in E$, we have that
\[
	\mb{P}[ \scr{M}(v)=u] = p_{u,v} \, \til{x}_{u,v} \, \mb{P}[\text{$u$ is free for $v$}] \ge \frac{ p_{u,v} \, \til{x}_{u,v}}{2}.
\]
On the other hand, if we denote $\val(\scr{M})$ as the value of the matching $\scr{M}$, then $\val(\scr{M})= \sum_{u \in U, v \in V} w_{u,v} \, \bm{1}_{[\scr{M}(v)=u]}$. Thus,
\[
	\mb{E}[ \val(\scr{M})] = \sum_{u \in U, v\in V} w_{u,v} \, \mb{P}[\scr{M}(v)=u] \ge \sum_{u \in U, v\in V} \frac{w_{u,v} \, \til{x}_{u,v} \, p_{u,v}}{2}.
\]
As $(x_{v}(\bm{u}))_{v \in V, \bm{u} \in U^{( \le \ell_v)}}$ is an optimum solution to \ref{LP:new}, this completes the proof.
\end{proof}

\begin{proof}[Proof of Lemma \ref{lem:LP_validity_known_iid}]

Suppose that $(G, \bm{r},n)$ is a known i.i.d. instance,
where $G=(U,V,E)$ is a type graph with maximum
patience $\ell:= \max_{v \in V} \ell_v$. We can then
define the following collection of random variables, 
denoted $(X_{t}(\bm{u}))_{t \in [n], \bm{u} \in U^{( \le \ell)}}$,
based on the following randomized procedure:

\begin{itemize}
\item Draw the instantiated graph $\hat{G} \sim (G, \bm{r},n)$,
whose vertex arrivals we denote by $v_{1}, \ldots , v_{n}$.
\item Compute an optimum solution of \ref{LP:new} for $\hat{G}$,
which we denote by $(x_{v_t}(\bm{u}))_{t \in [n], \bm{u} \in U^{(\le v_{t})}}$.
\item For each $t=1, \ldots ,n$ and $\bm{u} \in U^{( \le \ell)}$, 
set $X_{t}(\bm{u}) = x_{v_t}(\bm{u})$ if $\bm{u} \in U^{( \le \ell_{v_t})}$,
otherwise set $X_{t}(\bm{u}) = 0$.
\end{itemize}

Observe then that by definition, $(X_{t}( \bm{u}))_{t \in [n], \bm{u} \in U^{(\le \ell_{v_t})}}$
is a feasible solution to \ref{LP:new} for $\hat{G}$. As such,
for each $t=1, \ldots ,n$
\begin{equation}\label{eqn:online_distribution_iid}
	\sum_{\bm{u} \in U^{ ( \le \ell)}} X_{t}(\bm{u}) \le 1,
\end{equation}
and for each $u \in U$,
\begin{equation} \label{eqn:offline_matching_iid}
	\sum_{t \in [n], i \in [\ell]} \sum_{\substack{\bm{u}^{*} \in U^{(\le \ell )} :\\ u_{i}^{*} = u}} g_{v}^{i}(\bm{u}^{*}) \cdot X_{t}(\bm{u}^{*}) \le 1.
\end{equation}
Moreover, $(X_{t}( \bm{u}))_{t \in [n], \bm{u} \in U^{(\le \ell_{v_t})}}$ is
a optimum solution to \ref{LP:new} for $\hat{G}$, and so Theorem \ref{thm:new_LP_relaxation}
implies that
\begin{equation}\label{eqn:known_iid_benchmark_relaxtion}
	\OPT(\hat{G}) \le \LPOPT_{new}(\hat{G}) = \sum_{t=1}^{n} \sum_{\bm{u} \in U^{(\le \ell)}} \left( \sum_{i=1}^{|\bm{u}|} w_{u_i,v} \cdot g_{v}^{i}(\bm{u}) \right) \cdot X_{t}(\bm{u}). 
\end{equation}

In order to make use of these inequalities in the context of the type graph $G$,
let us first fix a type node $v \in V$ and a tuple $\bm{u} \in U^{( \le \ell)}$. We can then define 
\begin{equation}
	y_{v}(\bm{u}):= \sum_{t=1}^{n} \mb{E}[ X_{t}(\bm{u}) \cdot \bm{1}_{[v_{t} = v]}],
\end{equation}
where the randomness is over the generation of $\hat{G}$. Observe
that by definition of the $(X_{t}(\bm{u}))_{t \in [n], \bm{u} \in U^{(\le \ell)}}$ values, 
\[
	y_{v}(\bm{u}) =0,
\]
provided $|\bm{u}| > \ell_{v}$.

We claim that $(y_{v}(\bm{u}))_{v \in V, \bm{u} \in U^{( \le \ell_v)}}$ is a feasible
solution to \ref{LP:new_iid}. To see this, first observe that if we multiply \eqref{eqn:online_distribution_iid} by the indicator random
variable $\bm{1}_{[v_{t} = v]}$ while summing over $t \in [n]$, then we get that 
\[
	\sum_{\bm{u}^{*} \in U^{ ( \le \ell)}} \sum_{t=1}^{n} X_{t}(\bm{u}) \cdot \bm{1}_{[v_{t} = v]} \le \sum_{t=1}^{n} \bm{1}_{[v_{t} = v]}.
\]
As a result, if we take expectations over this inequality,
\begin{align*}
	\sum_{\bm{u} \in U^{ ( \le \ell)}} y_{v}(\bm{u}) &= \sum_{\bm{u} \in U^{ ( \le \ell)}} \mb{E}\left[ \sum_{t=1}^{n} X_{t}(\bm{u}) \cdot \bm{1}_{[v_{t} = v]}\right] \\ 
	&\le \sum_{t=1}^{n} \mb{P}[v_t =v] \\
	&= r_{v},
\end{align*}
for each $v \in V$.

Let us now fix $u \in U$. Observe that by rearranging the left-hand side of \eqref{eqn:offline_matching_iid},
\begin{equation} \label{eqn:rearrangment}
	\sum_{t \in [n]} \sum_{ i \in [\ell]} \sum_{\substack{\bm{u}^{*} \in U^{(\le \ell) } :\\ u_{i}^{*} = u}} g_{v}^{i}(\bm{u}^{*}) \cdot X_{t}(\bm{u}^{*}) = \sum_{i \in [\ell]} \sum_{\substack{\bm{u}^{*} \in U^{(\le \ell)}: \\ u_{i}^{*} = u}} g_{v}^{i}(\bm{u}) \cdot\sum_{v \in V} \sum_{t \in [n]} X_{t}(\bm{u}^*) \cdot \bm{1}_{[v_{t} = v]}.
\end{equation}
Thus, after taking expectation over \eqref{eqn:offline_matching_iid},
\[
	 \sum_{v \in V} \sum_{i \in [\ell]} \sum_{\substack{\bm{u}^{*} \in U^{(\le \ell_{v})}: \\ u_{i}^{*} = u}} y_{v}(\bm{u}^{*}) \le 1,
\]
for each $u \in U$.

Since $(y_{v}(\bm{u}))_{v \in V, \bm{u} \in U^{(\le \ell_v)}}$ satisfies these inequalities,
and the variables are clearly all non-negative, 
it follows that $(y_{v}(\bm{u}))_{v \in V, \bm{u} \in U^{(\le \ell_v)}}$ is a feasible solution to \ref{LP:new_iid}.

In order to complete the proof, let us rearrange the right-hand side of \eqref{eqn:known_iid_benchmark_relaxtion}
as in \eqref{eqn:rearrangment} and take expectations. We then get that
\begin{align*}
	\mb{E}[ \OPT(\hat{G})] &\le \sum_{v \in V} \sum_{\bm{u} \in U^{(\le \ell)}} \left( \sum_{i=1}^{|\bm{u}|} w_{u_i,v} \cdot g_{v}^{i}(\bm{u}) \right) \cdot y_{v}(\bm{u}).
\end{align*}
Now, $\OPT(G, \bm{r},n) = \mb{E}[ \OPT(\hat{G})]$ by definition, so since $(y_{v}(\bm{u}))_{v \in V, \bm{u} \in U^{(\le \ell_v)}}$
is feasible, it holds that
\[
	\OPT(G,\bm{r},n) \le \LPOPT_{new-iid}(G,\bm{r},n),
\]
thus completing the proof.

\end{proof}

\newpage

\section{Extended Related Works} \label{appendix:extended_related_works}
\label{app:related-work}
Our results pertain to the online stochastic matching problem which (loosely speaking) is online bipartite matching where edges are associated with their probabilities of existence. There is a substantial body of research pertaining to the ``classical'' (i.e. non stochastic) online bipartite model in the fully adversarial online model, the random order model, and the i.i.d. input model. The ever growing interest in various online bipartite matching problems is a reflection of the importance of online advertising but there are many other natural applications. The literature concerning competitive analysis\footnote{Initially, competitive analysis refered to the relative performance (i.e., the competitive ratio) of an online algorithm as compared to an optimal solution (in the worst case over all input sequences determined adversarially). We extend the meaning of the competitive ratio to also refer to input sequences generated in the ROM model as well as sequences generated i.i.d. from a known or unknown distribution; that is, whenever the algorithm has no control over the order of input arrivals.}  of online bipartite matching is too extensive to do justice to many important papers. We refer the reader to the excellent 2013 survey by Mehta \cite{Mehta13} with emphasis on online variants relating to ad-allocation. Given the continuing interest in ad-allocation, the survey is not current but does describe the basic results.

The seminal result for unweighted  online bipartite matching is due to   Karp, Vazirani, and Vazirani \cite{KarpVV90}. They gave the randomized Ranking algorithm that achieves competitive ratio $1 - 1/e$ in the adversarial online setting which they show is the best possible ratio for any randomized algorithm. There have been many proofs of this seminal result, such as the primal-dual approach due to Devanur et al. \cite{DJK2013}. Any greedy algorithm (i.e., one that  always makes a match when possible) has a $0.5$ ratio, and this is the best possible a deterministic algorithm can attain. The Ranking algorithm can also be viewed as a deterministic algorithm in the ROM input model. In the ROM model, Madhian and Yan \cite{Mahdian2011} show that the randomized Ranking algorithm achieves competitive ratio $.696$. For the case of weighted offline vertices and adversarial input sequences, Aggarwal et al. \cite{AggarwalGKM11} were able to achieve a randomized $1-1/e$ competitive ratio by their Perturbed Ranking algorithm.  Huang et al. \cite{huang2018online} show that the Perturbed Ranking algorithm obtains a $.6534$ competitive ratio in the ROM input model. 

Feldman et al. \cite{FeldmanMMM09} introduced online bipartite matching in the i.i.d. model in which each online vertex is independently and identically generated from some known distribution. In this model, they were able to beat the  $1-1/e$ inapproximation for bipartite matching that applies to the fully adversarial online  model.  The i.i.d. online bipartite model has been studied for the unweighted and edge weighted models. The most recent competitive ratios  for integral arrival rates are due to Brubach et al. \cite{BrubachSSX16} in which they derive a $.7299$ ratio for 
the (offline) vertex weighted case and a $.705$ ratio for edge weighted graphs. 
Karande et al. \cite{KarandeMT11} show that any competitive ratio for the ROM model applies to the unknown (and therefore known) i.i.d. models. It follows that any inapproximation for the known i.i.d. model applies  to the ROM model. Kesselheim et al. \cite{KRTV2013} extend the classical secretary result and established 
the optimal $1/e$ ROM ratio for bipartite matching with edge weights.

An early example of stochastic probing without commitment is the Pandora's box problem attributed to Weitzman \cite{Weitzman1979}. 
In Weitzman's  Pandora's box problem, a set of boxes is given, where each box contains a stochastic value from a known distribution and a cost for opening (i.e., probing) the box. The algorithm has the option at any time of accepting the value of any opened box and pays the total cost of all opened boxes. This is an offline probing problem in that
boxes can be opened in any order. An online version of the Pandora's box problem has recently been studied in Esfandiari et al. \cite{EsfandiariHLM19}.       
Stochastic probing with commitment has been studied for  various packing problems,  most notably for the knapsack problem, as studied in Dean et al.
\cite{DeanGV05,DeanGV08}. In the stochastic knapsack setting, the stochastic inputs are items whose values are known but whose sizes are stochastic and not known until the algorithm probes the item. As soon as the knapsack capacity is exceeded by a probed item, the algorithm terminates.  Dean et al. also introduced the offline issue of measuring the benefit of adaptively choosing probes versus
having a fixed order of probes.

Turning back to matching problems, 
Chen et al. \cite{Chen} introduced the stochastic matching problem assuming a known stochastic graph and algorithms that can probe any edge in any order. They obtained a $4$-approximation\footnote{Unfortunately, approximation and competitive bounds for maximization problems are sometimes represented both as ratios $> 1$ and as fractions $<1$. We shall report these ratios as stated in the relevant papers. Our results will be stated as fractions.} 
greedy algorithm in the unweighted case for arbitrary patience values. They conjectured that their greedy algorithm was a $2$-approximation. 
Subsequently, 
Adamczyk \cite{Adamczyk11} 
confirmed that the greedy algorithm is a $2$-approximation for the unweighted problem and that this approximation is tight. Bansal et al. \cite{BansalGLMNR12} established a $4$-approximation for the edge weighted case with arbitrary patience and a $3$-approximation for the special case of bipartite graphs. Adamczyk et al. 
\cite{Adamczyk15} improved the Bansal et al. bounds  providing 
  an approximation algorithm with a ratio of $2.845$ for bipartite graphs and an algorithm with a ratio of $3.709$ for general graphs. Baveja et al. \cite{BavejaBCNSX18} recently improved the analysis of the original algorithm of Bansal et al., yielding an approximation ratio of $3.224$ for general graphs.

Of particular importance to our paper is the known stochastic matching framework with ROM arrivals, as defined precisely
in Section \ref{sec:prelim}.  
Gamlath et al. \cite{Gamlath2019} presented 
a probing algorithm which is a $1 - \frac{1}{e}$-approximation for the bipartite case
in the \textit{full patience setting}; that is, when there are no patience restrictions for nodes on either side of the bipartition. The full patience setting is closely related to the bipartite matching algorithm studied by Ehsani et al. \cite{Ehsani2017}, which they prove is a $1-\frac{1}{e}$-approximation as a corollary of their work in the more general \textit{combinatorial auctions prophet secretary problem}.
While not explicitly stated in \cite{Ehsani2017}, their bipartite matching algorithm
can be interpreted as an adaptive probing algorithm in the known stochastic matching framework with ROM arrivals,
attaining the same $1-\frac{1}{e}$ non-adaptive approximation ratio as Gamlath et al..
Very recently, Tang et al. \cite{Tang2020} provided an alternative algorithm also attaining the same approximation ratio
of $1 - \frac{1}{e}$ in the more general \textit{oblivious bipartite matching} setting, however their algorithm
does not execute in an online fashion, and so is incomparable.
See also Tang et al. \cite{TangWZ2020} for an online  greedy algorithm achieving a .501 ratio for a known stochastic graph with edge weights.    
   
Mehta and Panigrahi \cite{MehtaP12} adapted the stochastic matching problem to the online setting problem with unit patience where the stochastic graph is not known to the algorithm. They specifically considered the unweighted case for unit patience (for the online nodes) and uniform edge probabilities (i.e,, for every edge $e$, $p_e  = p$  for some fixed probability $p$). They showed that every greedy algorithm has competitive ratio $\frac{1}{2}$. In the same online setting, they provided a greedy algorithm that achieves competitive ratio $\frac{1}{2}(1 + (1-p)^{2/p})$ which limits to $\frac{1}{2}(1+e^{-2}) \approx .567$ as $p \rightarrow 0$. They also show that against a ``standard linear programming (LP)'' benchmark, that the best possible ratio is $.621 < 1-\frac{1}{e}$.  However, this does not preclude a $1-\frac{1}{e}$ competitive ratio for a stricter LP bound on an optimal stochastic probing algorithm. Preceding the Mehta and Panigrahi work is a result in Bansal et al. \cite{BansalGLMNR12} where they consider a known stochastic (type) graph with a distribution on the online nodes. This can be called the stochastic matching problem with known i.i.d. inputs. Bansal et al. achieve a $7.92$ competitive ratio (or approximately, $.13$ as a fraction) in this stochastic i.i.d. model. This was improved to $.24$ by Adamczyk \cite{Adamczyk15} and most recently, by  Brubach et al. \cite{BrubachSSX20} where they obtain a .46 competitive ratio and a  $1-\frac{1}{e}$ inapproximation against a standard LP.

Returning to the unknown stochastic graph setting, there are recent independent papers by Goyal and Udwani \cite{Goyal2020} and Brubach et al. \cite{Brubach2019}. Goyal and Udwani consider the vertex weighted  unit patience problem and establish a (best possible) $1-\frac{1}{e}$ competitive ratio against an LP that acts as an upper bound on the committal benchmark
under the assumption that the edge probabilities are decomposable (i.e., $p_{u,v} = p_u \cdot p_v$) and a $.596$ competitive ratio for vanishingly small edge probabilities. Our paper is motivated by and most closely follows the Brubach et al. \cite{Brubach2019} paper. Brubach et al. use and motivate the ``ideal stochastic  benchmark'' (for arbitrary patience) and an LP relaxation for that ideal benchmark. They establish a best possible  deterministic $\frac{1}{2}$ competitive ratio against their LP for the vertex weighted online stochastic matching problem. In a recent paper, Huang and Zhang \cite{huang2020online} provide a randomized algorithm for unit patience and offline vertex weights in the online stochastic matching framework. In the limit as edge probabilities decrease, their algorithm achieves a $.572$ competitive ratio.

%

\end{document}